\newtheorem{theorem}{Theorem}[section]
\newtheorem{proposition}{Proposition}[section]
\newtheorem{lemma}{Lemma}[section]
\newtheorem{remark}{Remark}[section]
\newtheorem*{theorem*}{Theorem}
\newtheorem{definition}{Definition}[section]
\newtheorem{question}{Question}[section]
\title{Operator system characterizations of SIC-POVMs and mutually unbiased bases}
\author{Travis B. Russell}
\date{}
\begin{document}
\maketitle

\begin{abstract}
    We show that a symmetric informationally-complete positive operator-valued measure exists in a given dimension $d$ if and only if there exists a $d^2$-dimensional operator system satisfying certain order-theoretic conditions. We also describe a method of constructing such an operator system and demonstrate that the first step of this construction can be carried out successfully. We obtain analogous results for the existence of $d+1$ mutually unbiased bases in a given dimension.
\end{abstract}

\section{Introduction}

Let $d$ be a positive integer. A symmetric informationally-complete positive operator-valued measure (SIC-POVM) is a set of positive-semidefinite rank-one $d \times d$ complex matrices $F_1, F_2, \dots, F_{d^2}$ which span the $d \times d$ matrices, have constant trace, and whose Hilbert-Schmidt inner product is pairwise constant, i.e. there exist constants $C,D$, depending only on the dimension $d$, such that $\Tr(F_a)=D$ and $\Tr(F_a F_b) = C$ whenever $a \neq b$. Equivalently, the arguments of \cite{SICs2004} show that a SIC-POVM exists in dimension $d$ if and only if there exist unit vectors $\varphi_1, \varphi_2, \dots, \varphi_{d^2} \in \mathbb{C}^d$ which are \textit{equiangular}, meaning that there exists a constant $\lambda$, depending only on the dimension $d$, such that $|\langle \varphi_a, \varphi_b \rangle|^2 = \lambda$ whenever $a \neq b$. The equivalence between the two definitions can be obtained by viewing the operator $F_a$ as a scalar multiple of the rank-one projection onto the span of $\varphi_a$.

SIC-POVMs were first studied by Zauner in \cite{ZaunerThesis}. It is known that SIC-POVMs exist in every dimension up to $d=52$, and numerical evidence suggest that they likely exist in every dimension up to $d=200$, plus a few other larger dimensions. It is currently unknown if a SIC-POVM exists in every dimension $d$, or if there exists a SIC-POVM in infinitely many dimensions. For a summary of these results, see Section A of \cite{HorodeckiRZOpenProbsQIT}.

Suppose we are given equiangular vectors $\varphi_1, \varphi_2, \dots, \varphi_{d^2} \in \mathbb{C}^{d}$. For each $a=1,2,\dots,d^2$, let $P_a$ denote the projection onto the one-dimensional span of $\varphi_a$ in $\mathbb{C}^d$. Then the resulting set of projections $P_1, P_2, \dots, P_{d^2}$ satisfy the properties
\begin{equation} \label{eqn: Projection Conditions} I = \frac{1}{d} \sum_{a=1}^{d^2} P_a \quad \text{and} \quad P_a P_b P_a = \frac{1}{d+1} P_a  \end{equation}
where $I$ is the identity operator on the Hilbert space $\mathbb{C}^d$. Conversely, one could consider a situation in which we have projection operators $P_1, P_2, \dots, P_{d^2}$ satisfying the conditions (\ref{eqn: Projection Conditions}) but acting on an arbitrary (possibly infinite-dimensional) Hilbert space. In this paper, we will show that whenever this situation arises, a SIC-POVM exists, provided that an additional technical condition is satisfied. We also describe a method for constructing projections satisfying these conditions.

To explain the missing condition, we must introduce the tools used to prove the result. An \textit{operator system} is a unital self-adjoint vector subspace of $B(H)$, the set of bounded operators on a Hilbert space. Operator systems were abstractly characterized by Choi and Effros \cite{CHOIEffros1977}, who showed that there exists a correspondence between \textit{concrete} operator systems in $B(H)$ and certain \textit{matricially ordered} vector spaces which we call \textit{abstract operator systems}. A matricially ordered vector space consists of a complex vector space $V$, equipped with a conjugate linear involution $*: V \to V$, and a set of cones $C_n \subseteq M_n(V)$ which are compatible under the operation of conjugation by scalar matrices.

In recent work \cite{AR2020Published}, it has been shown that how the abstract data of an operator system may be used to identify projection operators. Using this result, the authors of \cite{AR2023Preprint} developed a method for constructing operator systems spanned by projections and satisfying linear relations, such as the first equation in (\ref{eqn: Projection Conditions}). In this paper, we show that the second condition in (\ref{eqn: Projection Conditions}) can also be encoded in the abstract structure of an operator system and that this condition can be incorporated into a method for constructing a set of projections satisfying both conditions.

The missing third condition needed for our main result is what we call \textit{$d$-minimality}. In broad terms, a $d$-minimal operator system $V$ is one whose entire matrix ordering is determined by the positive cone of $M_d(V)$. The abstract relations between the positive cones of $d$-minimal operator systems were recently characterized in \cite{ART2024Published}. They were also incorporated into the constructions developed in \cite{AR2023Preprint} for the purpose of studying the distinction between quantum and quantum-commuting correlations in the recently solved Tsirelson conjecture \cite{ji2020mip}. Our main result is the following:

\begin{theorem*}
    A SIC-POVM exists in dimension $d$ if and only if there exists a $d$-minimal operator system $V$ with unit $e$ spanned by projections $p_1, p_2, \dots, p_{d^2}$ satisfying
    \[ e = \frac{1}{d} \sum_{a=1}^{d^2} p_a \]
    and such that $p_a p_b p_a = \frac{1}{d+1} p_a$ for every $a \neq b$.
\end{theorem*}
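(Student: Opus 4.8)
The plan is to prove the two implications separately; the ``only if'' direction is essentially immediate, and the ``if'' direction carries the real content.

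For the forward direction, suppose a SIC-POVM exists in dimension $d$, and let $P_1,\dots,P_{d^2}$ be the rank-one projections onto an equiangular system $\varphi_1,\dots,\varphi_{d^2}\in\mathbb{C}^d$ as in the introduction. These span $M_d$, so I would take $V=M_d$, $e=I$, and $p_a=P_a$. Writing $P_a=\varphi_a\varphi_a^*$, one checks directly that $P_aP_bP_a=|\langle\varphi_a,\varphi_b\rangle|^2P_a=\tfrac{1}{d+1}P_a$ for $a\neq b$ (the equiangularity constant of a SIC being $\tfrac{1}{d+1}$), while $\tfrac1d\sum_aP_a=I$ is the usual resolution of the identity. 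Finally $M_d$ is $d$-minimal, since the identity map $M_d\to M_d$ already detects all matrix positivity in $M_d$, and the $P_a$ are genuine projections because $C^*_e(M_d)=M_d$. So $V=M_d$ is an operator system of the desired form.

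For the reverse direction, let $V$ be a $d$-minimal operator system with unit $e$, spanned by projections $p_1,\dots,p_{d^2}$ with $\tfrac1d\sum_ap_a=e$ and $p_ap_bp_a=\tfrac1{d+1}p_a$ for $a\neq b$, and set $A=C^*_e(V)$. By \cite{AR2020Published} the $p_a$ are honest projections in $A$ and the displayed identities hold in $A$; moreover none of the $p_a$ is $0$, since $p_a=0$ would force $p_b=p_bp_ap_b\cdot(d+1)=0$ for all $b\neq a$, hence $e=\tfrac1d\sum_ap_a=0$, a contradiction. Next I would use $d$-minimality to bound the representation theory of $A$. Since positivity in every $M_n(V)$ is witnessed by unital completely positive maps $V\to M_d$, the map $\Phi\colon V\to\prod_{\phi}M_d$ (product over all UCP $\phi\colon V\to M_d$) defined by $\Phi(v)=(\phi(v))_\phi$ is a unital complete order embedding, so by the universal property of the C*-envelope, $A$ is a quotient of the C*-subalgebra $B$ of $\prod_\phi M_d$ generated by $\Phi(V)$. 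As $\prod_\phi M_d$, hence $B$, hence $A$, satisfies the standard polynomial identity of $M_d$, every irreducible representation of $A$ has dimension at most $d$ (see also the structure theory of $\mathrm{OMIN}_d$ operator systems in \cite{ART2024Published}).

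Now fix a nonzero irreducible representation $\pi\colon A\to M_m$ with $m\le d$. As above, no $\pi(p_a)$ can vanish, so each is a nonzero projection of rank $r_a\ge1$. From $\tfrac1{d+1}\pi(p_a)=\pi(p_a)\pi(p_b)\pi(p_a)=(\pi(p_b)\pi(p_a))^*(\pi(p_b)\pi(p_a))$ we get $r_a=\operatorname{rank}\pi(p_a)=\operatorname{rank}(\pi(p_b)\pi(p_a))\le r_b$, and by symmetry $r_b\le r_a$, so all $r_a$ equal some $r\ge1$. Taking the trace in $\tfrac1d\sum_a\pi(p_a)=I_m$ gives $d^2r=dm$, so $m=dr\le d$ forces $r=1$ and $m=d$. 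Thus the $\pi(p_a)$ are $d^2$ rank-one projections in $M_d$; writing $\pi(p_a)=\varphi_a\varphi_a^*$ with $\varphi_a$ a unit vector, the relation $\pi(p_a)\pi(p_b)\pi(p_a)=\tfrac1{d+1}\pi(p_a)$ reads $|\langle\varphi_a,\varphi_b\rangle|^2=\tfrac1{d+1}$ for $a\neq b$, and the lines are distinct (equality $\pi(p_a)=\pi(p_b)$ would contradict that relation since $d+1>1$). Hence $\varphi_1,\dots,\varphi_{d^2}$ is an equiangular system in $\mathbb{C}^d$ — one checks the Gram matrix $[\,|\langle\varphi_a,\varphi_b\rangle|^2\,]=\tfrac{d}{d+1}I+\tfrac1{d+1}J$ is positive definite, confirming both that $\tfrac1{d+1}$ is the correct constant and that the $\pi(p_a)$ span $M_d$ — so by \cite{SICs2004} it yields a SIC-POVM in dimension $d$.

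The main obstacle, and the only place where $d$-minimality is genuinely used, is the step ``$d$-minimal $\Rightarrow$ every irreducible representation of $C^*_e(V)$ has dimension at most $d$.'' This is exactly what excludes the spurious higher- and infinite-dimensional representations that a system of projections satisfying (\ref{eqn: Projection Conditions}) might otherwise admit, and hence what upgrades the implication ``SIC exists $\Rightarrow$ such a $V$ exists'' to an equivalence. Everything downstream of that reduction is elementary linear algebra, and the only other external input is that ``projection in $V$'' is to be read in the sense of \cite{AR2020Published}, i.e. as a projection in $C^*_e(V)$, which is what makes the products $p_ap_bp_a$ meaningful.
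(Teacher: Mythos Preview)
Your proof is correct and follows essentially the same route as the paper's: the forward direction takes $V=M_d$, and the reverse direction passes to $C^*_e(V)$, uses $d$-minimality to bound irreducible representations by dimension $d$ (the paper invokes this directly from \cite{ART2024Published}, your polynomial-identity argument is an equally valid justification), and then uses a trace/rank count to force $m=d$ and $r=1$. The only notable difference is in how you show the projections have equal rank: the paper packages this as Lemma~\ref{lem: trace properties}, observing that $u=\lambda^{-1/2}p_ip_j$ is a partial isometry with $uu^*=p_i$ and $u^*u=p_j$ so that $\tau(p_i)=\tau(p_j)=1/d$ and hence $\operatorname{rank}\pi(p_i)=m/d$ must be a positive integer, whereas you argue directly via $\operatorname{rank}(\pi(p_b)\pi(p_a))$; these are essentially the same observation, and your additional checks (nonvanishing of the $p_a$, the Gram-matrix remark) are correct but not strictly needed.
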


Moreover, we describe how to construct such an operator system by an inductive process. The construction takes an initial operator system $V$ with its matrix ordering $\{C_n\}_{n=1}^{\infty}$, and inductively produces larger matrix orderings $\{C_n^{(k)}\}_{n=1}^{\infty}$ so that $C_n \subseteq C_n^{(1)} \subseteq C_n^{(2)} \subseteq \dots$. Letting $C^{\infty}_n$ denote the closure of $\cup_k C_n^{(k)}$, we find that the resulting operator system satisfies the conditions of the above theorem if and only if $C^{\infty}_1 \cap -C^{\infty}_1 = \{0\}$, i.e as long as the cones $C_n^{(k)}$ do not grow too large. We also give an explicit description for a family of initial matrix orderings $\{C_n\}$ parameterized by increasing sequences $\{t_n\}$ of real numbers. We show that a SIC-POVM exists if and only if the construction is successful for some sequence of real numbers $\{t_n\}$ and corresponding initial matrix ordering $\{C_n\}$.

The definition of a SIC-POVM is reminiscent of another notion studied in quantum information theory. A collection of orthonormal bases $\{\varphi_a^1\}_{a=1}^d, \{\varphi_a^2\}_{a=1}^d, \dots, \{\varphi_a^k\}_{a=1}^d$ for $\mathbb{C}^d$ are called \textit{mutually unbiased} if $|\langle \varphi_a^x, \varphi_b^y \rangle|^2 = \frac{1}{d}$ whenever $x \neq y$. In a given dimension $d$, it is known that there exist at most $d+1$ mutually unbiased bases \cite{WOOTTERS1989363}. Moreover, if $d = p^{\alpha}$ for some prime $p$ and positive integer $\alpha$, then it is known that there always exists $d+1$ mutually unbiased bases \cite{WOOTTERS1989363}. However, if $d$ has more than one prime factor, it is unknown whether or not there exist $d+1$ mutually unbiased bases \cite{HorodeckiRZOpenProbsQIT}. In particular, it unknown if there exist $7$ mutually unbiased bases in dimension $d=6$ \cite{RaynalLuEnglertMUBsSix2011}.

Suppose we have $d+1$ mutually unbiased bases $\{\varphi_a^1\}_{a=1}^d, \{\varphi_a^2\}_{a=1}^d, \dots, \{\varphi_a^{d+1}\}_{a=1}^d$ and let $P_a^x$ denote the rank-one projection onto the span of the vector $\varphi_a^x$. Then the resulting set of projections $\{P_a^x\}$ satisfy
\begin{equation} \label{eqn: MUB Conditions} I = \sum_{a=1}^d P_a^x \quad \text{and} \quad P_a^x P_b^y P_a^x = \frac{1}{d} P_a^x \end{equation}
for every $a,b = 1,2,\dots,d$ and $x \neq y$. Conversely, one could consider the situation in which one has projection operators acting on an arbitrary Hilbert space $H$ and satisfying the conditions (\ref{eqn: MUB Conditions}). Using similar techniques, we can show that these conditions, together with $d$-minimality, characterize the existence of $d+1$ mutually unbiased bases.

\begin{theorem*}
    There exist $d+1$ mutually unbiased bases in dimension $d$ if and only if there exists a $d$-minimal operator system $W$ with unit $e$ spanned by projections \[ \{p_a^x : a = 1,2,\dots,d \quad x=1,2,\dots,d+1 \} \]
    satisfying
    \[ e = \sum_{a=1}^d p_a^x \]
    for every $x=1,2,\dots,d+1$ and such that $p_a^x p_b^y p_a^x = \frac{1}{d} p_a^x$ whenever $x \neq y$.
\end{theorem*}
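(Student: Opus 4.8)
The plan is to follow the scheme used for the SIC--POVM theorem. The forward implication is a direct verification: the operator subsystem of $M_d$ generated by the MUB projections already satisfies all the required abstract conditions. The reverse implication is the substantive part; the idea is to use the structure theory of $d$-minimal operator systems to produce a \emph{single} $d$-dimensional $*$-representation in which the abstract projections become honest rank-one projections, whose ranges then furnish the bases. The main obstacle is precisely this passage to a representation: one must know that $d$-minimality forces every irreducible representation of the $C^*$-envelope of $W$ to have dimension at most $d$, and that the abstract projections together with the compression relations $p_a^x p_b^y p_a^x = \tfrac1d p_a^x$ survive intact in that $C^*$-envelope. Both facts are available from \cite{ART2024Published} and \cite{AR2020Published}; once they are in place, an elementary trace computation does the rest.

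For the forward direction, suppose $\{\varphi_a^x : 1 \le a \le d,\ 1 \le x \le d+1\}$ are $d+1$ mutually unbiased bases and put $P_a^x = \ketbra{\varphi_a^x}{\varphi_a^x} \in M_d$. Because $\sum_{a=1}^d P_a^x = I_d$ for each $x$, the linear span $W := \Span\{P_a^x\}$ is a unital self-adjoint subspace of $M_d$, i.e.\ an operator subsystem, and hence $d$-minimal by \cite{ART2024Published}. Each $P_a^x$ is a genuine projection of $M_d$, hence an abstract projection of $W$ in the sense of \cite{AR2020Published}, with compression map $v \mapsto P_a^x v P_a^x$. Taking $e = I_d$ we have $e = \sum_a P_a^x$, and for $x \ne y$ one computes $P_a^x P_b^y P_a^x = |\langle \varphi_a^x, \varphi_b^y \rangle|^2 P_a^x = \tfrac1d P_a^x$, which is precisely the remaining relation. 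Thus $W$ witnesses the right-hand side.

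For the reverse direction, let $W$ be a $d$-minimal operator system with unit $e$, abstract projections $\{p_a^x\}$, and the two stated relations, where $p_a^x p_b^y p_a^x$ is read through the compression maps of \cite{AR2020Published} (equivalently, as an identity in a $C^*$-cover of $W$ in which the $p_a^x$ are projections). As $W$ is spanned by finitely many elements, $\dim W < \infty$, so its $C^*$-envelope $C^*_e(W)$ is a finite-dimensional $C^*$-algebra; and since $W$ is $d$-minimal it embeds completely order isomorphically into a product of copies of $M_d$ \cite{ART2024Published}, so $C^*_e(W)$ is a quotient of a $C^*$-subalgebra of such a product and is therefore $\le d$-subhomogeneous (it satisfies the Amitsur--Levitzki identity $S_{2d}$). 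Hence $C^*_e(W) \cong \bigoplus_{i=1}^m M_{n_i}$ with $n_i \le d$ for all $i$. In $C^*_e(W)$ the $p_a^x$ are honest projections, $e = \sum_a p_a^x$ holds, and $p_a^x p_b^y p_a^x = \tfrac1d p_a^x$ holds as an algebra identity. Fix an index $i$, set $n = n_i$, and let $\rho : C^*_e(W) \to M_n$ be the canonical surjection onto the $i$-th summand; write $Q_a^x := \rho(p_a^x)$, a projection in $M_n$. Then $\sum_{a=1}^d Q_a^x = I_n$ for each $x$, and $Q_a^x Q_b^y Q_a^x = \tfrac1d Q_a^x$ for $x \ne y$.

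It remains to read off the bases. Projections summing to the identity are automatically pairwise orthogonal, so for each $x$ the $\{Q_a^x\}_{a=1}^d$ are mutually orthogonal projections in $M_n$ with sum $I_n$. Fixing $x \ne y$ and summing $Q_a^x Q_b^y Q_a^x = \tfrac1d Q_a^x$ over $a$ gives $\sum_a Q_a^x Q_b^y Q_a^x = \tfrac1d I_n$; taking the trace and using cyclicity together with $\sum_a Q_a^x = I_n$ yields $\operatorname{rank}(Q_b^y) = \Tr(Q_b^y) = n/d$. This is a positive integer, so $d \mid n$; combined with $n \le d$ we conclude $n = d$ and $\operatorname{rank}(Q_b^y) = 1$ for all $b,y$. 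Consequently, for each $x$ the $\{Q_a^x\}_{a=1}^d$ are pairwise orthogonal rank-one projections on $\mathbb{C}^d$ summing to $I_d$, so their ranges are the one-dimensional spans of an orthonormal basis $\{\varphi_a^x\}_{a=1}^d$; and for $x \ne y$, $Q_a^x Q_b^y Q_a^x = \tfrac1d Q_a^x$ forces $|\langle \varphi_a^x, \varphi_b^y \rangle|^2 = \tfrac1d$. Thus $\{\varphi_a^x\}$ are $d+1$ mutually unbiased bases, as required; the argument is the verbatim analogue of the one for the SIC--POVM theorem, with the relations $e = \tfrac1d\sum_a p_a$ and $p_a p_b p_a = \tfrac{1}{d+1} p_a$ playing the roles above.
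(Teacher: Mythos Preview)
Your argument follows the same scheme as the paper's (which explicitly omits the proof as ``similar'' to the SIC case), and the substantive steps are correct: pass to $C^*_e(W)$, pick an irreducible representation of dimension $n \le d$, and use the trace to force $n=d$ and all ranks equal to $1$. Your trace computation via summing $\sum_a Q_a^x Q_b^y Q_a^x = \tfrac1d I_n$ is a clean variant of the paper's Lemma~\ref{lem: trace properties}, and your Amitsur--Levitzki justification for $n \le d$ is a legitimate alternative to invoking Theorem~\ref{thm: C*-envelope is d-minimal} directly.

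There is, however, one incorrect assertion you should remove: you write ``$\dim W < \infty$, so its $C^*$-envelope $C^*_e(W)$ is a finite-dimensional $C^*$-algebra.'' This is false in general---finite-dimensional operator systems can have infinite-dimensional $C^*$-envelopes (e.g.\ the three-dimensional system $\Span\{1,z,\bar z\} \subset C(\mathbb{T})$ has $C^*$-envelope $C(\mathbb{T})$). Fortunately your proof does not actually use this: all you need is the existence of \emph{one} irreducible representation $\rho: C^*_e(W) \to M_n$ with $n \le d$, which your subhomogeneity argument already supplies (and $C^*_e(W)$ is nonzero, so irreducible representations exist). Simply delete the finite-dimensionality claim and the decomposition $C^*_e(W) \cong \bigoplus_i M_{n_i}$, and instead say: since $C^*_e(W)$ is $\le d$-subhomogeneous, choose any irreducible representation $\rho: C^*_e(W) \to M_n$; then $n \le d$ and $\rho$ is surjective. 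The rest of your argument then goes through unchanged.
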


\noindent We also explain how such an operator system may be constructed and outline the details of the initial step of this construction.

Our paper is organized as follows. In Section \ref{sec: Prelims}, we cover preliminary results concerning operator systems. Most of these results are found in the literature, although we also introduce some new definitions and theorems that will be used later in the paper. In Section \ref{sec: Characterizations}, we prove the two characterization theorems above. Finally, in Section \ref{sec: constructions}, we present methods for constructing operator systems satisfying the conditions outlined in the characterization theorems.

\section{Preliminaries} \label{sec: Prelims}

Throughout this paper, we will assume familiarity with Hilbert spaces and their operators as well a fundamentals of C*-algebras. For completeness, we present most results needed concerning operator systems and completely positive maps. Readers seeking more details on this topic are referred to \cite{paulsen2002completely}. We let $M_n$ and $M_{n,k}$ denote the $n \times n$ and $n \times k$ complex matrices, respectively. We let $M_n^+$ denote the positive semidefinite matrices. Given a vector space $S$, we let $M_n(S)$ denote the vector space $M_n \otimes S$, usually regarded as the vector space of matrices with $S$-valued entries.

\subsection{Operator systems}

The main results of this paper are phrased in the language of \textit{operator systems}. These are ordered vector spaces which can be defined concretely as certain subspaces of $B(H)$ for a Hilbert space $H$ or abstractly as a matricially-ordered vector space satisfying certain compatibility conditions. While the two definitions are in some sense equivalent (see Theorem \ref{thm: Choi-Effros} below), there are some important subtleties which will make it necessary to distinguish between the concrete and abstract notions.

A \textit{concrete operator system} is a unital self-adjoint subspace of $B(H)$ for some Hilbert space $H$. Given a concrete operator system $S \subseteq B(H)$ and a positive integer $n$, we also regard $M_n(S) \subseteq B(H^n)$ as a concrete operator system with unit $I_k$, where $M_n(S)$ acts on $H^n$ via matrix-vector multiplication, i.e.
\[ (t_{ij}) (h_k) = (\sum_i t_{ik} h_k)_k. \]
If we let $C_n \subseteq M_n(S)$ denote the positive operators, it is easily verified that $C_n \oplus C_m \subseteq C_{n+m}$ and $\alpha^* C_n \alpha \subseteq C_k$ for every $\alpha \in M_{n,k}$. Letting $M_n(S)_h$ denote the self-adjoint elements of $M_n(S)$, it is easy to see that whenever $x \in M_n(S)$ and $x=x^*$, we have $x + \|x\|I_n \in C_n$ and that $x \in C_n$ if and only if $x + \epsilon I_n \in C_n$ for every $\epsilon > 0$.

We now define an abstract operator system. Let $S$ be a complex vector space. An \textit{involution} on $S$ is a map $*: S \to S$, denoted by $x \mapsto x^*$, which conjugate-linear and involutive, i.e. $(x+\lambda y)^* = x^* + \overline{\lambda}y^*$ and $(x^*)^* = x$ for every $x,y \in S$ and $\lambda \in \mathbb{C}$. A complex vector space $S$ equipped with an involution $*$ is called a $*$-vector space. The involution $*$ extends to $M_n(S)$ by setting the $(i,j)$ entry of $(a_{ij})^*$ equal to $a_{ji}^*$ for every matrix $(a_{ij}) \in M_n(S)$. We let $M_n(S)_h$ denote the elements $x \in M_n(S)$ such that $x=x^*$. A \textit{matrix-ordering} on $S$ is a sequence of subsets $C_n \subseteq M_n(S)_h$ such that $C_n \oplus C_m \subseteq C_{n+m}$ and $\alpha^* C_n \alpha \subseteq C_k$ for every $\alpha \in M_{n,k}$. A matrix ordering is \textit{proper} if $C_n \cap -C_n = \{0\}$ for every $n$ (equivalently, if $C_1 \cap -C_1 = \{0\}$). An element $e \in S$ is called a \textit{matrix order unit} if for every positive integer $n$ and every $x \in M_n(S)_h$ there exists $t > 0$ such that $x + t I_n \otimes e \in M_n(S)$. A matrix order unit is called \textit{Archimedean} if whenever $x + \epsilon I_n \otimes e \in C_n$ for every $\epsilon > 0$ it follows that $x \in C_n$. An \textit{abstract operator system} is a triple $(S,\{C_n\}_{n=1}^\infty, e)$ consisting of a $*$-vector space $S$, a proper matrix ordering $\{C_n\}_{n=1}^\infty$ for $S$, and an Archimedean matrix order unit $e \in S$. Each $C_n$ is a cone in $M_n(V)_h$ and its elements are called \textit{positive}. The cone $C_n$ induces a partial order on $M_n(V)_h$ given by $x \leq y$ whenever $y-x \in C_n$.

\begin{remark} \label{rmk: Archimedean closure}
    \emph{Suppose $S$ is a $*$-vector space with order unit $e \in S_h$ and that $\{D_n\}$ is a matrix ordering. If $e$ is not Archimedean, we may always replace $\{D_n\}$ by its \textit{Archimedean closure} defined as
    \[ C_n = \{x \in M_n(S)_h: x + \epsilon I_n \otimes e \in D_n \text{ for all } \epsilon > 0 \}. \]
    The resulting matrix ordering may fail to be proper (e.g. see \cite{PaulsenTodorovTomforde2011OSS}). However, if $\{C_n\}$ is proper, then $e$ will be an Archimdean matrix order unit and hence $(S, \{C_n\}, e)$ will be an operator system.}
\end{remark}

It is clear that every concrete operator system is an example of an abstract operator system, taking the involution to be the operator adjoint, the matrix ordering to be the positive operators, and the order unit to be the identity operator. In \cite{CHOIEffros1977}, Choi and Effros show that every abstract operator system may be identified with a concrete operator system. To explain what we mean by identifying operator systems, we introduce some more terminology. Given a linear map $\varphi: V \to W$ between vector spaces $V$ and $W$, we let $\varphi^{(n)}: M_n(V) \to M_n(W)$ denote the $n\textsuperscript{th}$ inflation map which applies $\varphi$ to each entry of a matrix. If $V$ and $W$ are operator systems, then we say $\varphi$ is $n$-positive if $\varphi^{(n)}$ maps positive elements to positive elements. We say $\varphi$ is \textit{completely positive} if it is $n$-positive for every $n$. If $\varphi$ is completely positive and injective, then it is a \textit{complete order embedding} if $\varphi^{-1}$ is completely positive on $\varphi(V)$, and a \textit{complete order isomorphism} if it is a surjective complete order embedding. With this language in hand, we may state the Choi-Effros Theorem.

\begin{theorem}[Choi-Effros] \label{thm: Choi-Effros}
    Let $(S,\{C_n\}_{n=1}^\infty, e)$ be an abstract operator system. Then there exists a Hilbert space $H$ and a unital complete order embedding $\pi: S \to B(H)$.
\end{theorem}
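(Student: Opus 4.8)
The plan is to prove the Choi--Effros theorem by the standard GNS-type construction: realize $S$ concretely via its completely positive states. First I would set up the duality theory. For each $n$, the cone $C_n \subseteq M_n(S)_h$ is spanning (since $e$ is a matrix order unit) and proper, and the Archimedean condition says each $C_n$ is closed in an appropriate norm topology on $M_n(S)_h$ (one first checks that $\|x\| = \inf\{t > 0 : tI_n \otimes e \pm x \in C_n\}$ defines a norm on $M_n(S)_h$, extended to a norm on $M_n(S)$ in the usual way, e.g. via $\|x\| = \inf\{\|a\|\|b\| : x = a^* y b,\ y \in M_k(S)_h,\ \|y\| \le 1\}$ or by viewing $x$ inside $M_2(M_n(S))$). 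The key separation lemma is that for $x \in M_n(S)_h \setminus C_n$ there is a ``completely positive functional'' $f \colon M_n(S) \to \mathbb{C}$ — i.e. a linear functional positive on every $C_m$ after the natural identification $M_m(M_n(S)) \cong M_{mn}(S)$ — with $f(e_n) = 1$ (where $e_n = I_n \otimes e$) and $f(x) < 0$; this follows from a Hahn--Banach / bipolar argument using that $C_n$ is a closed proper spanning cone and $e_n$ is an interior point (which is exactly where the Archimedean and matrix-order-unit hypotheses are consumed).

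Next I would pass from a completely positive functional $f$ on $S$ to a Hilbert space representation. Given such an $f$ with $f(e) = 1$, define a sesquilinear form on $M_{1,\infty}$-indexed tuples, or more concretely: the map $(x,y) \mapsto f(y^* x)$ — interpreted via the $2\times 2$ ampliation so that $y^*x$ makes sense as the relevant entry — is positive semidefinite because $f$ is completely positive, so after quotienting by the null space $N = \{x : f(x^*x) = 0\}$ (using a Cauchy--Schwarz argument to see $N$ is a subspace and is ``absorbing'' enough) and completing, we get a Hilbert space $H_f$ and a unital map $\pi_f \colon S \to B(H_f)$, $\pi_f(a)(x + N) = ax + N$, which is well defined and completely positive; one checks $\pi_f(e) = I$. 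The point of this is that $\langle \pi_f(a)\xi, \xi\rangle$ recovers values of the form $f(\cdot)$.

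Then I would take the direct sum over enough states to get injectivity and the order embedding property. Let $H = \bigoplus_f H_f$, the sum running over all completely positive functionals $f$ with $f(e) = 1$, and let $\pi = \bigoplus_f \pi_f$; this is unital and completely positive, hence bounded, so it lands in $B(H)$. To see $\pi$ is a complete order embedding it suffices to show: if $x \in M_n(S)_h$ and $\pi^{(n)}(x) \ge 0$ in $M_n(B(H)) = B(H^n)$, then $x \in C_n$. If not, by the separation lemma there is a completely positive functional $g$ on $M_n(S)$ with $g(e_n) = 1$ and $g(x) < 0$; but such a $g$ corresponds (after possibly renormalizing and using the matrix-order-unit property) to a vector functional of the form $\eta \mapsto \langle \pi^{(n)}(\eta)\zeta, \zeta\rangle$ for a suitable $\zeta$ built from the GNS construction applied to $g$ and its compressions — contradicting $\pi^{(n)}(x) \ge 0$. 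Injectivity of $\pi$ is the $n=1$ case together with properness of $\{C_n\}$ (if $\pi(x) = 0$ write $x = x_1 + i x_2$ with $x_j$ self-adjoint, and $\pm x_j \in C_1$ forces $x_j = 0$).

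The main obstacle is the separation lemma and, relatedly, the bookkeeping needed to identify a completely positive functional on $M_n(S)$ with a genuine vector state on the GNS Hilbert space of a completely positive functional on $S$ — i.e. making precise that the compressions of $\pi$ see \emph{all} of the completely positive functionals at every matrix level. This requires the observation that a completely positive $f \colon M_n(S) \to \mathbb{C}$ is determined by, and can be reassembled from, its ``matrix entries'' $f_{ij} \colon S \to \mathbb{C}$, $f_{ij}(a) = f(E_{ij} \otimes a)$, together with the Archimedean-norm estimates guaranteeing the GNS form is bounded so that the completion step is legitimate. Everything else — checking cone axioms pass to quotients, well-definedness of $\pi_f$, boundedness of unital completely positive maps — is routine once the norm on $M_n(S)$ is in place.
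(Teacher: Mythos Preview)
The paper does not supply its own proof of this theorem; it is simply stated and attributed to Choi and Effros \cite{CHOIEffros1977}, with Paulsen's book \cite{paulsen2002completely} given as a general reference. So there is no ``paper's proof'' to compare against, only the original literature.

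That said, your outline has a genuine gap in the second step. You attempt a GNS construction directly on $S$: form a sesquilinear pairing $(x,y)\mapsto f(y^*x)$, quotient, complete, and let $a\in S$ act by $\pi_f(a)(x+N)=ax+N$. But an abstract operator system has no multiplication, so neither the product $y^*x$ nor the left action $ax$ is defined. Your parenthetical ``interpreted via the $2\times 2$ ampliation so that $y^*x$ makes sense as the relevant entry'' does not repair this: placing $x$ and $y^*$ as entries of a $2\times 2$ matrix over $S$ does not manufacture an element of $S$ that could play the role of $y^*x$, and in any case the module action $a\cdot x$ needed for $\pi_f$ simply does not exist in this category. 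The GNS machinery is a C*-algebra (or operator-algebra) tool and does not transfer to operator systems verbatim.

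The standard route --- which you in fact gesture toward in your final paragraph but treat as a bookkeeping afterthought --- is to bypass GNS entirely. A unital positive functional $f$ on $M_n(S)$ corresponds, via $\varphi(a)_{ij}=f(E_{ij}\otimes a)$, to a unital completely positive map $\varphi\colon S\to M_n$; conversely every such $\varphi$ arises this way. The separation lemma you describe then shows that the collection of all ucp maps $\varphi\colon S\to M_n$ (over all $n$) detects the matrix cones, and the direct sum $\pi=\bigoplus_\varphi \varphi$ into $\bigoplus_\varphi M_{n_\varphi}\subseteq B\bigl(\bigoplus_\varphi \mathbb{C}^{n_\varphi}\bigr)$ is the desired unital complete order embedding. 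No Hilbert-space construction from $S$ itself is needed; the matrix algebras $M_n$ supply the Hilbert spaces for free. Your separation argument and your final-paragraph observation about reassembling $f$ from its matrix entries are exactly the right ingredients --- they just need to be the \emph{main} mechanism, not a side remark supporting a GNS step that cannot be carried out.
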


The Choi-Effros Theorem implies that abstract and concrete operator systems are ``the same'' up to complete order isomorphism. However, some properties of operators which will be necessary for our results are not preserved by complete order isomorphism. This is because two completely order isomorphic concrete operator systems can generate non-isomorphic C*-algebras. 

A \textit{C*-cover} for an abstract operator system $S$ is a C*-algebra $\mathcal{A}$ together with a unital complete order embedding $i: S \to \mathcal{A}$ such that $\mathcal{A} = C^*(i(S))$, i.e. $\mathcal{A}$ is generated by $S$ as a C*-algebra. In general, an abstract operator system has many non-isomorphic C*-covers. However, there is a unique ``smallest'' C*-cover called the \textit{C*-envelope}, denoted $C^*_e(S)$. The C*-envelope satisfies a universal property. Let $i: S \to C^*_e(S)$ denote the unital complete order embedding of $S$ into $C^*_e(S)$, and let $\mathcal{A}$ be another C*-cover with embedding $j: S \to \mathcal{A}$. Then there exists a unique $*$-homomorphism $\pi: \mathcal{A} \to C^*_e(S)$ satisfying $\pi(j(x))=i(x)$ for every $x \in S$. The existence and uniqueness of the C*-envelope was proven by Hamana in \cite{hamana1979injective}.

\subsection{Abstract projections}

For our results, we will be interested in the properties of projection operators as elements of operator systems. An operator $P \in B(H)$ is a \textit{projection} if $P=P^2=P^*$. Suppose that $S$ is an operator system. We say that an element $p \in S$ is an \textit{abstract projection} if there exist a Hilbert space $H$ and a unital complete order embedding $\pi: S \to B(H)$ such that $\pi(p)$ is a projection. Equivalently, there exists a C*-cover $(\mathcal{A},j)$ for $S$ such that $j(p)$ is a projection in $\mathcal{A}$. Since C*-covers for operator systems are not unique, it is possible \textit{a priori} that $p$ is not always represented as a projection in every C*-cover. In fact, if $p$ is not equal to $0$ or the identity, then there always exists a Hilbert space $K$ and a unital complete order embedding $\rho: S \to B(K)$ such that $\rho(p)$ is not a projection (see Example 6.2 of \cite{ART2024Published}). However, every abstract projection is represented as a projection in the C*-envelope. This is because if $j: S \to \mathcal{A}$ is a C*-cover with $j(p)$ a projection, then by the universal property of the C*-envelope there exists a $*$-homomorphism $\pi: \mathcal{A} \to C^*_e(S)$ such that $\pi(j(x))=i(x)$ for every $x \in S$. Hence $i(p) = \pi(j(p)) = \pi(j(p)^2) = \pi(j(p))^2 = i(p)^2$ and similarly $i(p)=i(p)^*$, so that $i(p)$ is a projection in $C^*_e(S)$.

The question of how to characterize abstract projections in operator systems was studied extensively in \cite{AR2020Published}. To motivate those results, we recall a lemma characterizing when the compression $PTP$ of an operator $T$ by a projection $P$ is positive using the order structure alone.

\begin{lemma}[See Lemma 4.5 of \cite{AR2020Published}] \label{lem: fundamental projection}
    Let $P \in B(H)$ be a projection and suppose that $T=T^* \in B(H)$. Then $PTP \geq 0$ if and only if for every $\epsilon > 0$ there exists $t > 0$ such that \[ T + \epsilon P + t P^{\perp} \geq 0 \]
    where $P^{\perp} := I - P$.
\end{lemma}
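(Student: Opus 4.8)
The plan is to prove the two implications separately by decomposing $T$ into a $2\times 2$ block operator with respect to the orthogonal decomposition $H = PH \oplus P^{\perp}H$. The cases $P=0$ and $P=I$ are trivial, so I would assume $P \neq 0, I$.

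For the direction ($\Leftarrow$), I would simply compress the hypothesized inequality by $P$. If $T + \epsilon P + tP^{\perp} \geq 0$ for some $t>0$, then conjugating by $P$ and using $PP^{\perp}=0$ gives $PTP + \epsilon P \geq 0$. Since $PTP$ annihilates $P^{\perp}H$, testing against an arbitrary vector $h = h_1 + h_2$ with $h_1 \in PH$ and $h_2 \in P^{\perp}H$ yields $\langle PTPh_1,h_1\rangle + \epsilon\|h_1\|^2 \geq 0$. Letting $\epsilon \to 0$ and using that the cone of positive operators is closed gives $\langle PTPh_1,h_1\rangle \geq 0$ for all $h_1 \in PH$, hence $PTP \geq 0$.

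For the forward direction ($\Rightarrow$), write $T = \begin{pmatrix} A & B \\ B^* & D \end{pmatrix}$ with respect to $H = PH \oplus P^{\perp}H$, where $A = PTP|_{PH} \geq 0$ by hypothesis, $B = PTP^{\perp}$ regarded as an operator $P^{\perp}H \to PH$, and $D = P^{\perp}TP^{\perp}|_{P^{\perp}H}$ is bounded and self-adjoint. Then $T + \epsilon P + tP^{\perp} = \begin{pmatrix} A + \epsilon I & B \\ B^* & D + tI \end{pmatrix}$. Since $A + \epsilon I \geq \epsilon I$ is boundedly invertible, setting $L = \begin{pmatrix} I & 0 \\ B^*(A+\epsilon I)^{-1} & I \end{pmatrix}$ and $S = D + tI - B^*(A+\epsilon I)^{-1}B$ gives the congruence factorization
\[ \begin{pmatrix} A + \epsilon I & B \\ B^* & D + tI \end{pmatrix} = L \begin{pmatrix} A + \epsilon I & 0 \\ 0 & S \end{pmatrix} L^* , \]
with $L$ invertible. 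Hence the block operator is positive if and only if the Schur complement $S$ is positive. Since $D$ and $B^*(A+\epsilon I)^{-1}B$ are bounded self-adjoint operators on $P^{\perp}H$, choosing any $t > \|D\| + \|B^*(A+\epsilon I)^{-1}B\|$ forces $S \geq 0$, which establishes the desired inequality for the given $\epsilon$.

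I do not anticipate a serious obstacle; the only point requiring care is the validity of the operator Schur-complement criterion on a possibly infinite-dimensional Hilbert space, together with the observation that it is $A + \epsilon I$, rather than $A$ itself (which need not be invertible), that gets inverted. This is exactly why the parameter $\epsilon$ is present in the statement and why $t$ is allowed to depend on $\epsilon$.
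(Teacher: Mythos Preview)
Your proof is correct. The paper itself does not prove this lemma; it merely cites Lemma~4.5 of \cite{AR2020Published}, so there is no in-paper argument to compare against. Your block-decomposition and Schur-complement approach is the standard one and is essentially what appears in the cited reference: compress by $P$ for the backward direction, and for the forward direction use that $A+\epsilon I$ is boundedly invertible so that positivity reduces to positivity of the Schur complement, which can be arranged by taking $t$ large.
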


Suppose that $S$ is an operator system and $p \in S$ is an abstract projection. Let $\pi: S \to B(H)$ be a unital complete order embedding such that $P = \pi(p)$ is a projection. Suppose that $x=x^* \in M_n(S)$, and let $X = \pi^{(n)}(x)$. Then $X=X^*$. Since $P_n := I_n \otimes P$ is a projection in $B(H^n)$, we may write $H^n = P_n H^n \oplus P_n^{\perp} H^n$. With respect to this decomposition of $H^n$, we may regard $X$ as a $2 \times 2$ operator matrix. From this perspective, it is easy to see that $X$ is positive if and only if the compression of
\[ \begin{pmatrix} X & X \\ X & X \end{pmatrix} \in B(H^{2n}) \]
by the projection $P_n \oplus P_n^{\perp}$ is positive. By Lemma \ref{lem: fundamental projection} and the fact that $\pi^{-1}$ is a complete order isomorphism on the range of $\pi$, we see that $x$ is positive if and only if for every $\epsilon > 0$ there exists $t > 0$ such that
\[ \begin{pmatrix} x & x \\ x & x \end{pmatrix} + \epsilon P_n \oplus P_n^{\perp} + t P_n^{\perp} \oplus P_n \in C_{2n} \]
where $C_{2n}$ denotes the positive cone in $M_{2n}(S)$. It turns out that this property characterizes abstract projections in operator systems.

In the following theorem, we say that $p x p \geq 0$ \textbf{abstractly} if for every $\epsilon > 0$ there exist $t > 0$ such that $x + \epsilon p + t p^{\perp} \geq 0$, where $p^{\perp} := e - p$ and $e$ is the unit of the operator system.

\begin{theorem}[See Theorem 5.10 of \cite{AR2020Published} and Lemma 3.6 of \cite{ARTaPublished2022}] \label{thm: projection characterizations}
    Let $(S,\{C_n\},e)$ be an abstract operator system and suppose $p \in  S$ satisfies $0 \leq p \leq e$. Set $p^{\perp} = e - p$. Let $C_n(p)$ denote the set of elements $x \in M_n(S)_h$ with the property that 
    \[ \left[ (p \oplus p^{\perp})  \otimes I_n \right] \begin{pmatrix} x & x \\ x & x \end{pmatrix} \left[ (p \oplus p^{\perp}) \otimes I_n \right] \geq 0  \]
    abstractly. If $\{C_n(p)\}$ is a proper matrix ordering, then $p$ is abstract projection in the operator system $(S,\{C_n(p)\},e)$. Moreover, $p$ is an abstract projection in $S$ if and only if for every $n \in \mathbb{N}$, $C_n = C_n(p)$.
\end{theorem}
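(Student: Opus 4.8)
The plan is to break the statement into three parts and assemble them. Throughout, write $D(x)=\begin{pmatrix}x & x\\ x& x\end{pmatrix}$ for the doubling of $x$, and for $n\in\mathbb N$ put $q_n=(p\otimes I_n)\oplus(p^{\perp}\otimes I_n)$ and $q_n^{\perp}=(I_{2n}\otimes e)-q_n=(p^{\perp}\otimes I_n)\oplus(p\otimes I_n)$ in $M_{2n}(S)$, so that $x\in C_n(p)$ means exactly that for every $\epsilon>0$ there is $t>0$ with $D(x)+\epsilon q_n+t q_n^{\perp}\in C_{2n}$. The three parts are: (a) $C_n\subseteq C_n(p)$ for every $n$, $\{C_n(p)\}$ is a matrix ordering, and $e$ remains an Archimedean matrix order unit for $\{C_n(p)\}$, so that $(S,\{C_n(p)\},e)$ is an operator system as soon as $\{C_n(p)\}$ is proper; (b) if $p$ is an abstract projection in $S$, then $C_n=C_n(p)$ for every $n$; (c) if $\{C_n(p)\}$ is proper, then $p$ is an abstract projection in $(S,\{C_n(p)\},e)$. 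Granting these, the first assertion of the theorem is (c); the forward direction of the ``moreover'' is (b); and for its converse, if $C_n=C_n(p)$ for all $n$ then $\{C_n(p)\}=\{C_n\}$ is a proper matrix ordering, so (c) yields a C*-cover of $(S,\{C_n(p)\},e)=S$ in which $p$ is represented by a projection.

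For (a) I would argue as follows. If $x\in C_n$ then $D(x)=\begin{pmatrix}I_n\\ I_n\end{pmatrix}x\begin{pmatrix}I_n & I_n\end{pmatrix}\in C_{2n}$, so every compression of $D(x)$ lies in $C_{2n}$ and hence $x\in C_n(p)$; each $C_n(p)$ is visibly a cone in $M_n(S)_h$, and $e$ is an order unit for $\{C_n(p)\}$ because it is one for $\{C_n\}$. For $C_n(p)\oplus C_m(p)\subseteq C_{n+m}(p)$, conjugate the defining inequalities for $x$ and $y$ by the coordinate permutation that identifies $D(x\oplus y)$, $q_{n+m}$, $q_{n+m}^{\perp}$ with $D(x)\oplus D(y)$, $q_n\oplus q_m$, $q_n^{\perp}\oplus q_m^{\perp}$, and use $C_{2n}\oplus C_{2m}\subseteq C_{2(n+m)}$. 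For $\alpha^{*}C_n(p)\alpha\subseteq C_k(p)$ with $\alpha\in M_{n,k}$, conjugate $D(x)+\epsilon q_n+tq_n^{\perp}\in C_{2n}$ by $\alpha\oplus\alpha$: this sends $D(x)$ to $D(\alpha^{*}x\alpha)$ and $q_n$ to $(p\otimes\alpha^{*}\alpha)\oplus(p^{\perp}\otimes\alpha^{*}\alpha)$, and since $0\le p\otimes\alpha^{*}\alpha\le\|\alpha\|^{2}(p\otimes I_k)$ in $M_k(S)$ one recovers membership in $C_k(p)$ after rescaling (replacing $\epsilon$ by $\epsilon/\|\alpha\|^{2}$ up front and $t$ by $\|\alpha\|^{2}t$). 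Finally, $e$ is Archimedean for $\{C_n(p)\}$: if $x+\epsilon(I_n\otimes e)\in C_n(p)$ for all $\epsilon>0$, then since $0\le D(I_n\otimes e)\le 2(q_n+q_n^{\perp})$, for fixed $\delta>0$ taking $\epsilon=\delta/4$ and tolerance $\delta/2$ in the definition produces $t$ with $D(x)+\delta q_n+(\delta/2+t)q_n^{\perp}\in C_{2n}$, so $x\in C_n(p)$.

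For (b): suppose $p$ is an abstract projection in $S$, and fix a unital complete order embedding $\pi:S\to B(H)$ (for $\{C_n\}$) with $P:=\pi(p)$ a projection. Then the computation carried out just before the statement of the theorem — decomposing $H^{n}=P_nH^{n}\oplus P_n^{\perp}H^{n}$, viewing $\pi^{(n)}(x)$ as a $2\times 2$ operator matrix, and applying Lemma~\ref{lem: fundamental projection} to $D(\pi^{(n)}(x))$ and the projection $P_n\oplus P_n^{\perp}$ — shows that for self-adjoint $x\in M_n(S)$, $x\in C_n$ if and only if for every $\epsilon>0$ there is $t>0$ with $D(x)+\epsilon q_n+tq_n^{\perp}\in C_{2n}$, i.e. if and only if $x\in C_n(p)$. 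Hence $C_n=C_n(p)$ for all $n$.

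Part (c) is the crux, and the step I expect to be the main obstacle. Assume $\{C_n(p)\}$ is proper, so by (a) we have an operator system $T=(S,\{C_n(p)\},e)$. Fix by Choi--Effros a unital complete order embedding $\pi:T\to B(K)$ and set $P=\pi(p)$, so $0\le P\le I_K$. The idea is to dilate $P$ to an honest projection: the operator
\[ \hat P=\bigl(\begin{smallmatrix} P & (P-P^{2})^{1/2}\\ (P-P^{2})^{1/2} & I_K-P\end{smallmatrix}\bigr)\in B(K\oplus K) \]
satisfies $\hat P=\hat P^{*}=\hat P^{2}$, and $P$ is its compression to the first summand. One must then produce a unital complete order embedding $\Pi:T\to B(H)$ with $\Pi(p)=\hat P$. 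The definitions were reverse-engineered, via Lemma~\ref{lem: fundamental projection}, precisely so that this succeeds: the ``abstractly positive'' compression condition cutting out $C_n(p)$ is the order-theoretic shadow in $S$ of the genuine compression positivity forced in any representation in which $p$ is a projection, so that amalgamating $\pi$ with $\hat P$ gives a map whose complete positivity and order-reflection can be verified against the defining inequalities of $C_n(p)$. Carrying this through — equivalently, showing the operator system $T$ admits a C*-cover in which $p$ is represented by a projection — is the hard part, and it is the content of Theorem~5.10 of \cite{AR2020Published} together with Lemma~3.6 of \cite{ARTaPublished2022}, whose arguments I would follow. With (a), (b), and (c) established, the theorem follows as in the first paragraph.
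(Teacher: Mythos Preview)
The paper does not prove this theorem; it is quoted from the literature with the citation ``See Theorem~5.10 of \cite{AR2020Published} and Lemma~3.6 of \cite{ARTaPublished2022}'' and no proof is given in the paper itself. Consequently there is no ``paper's own proof'' to compare your proposal against.

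That said, your proposal is a sound and well-organized outline that goes further than the paper does. Your parts~(a) and~(b) are correct: the containment $C_n\subseteq C_n(p)$, the matrix-ordering axioms for $\{C_n(p)\}$ (your rescaling for the compression axiom is right, using $0\le p\otimes\alpha^{*}\alpha\le\|\alpha\|^{2}(p\otimes I_k)$), and the Archimedean argument all work; and~(b) is exactly the computation the paper sketches in the paragraph preceding the theorem, invoking Lemma~\ref{lem: fundamental projection}. For part~(c) you correctly identify the crux, propose the natural Halmos dilation $\hat P$, and then --- like the paper --- defer to the cited references for the actual construction of the complete order embedding $\Pi$ carrying $p$ to $\hat P$. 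That deferral is the one genuine gap in your write-up, and you flag it honestly; filling it requires the machinery developed in \cite{AR2020Published}, which neither you nor the present paper reproduces.
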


Theorem \ref{thm: projection characterizations} abstractly characterizes projections in operator systems, meaning that any element $p \in S$ which satisfies the conditions of the Theorem must be represented as a projection in some C*-cover for $S$. In particular, every element satisfying these conditions is represented as a projection in $C^*_e(S)$. We will make use of this property frequently.

\subsection{Abstract relations} \label{subsec: Abstract relations}

We now introduce the notion of \textit{abstract relations}, which will be crucial to characterizing SIC-POVMs and mutually unbiased bases in the language of operator systems.

\begin{definition}
    Let $S$ be an operator system and suppose that $p \in S$ is a positive contraction. Then we say $pxp = 0$ \textbf{abstractly} if $pxp \geq 0$ abstractly and $p(-x)p \geq 0$ abstractly.
\end{definition}

\noindent We emphasize that the above definition does not require $p$ to be an abstract projection. In Section \ref{sec: constructions}, we will produce examples of operator systems where $pxp=0$ abstractly although $p$ is not an abstract projection. However, the following observation explains the importance of this notion in the case when $p$ is an abstract projection.

\begin{proposition}
    Suppose that $S$ is an operator system, $i: S \to C^*_e(S)$ is the embedding into its C*-envelope, $p \in S$ is an abstract projection, and $pxp = 0$ abstractly. Then $i(p)i(x)i(p) = 0$ in $C^*_e(S)$.
\end{proposition}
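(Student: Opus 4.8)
The plan is to unwind the two layers of the word ``abstractly'' and transport everything into the C*-envelope, where honest operator inequalities are available. First I would recall that since $p$ is an abstract projection, Theorem \ref{thm: projection characterizations} guarantees that $i(p)$ is a genuine projection in $C^*_e(S)$; write $P = i(p)$, $P^\perp = I - P = i(p^\perp)$, and $X = i(x)$, so that $P = P^2 = P^*$ and $X = X^*$. The hypothesis $pxp \geq 0$ abstractly means that for every $\epsilon > 0$ there is $t > 0$ with $x + \epsilon p + t p^\perp \in C_1$, and applying the (completely positive, hence positive) embedding $i$ gives $X + \epsilon P + t P^\perp \geq 0$ in $C^*_e(S)$ for every $\epsilon > 0$ (with $t$ depending on $\epsilon$). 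By Lemma \ref{lem: fundamental projection} applied to the projection $P$ and the self-adjoint operator $X$ inside the C*-algebra $C^*_e(S)$ (viewed concretely on a Hilbert space via any faithful representation), this is exactly the statement that $PXP \geq 0$.

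Next I would run the identical argument on $-x$: the hypothesis $p(-x)p \geq 0$ abstractly yields, after applying $i$ and invoking Lemma \ref{lem: fundamental projection} again, that $P(-X)P \geq 0$, i.e. $PXP \leq 0$. Combining the two, $PXP$ is a self-adjoint element of the C*-algebra $C^*_e(S)$ which is both positive and negative, hence $PXP = 0$. Unwinding the notation, $i(p)\,i(x)\,i(p) = 0$, which is the claim.

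The only genuine subtlety — and the step I would flag as the main obstacle, though it is minor — is making sure Lemma \ref{lem: fundamental projection} is legitimately applicable: that lemma is stated for operators on a Hilbert space, whereas $P, X$ live in the abstract C*-algebra $C^*_e(S)$. This is resolved by fixing a faithful unital $*$-representation of $C^*_e(S)$ on some Hilbert space $H$ (which exists and is a complete order embedding), transporting $P$ and $X$ there, applying the lemma, and noting that positivity is detected faithfully. One should also double check the direction of the equivalence in Lemma \ref{lem: fundamental projection}: it says $PTP \geq 0$ \emph{iff} for all $\epsilon>0$ there is $t>0$ with $T + \epsilon P + tP^\perp \geq 0$, so it is precisely the ``only if'' direction (from the inequality family to $PXP \geq 0$) that we need, and it is available. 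No completeness or matrix-level hypotheses are needed here since the statement concerns only the scalar level $n = 1$; the matrix version of the abstract projection condition in Theorem \ref{thm: projection characterizations} is used only to license the assertion that $i(p)$ is an actual projection.
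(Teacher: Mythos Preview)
Your proof is correct and follows essentially the same approach as the paper's: use Theorem \ref{thm: projection characterizations} to see that $i(p)$ is an honest projection, transport the abstract inequalities through $i$, invoke Lemma \ref{lem: fundamental projection} to obtain $\pm PXP \geq 0$, and conclude $PXP=0$ since the positive cone is proper. One tiny quibble: the direction of Lemma \ref{lem: fundamental projection} you use (from the family of inequalities to $PXP \geq 0$) is the ``if'' direction, not the ``only if'' direction, though your parenthetical makes clear you have the right implication in mind.
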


\begin{proof}
    By Theorem \ref{thm: projection characterizations}, $i(p)$ is a projection. Since $pxp = 0$ abstractly, $\pm i(p)i(x)i(p) \geq 0$ in $C^*_e(S)$ by Lemma \ref{lem: fundamental projection}. Since the positive cone of $C^*_e(S)$ is proper, $i(p)i(x)i(p) = 0$ in $C^*_e(S)$.
\end{proof}

With the above notions, we can define a variety of \textbf{abstract relations}. For example, suppose we want the relation $pxp = t p$ to hold in $C^*_e(S)$, where $p,x \in S$, $t \in \mathbb{R}$, and $p$ is an abstract projection. This relation can be enforced by asking that $p(x-tI)p = 0$ abstractly. 

\subsection{$d$-minimality}

Given an abstract operator system $S$, the Choi-Effros Theorem guarantees the existence of a Hilbert space $H$ and a complete order embedding $\pi: S \to B(H)$. However, even if $S$ is finite dimensional, there is no guarantee that $H$ is finite-dimensional. In fact, there are many examples of finite dimensional operator systems which require an infinite dimensional Hilbert space $H$ to produce a complete order embedding $\pi: S \to B(H)$ (c.f. \cite{PaulsenGroups}). Since we are interested in finite-dimensional phenomena, namely SIC-POVMs and mutually unbiased bases, it would be helpful to somehow bound the dimension of a C*-cover for a given operator system. This can be partially accomplished by considering \textit{$d$-minimal} operator systems. This notion was introduced in \cite{xhabli2012super} and studied in the context of abstract projections in \cite{ART2024Published}.

Given a positive integer $d$, an operator system $(S, \{C_n\}, e)$ is \textbf{$d$-minimal} if $\varphi^{(n)}(x) \geq 0$ for all ucp $\varphi: S \to M_d$ implies that $x \in C_n$. Equivalently, the direct sum of all ucp maps $\varphi: S \to M_d$ defines a unital complete order embedding of $S$ into an infinite direct sum of $d \times d$ matrix algebras. In other words, $S$ can be regarded as a concrete operator subsystem of the $d \times d$ block diagonal operators on a Hilbert space with respect to some orthonormal basis. The following theorem from \cite{ART2024Published} gives a more intrinsic characterization of $d$-minimal operator systems.

\begin{theorem}[Section 3 of \cite{ART2024Published}] \label{thm: d-minimal characterization}
    Let $(S,\{C_n\},e)$ be an abstract operator system. Then $S$ is $d$-minimal if and only if for every $n \in \mathbb{N}$, $C_n$ is equal to the set of $x \in M_n(S)_h$ such that $\alpha^* x \alpha \in C_d$ for every $\alpha \in M_{n,d}$.
\end{theorem}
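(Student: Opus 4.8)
The plan is to reformulate $d$-minimality and then reduce the whole statement to a single level-$d$ fact. Write $E_n \subseteq M_n(S)_h$ for the set of those $x$ with $\varphi^{(n)}(x) \geq 0$ for every ucp $\varphi \colon S \to M_d$, and $D_n \subseteq M_n(S)_h$ for the set appearing in the theorem, i.e. those $x$ with $\alpha^* x \alpha \in C_d$ for every $\alpha \in M_{n,d}$. Since completely positive maps preserve positivity we always have $C_n \subseteq E_n$, so by the definition of $d$-minimality, $S$ is $d$-minimal precisely when $C_n = E_n$ for every $n$. Hence the theorem follows once I show that $D_n = E_n$ for an arbitrary operator system $S$. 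The computational backbone is the elementary identity
\[ \varphi^{(d)}(\alpha^* x \alpha) = (\alpha \otimes I_d)^* \, \varphi^{(n)}(x) \, (\alpha \otimes I_d), \qquad \alpha \in M_{n,d}, \ x \in M_n(S), \]
valid for any linear $\varphi \colon S \to M_d$, which is checked entrywise (the $(k,l)$ block of either side is $\sum_{i,j}\overline{\alpha_{ik}}\alpha_{jl}\varphi(x_{ij})$).

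For $D_n \subseteq E_n$: given $x \in D_n$ and a ucp map $\varphi$, I want $\varphi^{(n)}(x) \geq 0$ as an operator on $\mathbb{C}^n \otimes \mathbb{C}^d$. Any vector there may be written as $(\alpha \otimes I_d)\zeta$, where $\zeta = \sum_{k=1}^d e_k \otimes e_k$ is the maximally entangled vector in $\mathbb{C}^d\otimes\mathbb{C}^d$ and $\alpha \in M_{n,d}$ is the matrix obtained by unfolding the vector (its $(i,k)$ entry is the $(e_i\otimes e_k)$-coefficient). The identity then gives $\langle \varphi^{(n)}(x)(\alpha\otimes I_d)\zeta,(\alpha\otimes I_d)\zeta\rangle = \langle \varphi^{(d)}(\alpha^* x \alpha)\zeta,\zeta\rangle \geq 0$, because $\alpha^* x\alpha \in C_d$ and $\varphi$ is completely positive. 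For $E_n \subseteq D_n$: given $x \in E_n$ and $\alpha \in M_{n,d}$, the same identity shows $\varphi^{(d)}(\alpha^* x\alpha) \geq 0$ for every ucp $\varphi \colon S\to M_d$, i.e. $\alpha^* x \alpha \in E_d$; so this inclusion reduces to the single claim that $E_d = C_d$ for every operator system.

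To prove $E_d = C_d$, only $E_d \subseteq C_d$ requires work. Suppose $y \in M_d(S)_h \setminus C_d$. By Theorem \ref{thm: Choi-Effros} fix a unital complete order embedding $\pi \colon S \to B(H)$; after replacing $\pi$ by $\pi \oplus (f(\cdot)I_K)$ for a state $f$ on $S$ and a large enough auxiliary Hilbert space $K$ — which is again a unital complete order embedding — I may assume $\dim H \geq d$. Then $\pi^{(d)}(y)$ is not positive, so there is a unit vector $v = \sum_{i=1}^d e_i \otimes h_i \in H^d$ with $\langle \pi^{(d)}(y)v, v\rangle < 0$. Choose a $d$-dimensional subspace $H_0 \subseteq H$ containing $h_1,\dots,h_d$ and an isometry $W \colon \mathbb{C}^d \to H$ with range $H_0$. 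Then $\varphi := W^*\pi(\cdot)W \colon S \to M_d$ is ucp, and $\varphi^{(d)}(y) = (I_d \otimes W)^*\pi^{(d)}(y)(I_d \otimes W)$ is unitarily equivalent to the compression of $\pi^{(d)}(y)$ to $\mathbb{C}^d \otimes H_0$; since $v$ lies in that subspace, $\varphi^{(d)}(y)$ is not positive, so $y \notin E_d$. Combining the three paragraphs, $D_n = E_n$ for every $n$, which proves the theorem.

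The only genuinely delicate step is $E_d = C_d$, and within it the passage to a \emph{unital} map. The vector $v$ naturally produces only the (generally non-unital) completely positive compression $s \mapsto V^*\pi(s)V$ with $V e_i = h_i$; the repair is to enlarge the span of the $h_i$ to a full $d$-dimensional subspace and compress by an honest isometry instead, which is precisely why the preliminary enlargement ensuring $\dim H \geq d$ is needed. Everything else is tensor-product bookkeeping organized around the displayed identity, together with the routine observation that direct sums of ucp maps and of complete order embeddings behave as expected.
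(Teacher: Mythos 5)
Your argument is correct. Note that the paper does not prove this statement itself---it is quoted from the reference \cite{ART2024Published}---so there is no in-text proof to compare against; what you have written is a valid self-contained derivation. Your reduction is the natural one: with $E_n$ the ``ucp-detected'' cone and $D_n$ the ``compression-detected'' cone, $d$-minimality is by definition $C_n=E_n$ for all $n$, so everything hinges on $D_n=E_n$, which you get from the identity $\varphi^{(d)}(\alpha^*x\alpha)=(\alpha\otimes I_d)^*\varphi^{(n)}(x)(\alpha\otimes I_d)$ together with the observation that every vector of $\mathbb{C}^n\otimes\mathbb{C}^d$ is $(\alpha\otimes I_d)\zeta$ for the maximally entangled $\zeta$, plus the level-$d$ fact $E_d=C_d$. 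Your proof of that last fact---Choi--Effros, an enlargement to guarantee $\dim H\ge d$, and compression by an isometry onto a $d$-dimensional subspace containing the witnessing vectors $h_1,\dots,h_d$---is sound, and you correctly flag the one delicate point (unitality of the compression forces the passage from the map $s\mapsto V^*\pi(s)V$ to an honest isometry $W$). As a bonus, the equality $D_n=E_n$ you establish is exactly the unproven equivalence of conditions (1) and (2) in Proposition \ref{Prop: d-min matrix ordering} (where, incidentally, $\varphi^{(d)}(x)$ should read $\varphi^{(n)}(x)$), so your argument fills that gap as well.
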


For any matrix ordering $\{C_n\}$, $x \in C_n$ implies that $\alpha^* x \alpha \in C_d$ for every $\alpha \in M_{n,d}$. Hence, the content of Theorem \ref{thm: d-minimal characterization} is the converse, so that the entire matrix ordering is uniquely determined by the cone $C_d$. Now let $\{C_n\}$ be any matrix ordering making $(S,\{C_n\},e)$ into an operator system. Then we can replace $\{C_n\}$ with another matrix ordering $\{(C_n)^{d-\text{min}} \}$ making $(S,\{(C_n)^{d-\text{min}} \},e)$ into a $d$-minimal matrix ordering. The following explains how $\{(C_n)^{d-\text{min}} \}$ is defined and its relation to the original matrix ordering $\{C_n\}$.

\begin{proposition} \label{Prop: d-min matrix ordering}
    Let $\{C_n\}$ be any matrix ordering making $(S,\{C_n\},e)$ into an operator system. Then for $x \in M_n(S)_h$ the following statements are equivalent.
    \begin{enumerate}
        \item For every ucp map $\varphi: S \to M_d$, $\varphi^{(d)}(x) \geq 0$.
        \item For every $\alpha \in M_{n,d}$, $\alpha^* x \alpha \in C_d$.
    \end{enumerate}
    Let $(C_n)^{d-\text{min}}$ denote the set of all $x \in M_n(S)_h$ satisfying these equivalent conditions. Then $\{(C_n)^{d-\text{min}} \}$ is a matrix-ordering making $(S,\{(C_n)^{d-\text{min}} \},e)$ into a $d$-minimal operator system. Moreover, $C_n \subseteq (C_n)^{d-\text{min}}$ for every $n \in \mathbb{N}$ and $C_n = (C_n)^{d-\text{min}}$ for every $n \leq d$.
\end{proposition}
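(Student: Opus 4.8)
The plan is to prove the equivalence of conditions (1) and (2) first — this is the substantive part — and then verify by routine bookkeeping that the cones $(C_n)^{d\text{-min}}$ defined by these conditions assemble into a $d$-minimal operator system. For the implication (2) $\Rightarrow$ (1), let $\varphi : S \to M_d$ be ucp (hence completely positive) and let $x \in M_n(S)_h$ satisfy $\alpha^* x \alpha \in C_d$ for every $\alpha \in M_{n,d}$. To see $\varphi^{(n)}(x) \geq 0$ in $M_n(M_d)$ it suffices to check $\langle \varphi^{(n)}(x)\mu, \mu \rangle \geq 0$ for each $\mu \in \mathbb{C}^n \otimes \mathbb{C}^d$. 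Reshaping $\mu$ into a matrix $\Xi \in M_{n,d}$ one has $\mu = (\Xi \otimes I_d)\zeta$ with $\zeta = \sum_{i=1}^{d} e_i \otimes e_i$, and since $\varphi$ applied entrywise commutes with conjugation by scalar matrices, $(\Xi \otimes I_d)^* \varphi^{(n)}(x)(\Xi \otimes I_d) = \varphi^{(d)}(\Xi^* x \Xi)$. Now $\Xi^* x \Xi \in C_d$ by hypothesis, so $\varphi^{(d)}(\Xi^* x \Xi) \geq 0$ by complete positivity of $\varphi$, and therefore $\langle \varphi^{(n)}(x)\mu, \mu \rangle = \langle \varphi^{(d)}(\Xi^* x \Xi)\zeta, \zeta \rangle \geq 0$.

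For the implication (1) $\Rightarrow$ (2), I would use Theorem \ref{thm: Choi-Effros} to realize $S$ as a concrete unital subsystem of $B(H)$, enlarging $H$ (say to $H \otimes \ell^2$) so that $\dim H \geq d$ and each $C_m$ is exactly the cone of positive operators in $M_m(S)$. Fix $\alpha \in M_{n,d}$; I want $\alpha^* x \alpha \geq 0$ as an operator on $H^d$, i.e.\ $\langle x\xi, \xi \rangle \geq 0$ for every $\xi = (\alpha \otimes I_H)\eta$ with $\eta = (\eta_1, \dots, \eta_d) \in H^d$. Each coordinate $\xi_k = \sum_i \alpha_{ki}\eta_i$ lies in the span of $\eta_1, \dots, \eta_d$, a subspace of dimension at most $d$; enlarging it to a $d$-dimensional subspace $V$ and letting $W : \mathbb{C}^d \to H$ be an isometry onto $V$, the compression $\varphi := W^*(\,\cdot\,)W$ is a ucp map $S \to M_d$. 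Writing $\xi_k = W u_k$ and $u = (u_1, \dots, u_n)$, one computes $\langle x\xi, \xi \rangle = \langle \varphi^{(n)}(x)u, u \rangle$, which is $\geq 0$ by hypothesis (1). Hence $\alpha^* x \alpha \in C_d$ for every $\alpha$, which is (2).

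Granting the equivalence, the inclusion $C_n \subseteq (C_n)^{d\text{-min}}$ is immediate from the axiom $\alpha^* C_n \alpha \subseteq C_d$, and for $n \leq d$ the reverse inclusion follows by padding: taking $\alpha = [\, I_n \mid 0 \,] \in M_{n,d}$ in condition (2) gives $x \oplus 0_{d-n} \in C_d$, and compressing this back by $[\, I_n \mid 0 \,]^* \in M_{d,n}$ recovers $x \in C_n$; in particular $(C_1)^{d\text{-min}} = C_1$ and $(C_d)^{d\text{-min}} = C_d$. That $\{(C_n)^{d\text{-min}}\}$ is a matrix ordering is read off from characterization (2): each cone property is inherited from $C_d$, and the identities $\beta^*(x \oplus y)\beta = \beta_1^* x \beta_1 + \beta_2^* y \beta_2$ and $\beta^*(\gamma^* x \gamma)\beta = (\gamma\beta)^* x (\gamma\beta)$ give compatibility under direct sums and conjugation (a composition of scalar-matrix compressions being again one). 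Properness comes from $(C_1)^{d\text{-min}} = C_1$ and properness of $\{C_n\}$. The unit $e$ is still a matrix order unit since $C_n \subseteq (C_n)^{d\text{-min}}$, and it is Archimedean because if $x + \epsilon(I_n \otimes e) \in (C_n)^{d\text{-min}}$ for all $\epsilon > 0$, then for each $\alpha$ we get $\alpha^* x\alpha + \epsilon(\alpha^*\alpha \otimes e) \in C_d$, and adding $\epsilon(\|\alpha\|^2 I_d - \alpha^*\alpha) \otimes e \in C_d$ yields $\alpha^* x\alpha + \epsilon\|\alpha\|^2 (I_d \otimes e) \in C_d$ for all $\epsilon$, so $\alpha^* x\alpha \in C_d$ by the Archimedean property of $(S,\{C_n\},e)$. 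Finally, $d$-minimality of the new system is Theorem \ref{thm: d-minimal characterization} applied to the new ordering, whose hypothesis holds because $(C_d)^{d\text{-min}} = C_d$ makes condition (2) coincide with the condition in that theorem.

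I expect the equivalence of (1) and (2) to be the crux, and within it the direction (1) $\Rightarrow$ (2) — which forces one to pass to a concrete representation and build the right compression maps out of the vectors that can arise — to be the main obstacle; once the equivalence is available, the remaining claims are direct verifications against the matrix-ordering axioms, the Archimedean condition, and the cited characterization of $d$-minimality.
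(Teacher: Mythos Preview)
The paper does not supply a proof of this proposition; it is stated in the preliminaries as a known fact drawn from \cite{xhabli2012super} and \cite{ART2024Published}, so there is nothing to compare against directly. Your argument is correct and fills in the details competently. Note that condition (1) as printed reads $\varphi^{(d)}(x) \geq 0$, which is a typo since $x \in M_n(S)$; you correctly read it as $\varphi^{(n)}(x) \geq 0$, consistent with the paper's own definition of $d$-minimality a few lines earlier. Your reshaping trick for (2) $\Rightarrow$ (1) via the vector $\zeta = \sum_i e_i \otimes e_i$ and the identity $(\Xi \otimes I_d)^* \varphi^{(n)}(x)(\Xi \otimes I_d) = \varphi^{(d)}(\Xi^* x \Xi)$ is the standard Choi-matrix maneuver, and your compression-by-isometry construction for (1) $\Rightarrow$ (2) is exactly how one produces enough ucp maps into $M_d$ from a concrete representation. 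The remaining verifications are handled correctly, including the Archimedean step where you absorb $\alpha^*\alpha \otimes e$ into $\|\alpha\|^2 I_d \otimes e$.
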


We conclude this section by considering how projections behave in $d$-minimal operator systems. Suppose that $S$ is $d$-minimal and that $p \in S$ is an abstract projection. Then there exists a complete order embedding $\pi: S \to B(H)$ such that $\pi(p)$ is a projection and a (possibly different) complete order embedding $\pi': S \to B(K)$ such that $\pi'(S)$ has a block diagonal form with respect to some orthonormal basis. If these representations coincide, then we may conclude that $\pi(p)$ is a direct sum of projections on matrix algebras. Since an operator system may have many C*-covers, it is not immediately clear if this situation occurs. However, we have the following result from \cite{ART2024Published}.

\begin{theorem} \label{thm: C*-envelope is d-minimal}
    Suppose that $(S,\{C_n\}, e)$ is $d$-minimal. If $\pi: C^*_e(S) \to B(H)$ is an irreducible representation of $C^*_e(S)$, then $\dim(H) \leq d$. Hence there exists a faithful $*$-representation of $C^*_e(S)$ onto a direct sum of matrix algebras each with dimension no larger than $d$, namely the direct sum of all irreducible representations of $C^*_e(S)$ (c.f. Corollary I.9.11 of \cite{DavidsonCstarBook}).
\end{theorem}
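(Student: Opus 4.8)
The plan is to realize a C*-cover of $S$ inside a product of copies of $M_d$, bound the dimension of its irreducible representations, and then transfer that bound to $C^*_e(S)$ via the universal property of the C*-envelope. By the definition of $d$-minimality, the direct sum $\iota := \bigoplus_{\varphi} \varphi$ over \emph{all} ucp maps $\varphi \colon S \to M_d$ is a unital complete order embedding of $S$ into the $\ell^\infty$-direct product $\mathcal{M} := \prod_{\varphi} M_d$ (taken so that $\iota$ is unital). Hence $\mathcal{A} := C^*(\iota(S))$, together with $\iota$, is a C*-cover of $S$ and a unital C*-subalgebra of $\mathcal{M}$. By the universal property of $C^*_e(S)$ recalled above, there is a surjective $*$-homomorphism $q \colon \mathcal{A} \to C^*_e(S)$ with $q \circ \iota = i$, so that $\pi \mapsto \pi \circ q$ sends irreducible representations of $C^*_e(S)$ to irreducible representations of $\mathcal{A}$ of the same dimension. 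It therefore suffices to show that every irreducible representation of $\mathcal{A}$ has dimension at most $d$.

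For this I would invoke the Amitsur--Levitzki theorem: $M_d$ satisfies the standard polynomial identity $S_{2d} \equiv 0$ of degree $2d$, and this identity is inherited by the product $\mathcal{M}$ and hence by the subalgebra $\mathcal{A}$. Now let $\pi \colon \mathcal{A} \to B(H)$ be irreducible; then $\pi(\mathcal{A})'' = B(H)$, so by Kaplansky's density theorem the unit ball of $\pi(\mathcal{A})$ is strong-operator dense in that of $B(H)$. Since $S_{2d}$ is multilinear and multiplication in $B(H)$ is jointly strong-operator continuous on bounded sets, the map $S_{2d}$ is jointly strong-operator continuous on bounded sets; as it vanishes on $\pi(\mathcal{A})$, it vanishes on the unit ball of $B(H)$ and hence on all of $B(H)$. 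But $M_{d+1}$ satisfies no polynomial identity of degree below $2(d+1)$, so $B(H)$ cannot satisfy $S_{2d} \equiv 0$ unless $\dim H \leq d$. Combining this with the previous paragraph, every irreducible representation of $C^*_e(S)$ has dimension at most $d$.

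For the final assertion I would take the reduced atomic representation $\rho = \bigoplus_{[\sigma]} \sigma$ of $C^*_e(S)$, the direct sum over unitary-equivalence classes of irreducible representations. This is faithful for any C*-algebra, since for each nonzero $a$ there is a pure state $\phi$ with $\|\pi_\phi(a)\| = \|a\|$ and $\pi_\phi$ irreducible. By the dimension bound, each summand $\sigma(C^*_e(S))$ lies in $B(H_\sigma)$ with $\dim H_\sigma \leq d$, i.e.\ in a matrix algebra of dimension at most $d$, which gives the claimed faithful representation into a direct sum of such matrix algebras.

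The step I expect to be the crux is the dimension bound for irreducible representations of $\mathcal{A}$; note also that it is essential here to use the C*-\emph{envelope} rather than an arbitrary C*-cover, which is precisely why $\mathcal{A}$ is chosen as the cover coming from $d$-minimality and why the universal (minimality) property of $C^*_e(S)$ enters. One can avoid Amitsur--Levitzki by an alternative route: a pure state on the unital subalgebra $\mathcal{A}$ extends to a pure state on $\mathcal{M}$ (an extreme point of the weak-* compact convex set of state extensions restricts to the given pure state and is itself pure), and every irreducible representation of $\mathcal{M} \cong M_d(\ell^\infty)$ has dimension $d$, being of the form $\mathrm{id}_{M_d} \otimes \chi$ for a character $\chi$ of the commutative algebra $\ell^\infty$; the GNS representation of the original pure state is then a subrepresentation of a $d$-dimensional one. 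Either route needs some care, but everything else — the universal property of $C^*_e(S)$ and faithfulness of the atomic representation — is standard.
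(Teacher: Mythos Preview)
The paper does not prove this statement itself; it is quoted as a result from \cite{ART2024Published}. Your argument is correct and follows the natural line: embed $S$ into $\prod M_d$ using the defining property of $d$-minimality, pass to the generated C*-subalgebra $\mathcal{A}$ as a C*-cover, and push the irrep-dimension bound down to $C^*_e(S)$ via the universal surjection $q\colon \mathcal{A} \to C^*_e(S)$. Both routes you offer for bounding $\dim H$ for irreducible representations of $\mathcal{A}$ are valid: the polynomial-identity argument (Amitsur--Levitzki together with Kaplansky density) is the standard way to see that any C*-subalgebra of $\prod M_d$ is $d$-subhomogeneous, and the pure-state-extension argument is an equally clean alternative that stays within basic representation theory. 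Your observation that the passage through the C*-envelope is essential is exactly right, and matches how the theorem is deployed later.

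One small point: the theorem as stated in the paper says ``onto a direct sum of matrix algebras,'' whereas your reduced-atomic-representation argument (correctly) produces a faithful $*$-representation \emph{into} $\prod_\sigma M_{k_\sigma}$ with each $k_\sigma \leq d$; the image need not be the full product. This is a looseness in the paper's phrasing rather than a gap in your proof, since the only use made of the theorem later (in the proof of Theorem~\ref{Theorem: SIC Characterization}) is the bound $\dim H \leq d$ for irreducible $\pi$.
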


Since every abstract projection is a projection in $C^*_e(S)$, we conclude that abstract projections in $d$-minimal systems may be regarded as direct sums of projections in matrix algebras of size no larger than $d \times d$.

\subsection{Inductive limits} \label{subsec: Inductive limits}

In Section 4 below, we will consider a method of constructing operator systems which contain abstract projections, satisfy abstract relations, and are $d$-minimal. These constructions are achieved by an inductive limit process. Specifically, we we begin with an operator system $(V, \{C_n^{(0)}\}, e)$ and then inductively defined an increasing sequence $\{C_n^{(k)}\}$ of matrix orderings so that $(V, \{C_n^{(k)}\}, e)$ is an operator system and the identity map $i: (V, \{C_n^{(k)}\}, e) \to (V, \{C_n^{(k+1)}\}, e)$ is completely positive. In the limit, we obtain a matrix ordering $\{C_n^{(\infty)}\}$ defined below.

\begin{definition}[Inductive limit of matrix orderings]
    Let $V$ be a $*$-vector space, together with an element $e \in V$ such that $e=e^*$, and suppose that for each $k \in \mathbb{N}$ we have a matrix ordering $\{C_n\}_{n=1}^\infty$ such that
    \begin{enumerate}
        \item $(V, \{C_n^{(k)}\}_{n=1}^\infty, e)$ is an operator system,
        \item $C_n^{(k)} \subseteq C_n^{(k+1)}$ for every $n,k \in \mathbb{N}$.
    \end{enumerate}
    For each $n \in \mathbb{N}$, define $C^{(\infty)}_n$ to be the set of elements $x \in V$ such that $x=x^*$ and for every $\epsilon > 0$ there exists $n \in \mathbb{N}$ such that $x + \epsilon I_n \otimes e \in C_n^{(k)}$. In this situation, the sequence $\{C_n^{(k)}\}_{n=1}^\infty, k=1,2,\dots$ is called an \textbf{inductive sequence of matrix orderings} and the family of sets $\{C_n^{(\infty)}\}$ is called the \textbf{inductive limit} of the sequence of matrix orderings.
\end{definition}

It was shown \cite{ARTaPublished2022} that $\{C_n^{(\infty)}\}$ is a matrix ordering and is Archimedean closed, meaning that if $x + \epsilon I_n \otimes e \in C_n^{(\infty)}$ for every $\epsilon > 0$ then $x \in C_n^{(\infty)}$. However, it may be the case that $\{C^{(\infty)}_n\}$ is not proper, meaning that both $x$ and $-x$ are elements of $C_n^{(\infty)}$ for some $x \neq 0$.

The reason for involving inductive limits of matrix orderings is to take advantage of the following two results, which guarantee that an inductive limits of matrix orderings respect abstract projections and $d$-minimality.

\begin{theorem}[See Proposition 4.3 of \cite{AR2023Preprint}] \label{thm: inductive limits respect abstract projections}
    Let $V$ be a $*$-vector space with self-adjoint elements $p, e \in V$. Suppose that $\{C_n^{(k)}\}_{n=1}^\infty, k=1,2,\dots$, is an inductive sequence of matrix orderings such that for every $k \in \mathbb{N}$, $(V, \{C_n^{(k)}\}_{n=1}^\infty, e)$ is an operator system with abstract projection $p$. Then $p$ is an abstract projection in $(V, \{C_n^{(\infty)}\}_{n=1}^\infty, e)$ provided that $\{C^{(\infty)}_n\}_{n=1}^\infty$ is proper.
\end{theorem}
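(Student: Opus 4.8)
\section*{Proof proposal}

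The plan is to produce, in one stroke, a concrete C*-realization of the limit operator system in which $p$ is an honest projection, by amalgamating the realizations available at the finite stages. Since $p$ is an abstract projection in each $(V,\{C_n^{(k)}\},e)$, for every $k$ fix a unital complete order embedding $\pi_k\colon (V,\{C_n^{(k)}\},e)\to B(H_k)$ with $\pi_k(p)$ a projection. Each $\pi_k$ is unital and completely positive, hence contractive, and because the cones increase in $k$ the operator system norm of $(V,\{C_n^{(k)}\},e)$ is bounded above by that of $(V,\{C_n^{(1)}\},e)$; thus $(\pi_k(v))_k$ is a bounded sequence for each $v\in V$. I would then form the asymptotic sequence C*-algebra $\mathcal{Q}=\bigl(\prod_k B(H_k)\bigr)/\{(a_k):\|a_k\|\to 0\}$ (a free ultrapower of the $B(H_k)$ would serve equally well) and define $\Pi\colon V\to\mathcal{Q}$ by $\Pi(v)=[(\pi_k(v))_k]$. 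This map is linear, unital and self-adjoint, and $\Pi(p)=[(\pi_k(p))_k]$ is a projection in $\mathcal{Q}$ since each $\pi_k(p)$ is one. The theorem then follows once $\Pi$ is shown to be a complete order embedding of $(V,\{C_n^{(\infty)}\},e)$: in that case $\bigl(C^*(\Pi(V)),\Pi\bigr)$ is a C*-cover of the limit system carrying $p$ to a projection, which is precisely what it means for $p$ to be an abstract projection in $(V,\{C_n^{(\infty)}\},e)$.

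The heart of the argument --- and the step I expect to cause the most trouble --- is the identification $\Pi^{(n)}(x)\ge 0 \iff x\in C_n^{(\infty)}$ for $x\in M_n(V)_h$, which simultaneously yields complete positivity of $\Pi$ and the order-embedding property. For the forward direction: if $x\in C_n^{(\infty)}$ then for each $\epsilon>0$ there is $k_0$ with $x+\epsilon\,I_n\otimes e\in C_n^{(k_0)}$, and since the cones increase this membership persists for all $k\ge k_0$; applying $\pi_k^{(n)}$ gives $\pi_k^{(n)}(x)\ge-\epsilon I$, so $\mathrm{dist}(\pi_k^{(n)}(x),B(H_k^n)^+)\to 0$ and hence $\Pi^{(n)}(x)\ge 0$. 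For the reverse direction: $\Pi^{(n)}(x)\ge 0$ forces $\mathrm{dist}(\pi_k^{(n)}(x),B(H_k^n)^+)\to 0$, so for each $\epsilon>0$ some single $k$ satisfies $\pi_k^{(n)}(x)+\epsilon I\ge 0$; since $\pi_k$ is a complete order embedding of the $k$-th system this forces $x+\epsilon\,I_n\otimes e\in C_n^{(k)}$, and letting $\epsilon\to 0$ is exactly the Archimedean-closure description of $C_n^{(\infty)}$. The point to get right here is that monotonicity of the cones is used twice and essentially --- \emph{positive for cofinally many $k$}, \emph{positive for all large $k$}, and \emph{lying in $\bigcup_j C_n^{(j)}$} must all coincide --- and this is precisely why a plain $\ell^\infty$-product of the $\pi_k$ would fail: that would only detect $\bigcap_k C_n^{(k)}=C_n^{(1)}$, whereas quotienting by the null sequences is what matches the Archimedeanization of the union.

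Finally, injectivity of $\Pi$ is where the hypothesis that $\{C_n^{(\infty)}\}$ is proper enters: if $\Pi(v)=0$, write $v=a+ib$ with $a,b$ self-adjoint; the identification above applied to $\pm a$ and $\pm b$ gives $a,b\in C_1^{(\infty)}\cap(-C_1^{(\infty)})=\{0\}$, so $v=0$. Assembling the pieces, $\Pi$ is a unital complete order embedding of $(V,\{C_n^{(\infty)}\},e)$ into a C*-algebra sending $p$ to a projection, completing the proof. (One might instead try to run everything through Theorem~\ref{thm: projection characterizations}, reducing to the claim that the cone it associates to $p$ over the limit system equals $C_n^{(\infty)}$; but interleaving the two nested abstract-positivity conditions there does not close up cleanly --- roughly because the matrix $\left(\begin{smallmatrix} e & e\\ e & e\end{smallmatrix}\right)$ cannot dominate any positive multiple of the compressing element $(p\oplus p^{\perp})\otimes I_n$ --- which is why I prefer the representation-theoretic route above.)
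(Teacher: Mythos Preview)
The paper does not prove this statement; it is quoted from Proposition~4.3 of \cite{AR2023Preprint} without argument, so there is no in-paper proof to compare against. Your proof is correct and self-contained. The norm bound $\|\pi_k(v)\|\le\|v\|_{(1)}$ from monotonicity of the cones makes $\Pi$ well defined into the asymptotic sequence algebra, the two-way identification of positivity in $\mathcal{Q}$ with membership in the Archimedean closure of $\bigcup_k C_n^{(k)}$ uses exactly the right ingredients (eventual positivity versus positivity at a single cofinal index), and properness enters precisely where it must, for injectivity. Your remark that a bare $\ell^\infty$-product would detect only $C_n^{(1)}$ is a good diagnostic for why the quotient is needed.

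One comment on your closing parenthetical: the route through Theorem~\ref{thm: projection characterizations} does in fact close up, provided one passes to a concrete realization at the finite stage rather than working purely order-theoretically in the limit. Given $x\in C_n^{(\infty)}(p)$ and $\eta>0$, take $\epsilon=\delta=\eta/2$; choose $t$ so that
\[ z:=\begin{pmatrix} x&x\\x&x\end{pmatrix}+\epsilon\,(p\oplus p^\perp)\otimes I_n+t\,(p^\perp\oplus p)\otimes I_n\in C_{2n}^{(\infty)}, \]
then $k$ so that $z+\delta\,I_{2n}\otimes e\in C_{2n}^{(k)}$. Representing the $k$-th system with $\pi_k(p)$ a projection and compressing $\pi_k^{(2n)}(z+\delta\,I_{2n}\otimes e)\ge 0$ by $(P\oplus P^\perp)\otimes I_n$ gives exactly $\pi_k^{(n)}(x)+(\epsilon+\delta)I\ge 0$ (this is the observation recorded in the paragraph preceding Theorem~\ref{thm: projection characterizations}), whence $x+\eta\,I_n\otimes e\in C_n^{(k)}$. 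So the obstruction you flag disappears once one uses the concrete projection at level $k$; this is plausibly closer to what the cited reference does. Either way, your argument stands.
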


The next result is new as stated, but builds on the arguments of \cite{AR2023Preprint} (see, e.g. Lemma 5.8 of \cite{AR2023Preprint}). Before giving the proof, we recall a few elementary facts. First, if $V$ is an operator system and $\varphi: V \to M_d$ is a unital completely positive map, then $\varphi$ is completely contractive (c.f. Proposition 3.6 of \cite{paulsen2002completely}). Second, the set of all unital completely contractive maps from an operator system $V$ to $M_d$ is weak-$*$ compact (c.f Theorem 7.4 of \cite{paulsen2002completely}).

\begin{theorem} \label{thm: inductive limits respect k-minimality}
    Let $V$ be a $*$-vector space with self-adjoint element $e \in V$ and let $d \in \mathbb{N}$. Suppose that $\{C_n^{(k)}\}_{n=1}^\infty, k=1,2,\dots$, is an inductive sequence of matrix orderings such that for every $k \in \mathbb{N}$, $(V, \{C_n^{(k)}\}_{n=1}^\infty, e)$ is a $d$-minimal operator system. Then $(V, \{C_n^{(\infty)}\}_{n=1}^\infty, e)$ is a $d$-minimal operator system provided that $\{C^{(\infty)}_n\}_{n=1}^\infty$ is proper.
\end{theorem}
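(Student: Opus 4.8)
The plan is to show the two inclusions that together establish $d$-minimality of $(V, \{C_n^{(\infty)}\}, e)$, using the characterization in Theorem \ref{thm: d-minimal characterization}: we must verify that $x \in C_n^{(\infty)}$ whenever $\alpha^* x \alpha \in C_d^{(\infty)}$ for every $\alpha \in M_{n,d}$. One inclusion is automatic for any matrix ordering, so the work is the converse. First I would fix $x = x^* \in M_n(V)$ with $\alpha^* x \alpha \in C_d^{(\infty)}$ for all $\alpha \in M_{n,d}$, fix $\epsilon > 0$, and try to produce a single index $k$ with $x + \epsilon I_n \otimes e \in C_n^{(k)}$. Since each $(V, \{C_m^{(k)}\}, e)$ is $d$-minimal, Theorem \ref{thm: d-minimal characterization} reduces this to showing $\alpha^*(x + \epsilon I_n \otimes e)\alpha \in C_d^{(k)}$ for every $\alpha \in M_{n,d}$, i.e. $\alpha^* x \alpha + \epsilon\, \alpha^*(I_n \otimes e)\alpha \in C_d^{(k)}$.

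The obstacle — and the reason a compactness argument is needed — is that the index $k$ furnished by the definition of $C_d^{(\infty)}$ depends on $\alpha$: for each $\alpha$ and each $\delta > 0$ there is some $k_{\alpha,\delta}$ with $\alpha^* x \alpha + \delta I_d \otimes e \in C_d^{(k_{\alpha,\delta})}$, but there is no obvious uniform bound over the noncompact set $M_{n,d}$. To handle this I would first reduce to a compact parameter set. Because $C_d^{(k)}$ is a cone and scaling $\alpha$ by a positive scalar scales $\alpha^* x \alpha$ and $\alpha^*(I_n \otimes e)\alpha$ by the same factor, it suffices to consider $\alpha$ in the unit ball $\{\alpha \in M_{n,d} : \|\alpha\| \le 1\}$, which is compact. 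I would then dualize: rather than working directly with the $\alpha$'s, use that membership in the $d$-minimalization is equivalent (Proposition \ref{Prop: d-min matrix ordering}) to positivity under all ucp maps $\varphi : (V, \{C_m^{(k)}\}) \to M_d$; such maps are unital completely contractive, and the set $\mathrm{UCC}(V, M_d)$ of all unital completely contractive maps $V \to M_d$ is weak-$*$ compact and is the same underlying set for every $k$ (the constraint is just unitality and complete contractivity, not positivity).

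Here is the heart of the argument. Suppose, for contradiction, that no single $k$ works, i.e. for every $k$ there exists $\alpha_k$ in the unit ball of $M_{n,d}$ with $\alpha_k^* x \alpha_k + \epsilon\, \alpha_k^*(I_n \otimes e)\alpha_k \notin C_d^{(k)}$. By $d$-minimality of level $k$ and Proposition \ref{Prop: d-min matrix ordering}, there is a ucp map $\psi_k : (V, \{C_m^{(k)}\}) \to M_d$, equivalently an element of the weak-$*$ compact set $\mathrm{UCC}(V,M_d)$ with the additional property of being positive on $C_1^{(k)}$, such that $\psi_k^{(n)}(\alpha_k^* x \alpha_k + \epsilon \alpha_k^*(I_n \otimes e)\alpha_k)$ is not positive semidefinite; since positivity of this matrix is a closed condition and failing it means some unit vector $\xi$ gives a negative value, I would extract $\xi_k \in \mathbb{C}^d$ with $\langle \psi_k^{(n)}(\alpha_k^* x\alpha_k)\xi_k,\xi_k\rangle + \epsilon \langle \psi_k^{(n)}(\alpha_k^*(I_n\otimes e)\alpha_k)\xi_k,\xi_k\rangle < 0$. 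Passing to a subnet, $\alpha_k \to \alpha$ in the (compact) unit ball, $\psi_k \to \psi$ weak-$*$ in $\mathrm{UCC}(V, M_d)$, and $\xi_k \to \xi$ a unit vector. The key point is that the limit map $\psi$ is positive on $C_1^{(k_0)}$ for every fixed $k_0$: since $C_1^{(k)}$ is increasing, $\psi_k$ is positive on $C_1^{(k_0)}$ for all $k \ge k_0$, and positivity is preserved under weak-$*$ limits, so $\psi$ is positive on $\bigcup_{k_0} C_1^{(k_0)}$ and hence on $C_1^{(\infty)}$, making $\psi$ a ucp map out of $(V, \{C_m^{(\infty)}\}, e)$ into $M_d$. (Here unitality of $\psi$ comes from unitality of each $\psi_k$, and $\psi$ is completely positive because its positivity on $C_1^{(\infty)}$ at the scalar level is all that is needed once we know it is unital and completely contractive; alternatively one checks $d$-positivity directly via the same limiting argument applied to $C_d^{(\infty)}$.) Taking the limit in the strict inequality gives $\langle \psi^{(n)}(\alpha^* x \alpha)\xi,\xi\rangle + \epsilon\langle \psi^{(n)}(\alpha^*(I_n\otimes e)\alpha)\xi,\xi\rangle \le 0$. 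But $\alpha^*(I_n \otimes e)\alpha = (I_n\otimes e)$-conjugate is positive with $\psi$-image $\alpha^* \alpha \otimes$ (something close to $I_d$); being a bit careful, $\psi^{(n)}$ applied to $\alpha^*(I_n\otimes e)\alpha$ equals $\alpha^* (I_n \otimes I_d)\alpha = \alpha^*\alpha$ after identifying $\psi(e) = I_d$ via unitality, which is positive semidefinite. Meanwhile $\alpha^* x \alpha \in C_d^{(\infty)}$ by hypothesis, so by construction of $\psi$ as a ucp map on the limit system, $\psi^{(n)}(\alpha^* x \alpha) \ge 0$. Hence the left side is $\ge 0$, and combined with $\le 0$ we get that it equals $0$ — but this is not yet a contradiction, so I would instead run the argument with $\epsilon$ replaced by $\epsilon/2$ from the start and note that $\psi^{(n)}(\alpha^* x \alpha + (\epsilon/2)\alpha^*\alpha) \ge 0$ strictly away from the kernel in a way that contradicts the $< 0$ that survives the limit as $\le -\epsilon/2 \cdot(\text{something}) < 0$; concretely, since only $\epsilon/2$ of the slack is consumed in the contradiction hypothesis while the full $\epsilon$ is available from $x + \epsilon I_n\otimes e$, the limiting inequality reads $\le -(\epsilon/2)\langle \alpha^*\alpha\,\xi,\xi\rangle$, and either $\alpha^*\alpha\,\xi \ne 0$, giving a strict contradiction, or $\alpha\xi = 0$, in which case $\psi^{(n)}(\alpha^* x \alpha)\xi = 0$ too and the inequality reads $0 \le -\epsilon\cdot 0$, forcing us to have chosen $\xi_k$ with $\alpha_k\xi_k$ bounded away from $0$ — which we can arrange by observing that when $\alpha_k \xi_k \to 0$ the offending quantity $\langle \psi_k^{(n)}(\alpha_k^* x \alpha_k)\xi_k,\xi_k\rangle$ also $\to 0$ and cannot stay below the negative threshold. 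I expect the bookkeeping around this degenerate case ($\alpha\xi = 0$) to be the main technical nuisance; the clean way around it is to absorb $\epsilon I_n \otimes e$ into the estimate before taking limits, i.e. show directly that $x + \epsilon I_n \otimes e$ satisfies $\alpha^*(x + \epsilon I_n \otimes e)\alpha \in \mathrm{int}\,C_d^{(\infty)}$-type bounds uniformly, so that the weak-$*$/norm compactness yields a uniform $k$. Once the uniform $k$ is secured, $x + \epsilon I_n \otimes e \in C_n^{(k)} \subseteq C_n^{(\infty)}$ for all $\epsilon$, and since $\{C_n^{(\infty)}\}$ is Archimedean closed (proved in \cite{ARTaPublished2022}) and proper by hypothesis, we conclude $x \in C_n^{(\infty)}$, which is exactly the content of Theorem \ref{thm: d-minimal characterization} and hence $(V,\{C_n^{(\infty)}\},e)$ is $d$-minimal.
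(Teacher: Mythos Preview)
Your core idea — extract a sequence of ucp maps $\psi_k$ into $M_d$ witnessing failure at level $k$, pass to a weak-$*$ limit using compactness of unital completely contractive maps, and argue the limit map is completely positive on the inductive limit ordering — is exactly the paper's approach. The paper's proof is essentially your argument stripped of the $\alpha$'s: it assumes $\varphi^{(n)}(x)\geq 0$ for every ucp $\varphi:(V,\{C_m^{(\infty)}\})\to M_d$, supposes $x+\epsilon I_n\otimes e\notin C_n^{(k)}$ for all $k$, picks $\varphi_k$ ucp at level $k$ with $\varphi_k^{(n)}(x)+\epsilon I_{nd}\not\geq 0$, and takes a weak-$*$ limit $\varphi$ which is then shown (exactly as you argue) to be ucp on the limit ordering, giving the contradiction $\varphi^{(n)}(x)\geq 0$ versus $\varphi^{(n)}(x)+\epsilon I_{nd}\not\geq 0$.

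The complication you encounter is self-inflicted: by first passing to $\alpha^* x\alpha$ and then dualizing, you end up comparing against $\epsilon\,\alpha^*\alpha$ rather than $\epsilon I_{nd}$, which is what creates the degenerate case $\alpha\xi=0$. The paper never introduces $\alpha$'s; it uses the ucp characterization (your Proposition \ref{Prop: d-min matrix ordering}, condition 1) directly at level $n$, so the slack is $\epsilon I_{nd}$ and the limiting inequality is $\langle\varphi^{(n)}(x)\xi,\xi\rangle\leq -\epsilon\|\xi\|^2<0$ on a unit vector, with no degeneracy possible. Your $\epsilon/2$ workaround and the ``$\alpha_k\xi_k$ bounded away from $0$'' patch can be made to work, but they are unnecessary once you drop the $\alpha$'s. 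Also note a recurring slip: since $\alpha_k^* x\alpha_k\in M_d(V)$, the inflation should be $\psi_k^{(d)}$, not $\psi_k^{(n)}$.
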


\begin{proof}
    Let $x \in M_n(V)$ such that $x=x^*$ and suppose that for every unital linear map $\varphi: V \to M_d$ satisfying $\varphi^{(d)}(C_d^{(\infty)}) \subseteq M_{d^2}^+$ we have $\varphi^{(n)}(x) \geq 0$. We claim that $x \in C_n^{(\infty)}$. If this holds, then since $x$ was chosen arbitrarily we may conclude that $(V, \{C_n^{(\infty)}\}, e)$ is $d$-minimal by Proposition \ref{Prop: d-min matrix ordering}.

    To reach a contradiction, suppose that $x \notin C_n^{(\infty)}$. Then there exists $\epsilon > 0$ such that $x + \epsilon I_n \otimes e \notin C_n^{(k)}$ for all $k \in \mathbb{N}$. Because $(V, \{C_n^{(k)}\}, e)$ is $d$-minimal, there exists a unital linear map $\varphi_k: V \to M_d$ such that $\varphi_k^{(d)}(C_d^{(k)}) \subseteq M_{d^2}^+$ but $\varphi_k^{(n)}(x) + \epsilon I_{nd} = \varphi_k^{(n)}(x + \epsilon I_n \otimes e) \notin M_{nd}^+$. Let $\{\varphi_k: V \to M_d\}$ be a sequence of unital linear maps with these properties. Since $C_m^{(1)} \subseteq C_m^{(k)}$ for every $k,m \in \mathbb{N}$ and since $\varphi_k$ is completely positive with respect to the matrix ordering $\{C_m^{(k)}\}$, we conclude that each $\varphi_k$ is completely positive on the operator system $(V, \{C_m^{(1)}\}, e)$. Since the set of unital completely positive maps from this operator system to $M_d$ is weak-$*$ compact, there exists a unital linear map $\varphi: V \to M_d$ which is completely positive on the operator system $(V, \{C_m^{(1)}\}, e)$ and which is a weak-$*$ limit point of the sequence $\{\varphi_k: V \to M_d\}$.

    We claim that $\varphi: V \to M_d$ satisfies $\varphi^{(d)}(C_d^{(\infty)}) \subseteq M_{d^2}^+$. To see this, let $y \in C_d^{(\infty)}$ and let $\delta > 0$. Then there exists $N \in \mathbb{N}$ such that $y + \delta I_d \otimes e \in C_d^{(N)}$. Since $\varphi$ is the weak-$*$ limit of the sequence $\varphi_k$, the sequence $\varphi_k^{(d)}(y)$ converges in $M_d$. Moreover, since the sequence $\{C_d^{(k)}\}$ is nested and since $\varphi_k^{(d)}(C_d^{(k)}) \subseteq M_{d^2}^+$, we have $\varphi_k^{(d)}(y) + \delta I_{d^2} = \varphi_k^{(d)}(x + \delta I_d \otimes e) \in M_{d^2}^+$ for every $k > N$. We conclude that $\varphi^{(d)}(y) + \delta I_{d^2} \in M_{d^2}^+$ for every $\delta > 0$ and hence $\varphi^{(d)}(y) \geq 0$. However, this implies that $\varphi^{(n)}(x) \geq 0$. This is a contradiction since $\varphi_k^{(n)}(x) \to \varphi^{(n)}(x)$ and $\varphi_k^{(n)}(x) + \epsilon I_n \notin M_{nd}^+$ for every $k \in \mathbb{N}$.
\end{proof}

\section{Characterizations} \label{sec: Characterizations}

In this Section, we will prove two characterization theorems relating operator systems to SIC-POVMs and mutually unbiased bases. We begin by reformulating the existence of SIC-POVMs and maximal families of mutually unbiased bases in terms of rank one projections. These reformulations are very well-known and elementary, but we include a brief proof for completeness.

\begin{theorem} \label{Thm: SICs from projections}
Let $d$ be a positive integer, and let $\lambda = \frac{1}{d+1}$ and $\mu = \frac{1}{d}$.
\begin{enumerate}
    \item There exists a SIC-POVM in dimension $d$ if and only if there exist rank-one projections $P_1, P_2, \dots, P_{d^2} \in M_d$ satisfying
        \[ \sum_{i=1}^{d^2} P_i = dI \quad \text{and} \quad P_iP_jP_i = \lambda P_i \text{ for every } i \neq j. \]
    \item There exist $d+1$ mutually unbiased bases in dimension $d$ if and only if there exist projection-valued measures \[ \{P_a^1\}_{a=1}^d, \{P_a^2\}_{a=1}^d, \dots, \{P_a^{d+1}\}_{a=1}^d \in M_d \] satisfying $P_a^xP_b^yP_a^x = \mu P_a^x$ for every $x \neq y$ and any $a,b \in \{1,2,\dots,d\}$.
\end{enumerate}
\end{theorem}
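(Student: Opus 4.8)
The statement is the elementary reformulation of SIC-POVMs and mutually unbiased bases in terms of projections, so the plan is to unwind the definitions in both directions. For part (1), the forward direction starts from a SIC-POVM $F_1,\dots,F_{d^2}$, which by the discussion in the introduction corresponds to equiangular unit vectors $\varphi_1,\dots,\varphi_{d^2}\in\mathbb{C}^d$ with $|\langle\varphi_a,\varphi_b\rangle|^2=\lambda$ for $a\neq b$. Setting $P_a$ to be the rank-one projection onto $\mathbb{C}\varphi_a$, the relation $P_aP_bP_a = |\langle\varphi_a,\varphi_b\rangle|^2 P_a = \lambda P_a$ is immediate from $P_b = \varphi_b\varphi_b^*$. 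For the resolution of identity $\sum_a P_a = dI$, I would compute the operator $T = \sum_a P_a$: it is positive and, using the constant-trace and pairwise-constant-inner-product conditions together with $\operatorname{Tr}(P_a)=1$, one checks $\operatorname{Tr}(T)=d^2$ and $\operatorname{Tr}(T^2)=d^2 + d^2(d^2-1)\lambda = d^2 + d^2(d^2-1)/(d+1)$. A short calculation shows $\operatorname{Tr}(T^2) = \operatorname{Tr}(T)^2/d$, which forces $T$ to be a scalar multiple of the identity (equality in Cauchy–Schwarz for $T$ against $I$), and the trace then pins that scalar to $d$. Alternatively, one can invoke the standard fact from \cite{SICs2004} directly; I would keep the argument self-contained with the trace computation.

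For the converse of part (1), given rank-one projections $P_i$ with $\sum_i P_i = dI$ and $P_iP_jP_i=\lambda P_i$, write $P_i = \varphi_i\varphi_i^*$ for unit vectors $\varphi_i$. Then $|\langle\varphi_i,\varphi_j\rangle|^2 P_i = P_iP_jP_i = \lambda P_i$ gives $|\langle\varphi_i,\varphi_j\rangle|^2 = \lambda$ for $i\neq j$, so the $\varphi_i$ are equiangular, and the operators $F_i := \tfrac{1}{d}P_i$ form a POVM with $\operatorname{Tr}(F_i) = 1/d =: D$ constant and $\operatorname{Tr}(F_iF_j) = \tfrac{1}{d^2}\operatorname{Tr}(P_iP_j) = \tfrac{\lambda}{d^2} =: C$ constant for $i\neq j$; that they span $M_d$ follows since $d^2$ equiangular lines in $\mathbb{C}^d$ must be linearly independent (equivalently, the Gram matrix $G_{ij}=\langle\varphi_i,\varphi_j\rangle$ has $G = (1-\lambda)I + \lambda(\text{rank-}d\text{ thing})$ structure forcing full rank $d^2$ on the operator side — concretely, $\sum_i P_i = dI$ together with the equiangularity forces the $P_i$ to be linearly independent by the same Gram-matrix/trace argument run in reverse).

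Part (2) is handled the same way with $\mu = 1/d$ in place of $\lambda$. Given $d+1$ MUBs $\{\varphi_a^x\}$, the rank-one projections $P_a^x = \varphi_a^x(\varphi_a^x)^*$ satisfy $\sum_{a=1}^d P_a^x = I$ for each $x$ because $\{\varphi_a^x\}_a$ is an orthonormal basis, so $\{P_a^x\}_a$ is a projection-valued measure; and $P_a^xP_b^yP_a^x = |\langle\varphi_a^x,\varphi_b^y\rangle|^2 P_a^x = \mu P_a^x$ for $x\neq y$ directly from the unbiasedness condition. Conversely, given projection-valued measures $\{P_a^x\}_{a=1}^d$ (so $\sum_a P_a^x = I$, $P_a^x P_{a'}^x = \delta_{aa'}P_a^x$, each rank one by a dimension count since $d$ mutually orthogonal nonzero projections in $M_d$ must each be rank one) with $P_a^xP_b^yP_a^x = \mu P_a^x$ for $x\neq y$, choose unit vectors $\varphi_a^x$ spanning the range of $P_a^x$; orthonormality within each $x$ and $|\langle\varphi_a^x,\varphi_b^y\rangle|^2 = \mu = 1/d$ across distinct $x$ give the MUB property. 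The only mild subtlety — the "main obstacle," though it is minor — is the linear-independence/spanning claim in part (1): one must be slightly careful to extract it from the relations rather than assume it, and the clean route is the Gram-matrix computation showing the $d^2\times d^2$ matrix of pairwise traces $\operatorname{Tr}(P_iP_j)$ is invertible, whence the $P_i$ are linearly independent in the $d^2$-dimensional space $M_d$ and therefore span it.
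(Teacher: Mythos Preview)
Your approach is essentially the same as the paper's, with one difference of emphasis and one small oversight worth flagging.

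For part (1), the paper does not give a proof at all: it simply cites \cite{SICs2004} (and the Benedetto--Fickus frame technique) and records the one-line trace computation fixing $\lambda = \tfrac{1}{d+1}$. Your self-contained trace/Cauchy--Schwarz argument for $\sum_i P_i = dI$ and the Gram-matrix argument for linear independence are correct and are exactly what lies behind that citation; including them makes the write-up independent of the reference, which is a net gain.

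For part (2), your argument matches the paper's almost step for step. The one place where you are slightly less careful than the paper is in the converse: you assert ``each rank one by a dimension count since $d$ mutually orthogonal \emph{nonzero} projections in $M_d$ must each be rank one,'' but a projection-valued measure is not required to have all terms nonzero. The paper spends a sentence closing this gap: if some $P_b^y = 0$, then $P_a^x P_b^y P_a^x = 0$, yet the hypothesis says this equals $\mu P_a^x$ with $\mu\neq 0$, forcing $P_a^x = 0$ for every $a$ and every $x\neq y$, contradicting $\sum_a P_a^x = I$. You should insert this observation; once it is in place your dimension count goes through and the rest of your argument is identical to the paper's.
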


\begin{proof}
    Statement 1 is proven in Section II of \cite{SICs2004} using a technique of Benedetto and Fickus \cite{BenedettoFickusFrames2003}. The value $\lambda = \frac{1}{d+1}$ follows from the equations
    \[ (d^2-1)\lambda + 1 = \sum_{a=1}^d \Tr(P_a P_b) = \Tr(d P_b) = d. \]
    
    For statement 2, suppose $\{ \varphi_a^1\}_{a=1}^d, \{\varphi_a^2\}_{a=1}^d, \dots, \{ \varphi_a^{d+1}\}_{a=1}^d$ are mutually unbiased bases for $\mathbb{C}^d$. Let $P_a^x$ be the rank one projection onto the span of $\varphi_a^x$ for each $a \in \{1,2,\dots,d\}$ and $x \in \{1,2,\dots,d+1\}$. Since $\{\varphi_a^x\}_{a=1}^d$ is an orthonormal basis, $\{P_a^x\}_{a=1}^d$ is a projection valued measure. Since $|\langle \varphi_a^x, \varphi_b^y \rangle |^2 = \mu$ for every $x \neq y$, the relation $P_a^xP_b^yP_a^x = \mu P_a^x$ holds for every $x \neq y$ and every $a,b \in \{1,2,\dots,d\}$. The value $\mu = \frac{1}{d}$ can be checked using the equations
    \[ d \mu = \sum_{a=1}^d \Tr(P_a^x P_b^y) = \Tr(P_b^y) = 1. \]

    Conversely, suppose we are given projection-valued measures \[ \{P_a^1\}_{a=1}^d, \{P_a^2\}_{a=1}^d, \dots, \{P_a^{d+1}\}_{a=1}^d \in M_d \] satisfying $P_a^xP_b^yP_a^x = \mu P_a^x$ for every $x \neq y$ and any $a,b \in \{1,2,\dots,d\}$. Let $x,y \in \{1,2,\dots, d+1\}$ with $x \neq y$. Since $P_a^x \neq 0$ for some $a \in \{1,2,\dots,d\}$, we must have $\Tr(P_a^x) \neq 0$. Since $\mu \Tr(P_a^x) = \Tr(P_a^x P_b^y P_a^x)$, it must be that $P_b^y \neq 0$. This holds for every $b \in \{1,2,\dots, d\}$, so each $P_b^y$ is a non-zero projection. Since $\sum_b P_b^y = I$, each $P_b^y$ is rank-one. Since $x$ and $y$ were arbitrary, this holds for every $y$. Thus we may choose unit vectors $\{\varphi_a^x\}$ in the range of $P_a^x$ for every choice of $a$ and $x$. Since $\{P_a^x\}_{a=1}^d$ is a PVM, $\{\varphi_a^x\}_{a=1}^d$ is an orthonormal basis for $\mathbb{C}^d$. Since $P_a^xP_b^yP_a^x = |\langle \varphi_a^x, \varphi_b^y \rangle|^2 P_a^x$ for every $x \neq y$ and every $a,b$, we have $|\langle \varphi_a^x, \varphi_b^y \rangle| = \mu$. Thus the bases $\{ \varphi_a^1\}_{a=1}^d, \{\varphi_a^2\}_{a=1}^d, \dots, \{ \varphi_a^{d+1}\}_{a=1}^d$ are mutually unbiased.
\end{proof}

In the following, we consider the situation where we have projections satisfying relations like those in the previous theorem. However, we do not assume the projections necessarily reside in the algebra of $d \times d$ matrices.

\begin{lemma} \label{lem: trace properties}
Let $d$ be a positive integer, and let $\lambda = \frac{1}{d+1}$ and $\mu = \frac{1}{d}$.
\begin{enumerate}
    \item Suppose $\mathcal{A}$ is a C*-algebra generated by projections $p_1, p_2, \dots, p_{d^2}$ satisfying
        \[ \sum_{i=1}^{d^2} p_i = dI \quad \text{and} \quad p_ip_jp_i = \lambda p_i \text{ for every } i \neq j. \]
    Then for any tracial state $\tau: \mathcal{A} \to \mathbb{C}$ we have
    \[ \tau(p_i) = \frac{1}{d} \quad \text{and} \quad \tau(p_ip_j) = \frac{\lambda}{d} \]
    for any $i, j \in \{1,2,\dots,d^2\}$ with $i \neq j$.

    \item Suppose there exists a C*-algebra $\mathcal{A}$ generated by projections \[ \{p_a^x: a=1,\dots,d; x=1,\dots,d+1\} \] satisfying
        \[ \sum_{a=1}^{d} p_a^z = I \quad \text{and} \quad p_a^xp_b^yp_a^x = \mu p_a^x \]
    for every $a,b \in \{1,2,\dots,d\}$ and every $x,y,z \in \{1, 2, \dots, d+1\}$ such that $x \neq y$. Then for any tracial state $\tau: \mathcal{A} \to \mathbb{C}$, we have
    \[ \tau(p_a^x) = \frac{1}{d} \quad \text{and} \quad \tau(p_a^x p_b^y p_a^x) = \frac{\mu}{d} = \frac{1}{d^2} \]
    for any $a,b \in \{1,2,\dots, d\}$ and $x,y \in \{1,2, \dots, d+1\}$ with $x \neq y$.
\end{enumerate}
    
\end{lemma}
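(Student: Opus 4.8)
The plan is to extract the numerical values of $\tau(p_i)$ and $\tau(p_ip_j)$ from the two defining relations by applying $\tau$ and using its linearity and trace property. For statement 1, first apply $\tau$ to the linear relation $\sum_{i=1}^{d^2} p_i = dI$. By linearity of $\tau$ and the fact that $\tau(I)=1$, this gives $\sum_{i=1}^{d^2}\tau(p_i) = d$. To upgrade this to the individual values $\tau(p_i) = 1/d$, I would next compute $\tau(p_ip_j)$ for a fixed $i$ and sum over $j$. Specifically, multiply the linear relation on the left by $p_i$ to obtain $p_i\sum_{j} p_j = d p_i$, i.e. $p_i^2 + \sum_{j\neq i} p_ip_j = d p_i$, so $\sum_{j\neq i} p_ip_j = (d-1)p_i$ (using $p_i^2=p_i$). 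Apply $\tau$: $\sum_{j\neq i}\tau(p_ip_j) = (d-1)\tau(p_i)$. Now I need a second relation linking $\tau(p_ip_j)$ to $\tau(p_i)$, which comes from the quadratic relation: $\tau(p_ip_jp_i) = \lambda\tau(p_i)$, and by the trace property $\tau(p_ip_jp_i) = \tau(p_i^2 p_j) = \tau(p_ip_j)$. Hence $\tau(p_ip_j) = \lambda\tau(p_i)$ for every $j\neq i$. Substituting into the previous sum, $(d^2-1)\lambda\tau(p_i) = (d-1)\tau(p_i)$; since $(d^2-1)\lambda = (d^2-1)/(d+1) = d-1$, this identity is automatically consistent but does not yet pin down $\tau(p_i)$.

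To close the argument I would instead go back to $\sum_i \tau(p_i) = d$ together with a symmetry observation. The cleanest route: from $\tau(p_ip_j)=\lambda\tau(p_i)$ and, by the trace property applied the other way, $\tau(p_ip_j) = \tau(p_jp_i) = \lambda\tau(p_j)$, so $\tau(p_i) = \tau(p_j)$ for all $i\neq j$. Combined with $\sum_{i=1}^{d^2}\tau(p_i) = d$ and there being $d^2$ terms, this forces $\tau(p_i) = 1/d$ for every $i$, and then $\tau(p_ip_j) = \lambda\tau(p_i) = \lambda/d$, which is the claim. (Note $\tau(p_ip_j) = \tau(p_ip_jp_i)$ is exactly the quantity in the statement, so this also records $\tau(p_ip_j)=\lambda/d$.)

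For statement 2 the structure is the same but cleaner. Fix any $x$; applying $\tau$ to $\sum_{a=1}^d p_a^x = I$ gives $\sum_{a=1}^d \tau(p_a^x) = 1$. For $y\neq x$, the trace property gives $\tau(p_a^x p_b^y p_a^x) = \tau(p_a^x p_b^y)$, which by the quadratic relation equals $\mu\tau(p_a^x)$. By the trace property the other way, $\tau(p_a^x p_b^y) = \tau(p_b^y p_a^x) = \mu\tau(p_b^y)$ (using the relation with the roles of the two bases swapped, valid since $x\neq y$ iff $y\neq x$). Hence $\tau(p_a^x) = \tau(p_b^y)$ whenever $x\neq y$. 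Since any two indices $(a,x)$ and $(b,y)$ can be linked through a third basis when $d+1 \geq 3$ (and directly when $x\neq y$), all the values $\tau(p_a^x)$ are equal; with $\sum_{a=1}^d\tau(p_a^x)=1$ and $d$ terms, each equals $1/d$. Then $\tau(p_a^x p_b^y p_a^x) = \mu\tau(p_a^x) = \mu/d = 1/d^2$, as claimed.

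The only genuine subtlety — the part I would be most careful about — is the step that promotes the sum relations $\sum_i\tau(p_i)=d$ and $\sum_a\tau(p_a^x)=1$ to equality of all the individual traces: one must verify that the equalities $\tau(p_i)=\tau(p_j)$ obtained from the quadratic relation really connect all indices (trivially true here, since the relation holds for every pair $i\neq j$ in statement 1, and in statement 2 every pair of the form $(a,x),(b,y)$ with $x\neq y$ is connected, and for $x=y$ one routes through a third value of the basis index, which exists because there are $d+1\geq 2$ bases and in fact we only need one other basis). Everything else is a routine application of linearity and the trace identity $\tau(ab)=\tau(ba)$; no positivity or faithfulness of $\tau$ is needed beyond $\tau(I)=1$.
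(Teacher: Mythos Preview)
Your argument is correct and follows essentially the same route as the paper: show that all the $\tau(p_i)$ coincide, then use the linear relation $\sum_i p_i = dI$ (resp.\ $\sum_a p_a^x = I$) to pin down the common value, and finally read off $\tau(p_ip_j)=\lambda\tau(p_i)$ from $\tau(p_ip_jp_i)=\tau(p_i^2p_j)$. The only cosmetic difference is that the paper establishes $\tau(p_i)=\tau(p_j)$ by observing that $u=\lambda^{-1/2}p_ip_j$ is a partial isometry with $uu^*=p_i$ and $u^*u=p_j$, whereas you obtain it from $\lambda\tau(p_i)=\tau(p_ip_j)=\tau(p_jp_i)=\lambda\tau(p_j)$; these are the same computation in different clothing.
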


\begin{proof}
    We prove the first statement, leaving the similar proof of the second statement to the reader. Let $\mathcal{A}$ be a C*-algebra as in the statement and let $\tau: \mathcal{A} \to \mathbb{C}$ be a tracial state. Suppose that $i,j \in \{1,2,\dots,d^2\}$ and that $i \neq j$. Let $u = \lambda^{-1/2} p_i p_j$. Then
    \[ uu^* = \lambda^{-1} p_i p_j p_j p_i = \lambda^{-1} p_i p_j p_i = \lambda^{-1} \lambda p_i = p_i \]
    and similarly $u^*u = p_j$. It follows that $u$ is a partial isometry with range projection $p_i$ and source projection $p_j$. Since $\tau$ is tracial, $\tau(p_i) = \tau(uu^*) = \tau(u^*u) = \tau(p_j)$. Hence the value of $\tau(p_a)$ is the same for all $a \in \{ 1,2,\dots,d^2\}$. Since $\tau$ is unital,
    \[ 1 = \tau(I) = \frac{1}{d} \sum_{i=1}^{d^2} \tau(p_i) = d \tau(p_i) \]
    for every $i \in \{1,2,\dots,d^2\}$. Hence $\tau(p_i) = \frac{1}{d}$ for each $i$. Furthermore, for $i \neq j$,
    \[ \tau(p_i p_j) = \tau(p_i^2 p_j) = \tau(p_i p_j p_i) = \lambda \tau(p_i) = \frac{\lambda}{d}. \]
    So $\tau(p_i p_j) = \frac{\lambda}{d}$ for all $i \neq j$.
\end{proof}

\begin{remark}
    \emph{Before moving on to operator systems and away from C*-algebras, we should mention some related C*-algebraic results found in the literature. Conditions 1-4 of \cite{Navascues2012Handbook} correspond to the conditions in part 2 of Lemma \ref{lem: trace properties}. They add an additional Condition 5 specifying relations of higher order products in the algebra generated by the projections $\{P^x_a\}$. If these 5 conditions are satisfied in a C*-algebra, they show that $d+1$ MUBs exist in dimension $d$ by means of semidefinite programming. Similar results, via wreath products, were recently found by Griblings and Polak in \cite{Gribling2024mutuallyunbiased}. For SIC-POVMs, the very recent preprint \cite{FarkasEtAl2024Preprint} considers, in Appendix D, the universal C*-algebra generated by projections satisfying the conditions in Part 1 Lemma \ref{lem: trace properties}. They show that if this Algebra has a representation on a $d$-dimensional Hilbert space, then a SIC-POVM exists in dimension $d$. We will recover this result as a consequence Theorem \ref{Theorem: SIC Characterization} below. However, we emphasize that our results only require operator systems satisfying order-theoretic conditions \textit{a priori}. We thank Daniel McNulty and anonymous referees for pointing out these references to us.}
\end{remark}

We can now provide an operator system characterization for the existence of SIC-POVM.

\begin{theorem} \label{Theorem: SIC Characterization}
    Let $d$ be a positive integer and let $\lambda = \frac{1}{d+1}$. Then there exists a SIC-POVM in dimension $d$ if and only if there exists an operator system $V$ with unit $e$ spanned by elements $p_1, p_2, \dots, p_{d^2}$ such that
        \begin{enumerate}
            \item $V$ is $d$-minimal,
            \item $\sum_{i=1}^{d^2} p_i = de$,
            \item each $p_i$ is an abstract projection, and
            \item the relation $p_i (p_j - \lambda e) p_i = 0$ holds abstractly for every $i \neq j$.
        \end{enumerate}
\end{theorem}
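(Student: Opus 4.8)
The plan is to prove both directions by passing through the $C^*$-envelope. For the forward direction, suppose a SIC-POVM exists in dimension $d$. By Theorem \ref{Thm: SICs from projections}, there are rank-one projections $P_1,\dots,P_{d^2} \in M_d$ with $\sum_i P_i = dI$ and $P_iP_jP_i = \lambda P_i$. Let $V = \mathrm{span}\{I, P_1,\dots,P_{d^2}\} \subseteq M_d$, viewed as a concrete operator system with its inherited matrix ordering. Then conditions (2), (3), (4) are immediate: (2) is the linear relation among the $P_i$; (3) holds because each $P_i$ is literally a projection in $M_d$, hence (via Theorem \ref{thm: projection characterizations}) an abstract projection in $V$; and (4) follows from $P_i(P_j - \lambda I)P_i = 0$ in $M_d$ together with Lemma \ref{lem: fundamental projection} applied in both signs. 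For (1), I would invoke the standard fact that any operator system that concretely sits inside $M_d$ is $d$-minimal: if $\varphi^{(n)}(x)\geq 0$ for all ucp $\varphi\colon V\to M_d$, apply this to the inclusion $V\hookrightarrow M_d$ itself (which is ucp) to get $x\in C_n$. (If the paper insists on $p_i$ being the abstract generators rather than the concrete $P_i$, one simply transports everything along the complete order isomorphism.)

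For the converse, suppose $V$ is a $d$-minimal operator system with unit $e$, spanned by abstract projections $p_1,\dots,p_{d^2}$ satisfying (2) and (4). Let $i\colon V \to C^*_e(V)$ be the embedding into the $C^*$-envelope. By Theorem \ref{thm: projection characterizations}, each $i(p_a)$ is a genuine projection in $C^*_e(V)$; write $P_a := i(p_a)$ and $I := i(e)$. Applying the Proposition on abstract relations to condition (4), we get $P_a(P_b - \lambda I)P_a = 0$, i.e. $P_aP_bP_a = \lambda P_a$ for $a\neq b$; and applying $i$ to (2) gives $\sum_a P_a = dI$. Thus $\mathcal{A} := C^*(P_1,\dots,P_{d^2}) = C^*_e(V)$ is a $C^*$-algebra generated by projections of exactly the type in Lemma \ref{lem: trace properties}(1). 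Since $V$ is $d$-minimal, Theorem \ref{thm: C*-envelope is d-minimal} gives a faithful $*$-representation of $C^*_e(V)$ onto a direct sum $\bigoplus_k M_{d_k}$ with each $d_k \leq d$; fix one irreducible summand, giving a $*$-homomorphism $\pi\colon \mathcal{A} \to M_{d_0}$ with $d_0 \leq d$. Set $Q_a := \pi(P_a)$. These are projections in $M_{d_0}$ with $Q_aQ_bQ_a = \lambda Q_a$ and $\sum_a Q_a = dI_{d_0}$.

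The remaining task is to extract a SIC-POVM in dimension $d$ from the $Q_a$, and this is the step I expect to be the main obstacle — one must rule out degenerate summands and show $d_0 = d$ with the $Q_a$ rank one. Here is how I would argue. First, normalize the trace on $M_{d_0}$ so $\tau(I_{d_0}) = 1$; this is a tracial state on $\mathcal{A}$ (pulled back), so Lemma \ref{lem: trace properties}(1) applies and gives $\tau(Q_a) = \frac{1}{d}$ for every $a$. Hence $\mathrm{rank}(Q_a) = \frac{d_0}{d}$, which forces $d \mid d_0$; combined with $d_0 \leq d$ this gives $d_0 = d$ and $\mathrm{rank}(Q_a) = 1$ for all $a$. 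So each $Q_a$ is a rank-one projection in $M_d$ with $\sum_a Q_a = dI$ and $Q_aQ_bQ_a = \lambda Q_a$, and Theorem \ref{Thm: SICs from projections}(1) (read in reverse) yields a SIC-POVM in dimension $d$ — concretely, pick a unit vector $\psi_a$ in the range of each $Q_a$ and check $|\langle \psi_a,\psi_b\rangle|^2 = \lambda$ from the relation. One subtlety to address carefully: an a priori irreducible summand could be $d_0 = 0$ or fail to carry all the relations faithfully, but since $\pi$ is a $*$-homomorphism the relations $\sum Q_a = dI_{d_0}$ and $Q_aQ_bQ_a = \lambda Q_a$ automatically transfer, and $d_0 = 0$ is impossible because then $dI_{d_0} = 0$ is absurd (or simply note $\tau$ exists). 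The genuinely delicate point is that $C^*_e(V)$ need not be all of $M_d$ — it could a priori be a proper direct sum of smaller matrix algebras — which is exactly why the trace/rank counting argument above is essential rather than cosmetic.
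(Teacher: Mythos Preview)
Your proposal is correct and follows essentially the same route as the paper's proof: for the forward direction, take $V\subseteq M_d$ spanned by the SIC projections (the paper notes this span is all of $M_d$ and that $M_d$ is $d$-minimal); for the converse, pass to $C^*_e(V)$, pick an irreducible representation into $M_k$ with $k\leq d$ via Theorem~\ref{thm: C*-envelope is d-minimal}, and use the tracial state on $M_k$ together with Lemma~\ref{lem: trace properties} to force $k=d$ and each image to be rank one. Your extra remarks about $d_0=0$ and faithfulness are not needed (any irreducible representation of a unital $C^*$-algebra is nonzero and relations transfer under $*$-homomorphisms), but they do no harm.
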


\begin{proof}
    First, suppose that there exists a SIC-POVM in dimension $d$. By Theorem \ref{Thm: SICs from projections}, there exist projections $P_1, \dots, P_{d^2} \in M_d$ satisfying $P_i P_j P_i = \lambda P_i$ for every $i \neq j$. Since $M_d$ is $d$-minimal and since the projections $P_1, \dots, P_{d^2}$ span $M_d$, we see that $V=M_d$ satisfies the required properties when viewed as an operator system. It remains to consider the other direction of the statement.

    Suppose $V$ is as stated in the Theorem. Let $\mathcal{A} = C^*_e(V)$. Then $\mathcal{A}$ is $d$-minimal by Theorem \ref{thm: C*-envelope is d-minimal}. Let $\pi: \mathcal{A} \to M_k$ be an irreducible representation for $\mathcal{A}$, so that $\pi(\mathcal{A}) = M_k$. Then $k \leq d$. We will show that $k = d$ and that the projections $P_i := \pi(p_i)$ satisfy the conditions of statement 2 in Theorem \ref{Thm: SICs from projections}.

    Since each $p_i$ is an abstract projection and since $p_i (p_j - \lambda e) p_i = 0$ abstractly for $i \neq j$, we have $P_i=P_i^*=P_i^2$ and $P_i P_j P_i = \lambda P_i$ for every $i \neq j$ in $\mathcal{A}$. Hence these relations hold in $\pi(\mathcal{A}) = M_k$ as well. Let $\tau$ denote the unique tracial state on $\pi(\mathcal{A})$, i.e. $\tau(x) = \frac{1}{k} \Tr(x)$ for every $x \in M_k$. By Lemma \ref{lem: trace properties}, we must have $\tau(p_i) = \frac{1}{d}$ and $\tau(p_i p_j) = \frac{\lambda}{d}$ for every $i\neq j$. Since the rank of $p_i$ is $k \tau(p_i)$, it must be that $k/d$ is an integer. Since $k \leq d$, we conclude $k=d$ and each $p_i$ is rank one.
\end{proof}

A similar statement holds for mutually unbiased bases. We omit the proof, since it is similar to the proof the preceding theorem.

\begin{theorem} \label{Theorem: MUB Characterization}
    Let $d$ be a positive integer and let $\mu = \frac{1}{d}$. Then there exist $d+1$ MUBs in dimension $d$ if and only if there exists an operator system $W$ with unit $e$ spanned by elements \[ \{p_a^x: a=1,\dots,d; x=1,\dots,d+1\} \] such that
        \begin{enumerate}
            \item $W$ is $d$-minimal,
            \item $\sum_{a=1}^d p_a^x = e$ for every $x$,
            \item each $p_a^x$ is an abstract projection, and
            \item the relation $p_a^x (p_b^y - \mu e) p_a^x = 0$ holds abstractly for every $x \neq y$.
        \end{enumerate}
\end{theorem}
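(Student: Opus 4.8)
The plan is to follow the proof of Theorem~\ref{Theorem: SIC Characterization} almost verbatim, with the single family $p_1,\dots,p_{d^2}$ summing to $de$ replaced by the $d+1$ families $\{p_a^x\}_{a=1}^d$ each summing to $e$, and with part~2 of Theorem~\ref{Thm: SICs from projections} and part~2 of Lemma~\ref{lem: trace properties} used in place of their respective first parts.

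For the forward direction, assume $d+1$ mutually unbiased bases exist in dimension $d$. Part~2 of Theorem~\ref{Thm: SICs from projections} provides projection-valued measures $\{P_a^x\}_{a=1}^d \subseteq M_d$, $x=1,\dots,d+1$, with $P_a^x P_b^y P_a^x = \mu P_a^x$ for $x \neq y$. I would take $W = M_d$ with its usual operator system structure and $p_a^x = P_a^x$. The one point needing verification is that these projections span $M_d$, so that $W = M_d$ is genuinely the operator system they span: for fixed $x$ the Hermitian traceless elements $P_a^x - \tfrac1d e$ ($a=1,\dots,d$) sum to zero and are otherwise linearly independent, hence span a $(d-1)$-dimensional subspace of the traceless matrices; and for $x \neq y$ one computes $\Tr\!\big((P_a^x - \tfrac1d e)(P_b^y - \tfrac1d e)\big) = \Tr(P_a^x P_b^y) - \tfrac1d = 0$, using $\Tr(P_a^x P_b^y) = \mu\,\Tr(P_a^x) = \tfrac1d$. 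Thus the $d+1$ subspaces arising from distinct $x$ are mutually Hilbert--Schmidt orthogonal, so their complex span has dimension $(d+1)(d-1) = d^2-1$ and exhausts the traceless part of $M_d$; together with $e = \sum_a P_a^x$ this shows the $P_a^x$ span $M_d$. Then $W = M_d$ is $d$-minimal because $\mathrm{id}\colon M_d \to M_d$ is unital and completely positive, each $p_a^x$ is an abstract projection since it is literally a projection, condition~2 is the statement $\sum_a P_a^x = I$, and since $P_a^x(P_b^y - \mu e)P_a^x = 0$ exactly, Lemma~\ref{lem: fundamental projection} gives that this relation holds abstractly.

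For the converse, assume $W$ has the four listed properties and set $\mathcal{A} = C^*_e(W)$ with embedding $i\colon W \to \mathcal{A}$. By Theorem~\ref{thm: C*-envelope is d-minimal}, $\mathcal{A}$ is $d$-minimal, so every irreducible representation of $\mathcal{A}$ has dimension at most $d$; fix one, $\pi\colon \mathcal{A} \to M_k$ with $\pi(\mathcal{A}) = M_k$ and $k \leq d$. Each $p_a^x$ being an abstract projection, $i(p_a^x)$ is a projection in $\mathcal{A}$ by Theorem~\ref{thm: projection characterizations}; and since $p_a^x(p_b^y - \mu e)p_a^x = 0$ abstractly, the proposition on abstract relations yields $i(p_a^x)\big(i(p_b^y) - \mu I\big)i(p_a^x) = 0$ in $\mathcal{A}$. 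Applying $\pi$, the projections $P_a^x := \pi(i(p_a^x)) \in M_k$ satisfy $\sum_a P_a^x = I$ and $P_a^x P_b^y P_a^x = \mu P_a^x$ for $x \neq y$. Let $\tau$ be the normalized trace on $M_k$; then $\tau \circ \pi$ is a tracial state on $\mathcal{A}$, so part~2 of Lemma~\ref{lem: trace properties} gives $\tau(P_a^x) = \tfrac1d$, i.e.\ $\operatorname{rank}(P_a^x) = \Tr(P_a^x) = k/d$. In particular each $P_a^x$ is nonzero, so $k/d \geq 1$, forcing $k = d$ and every $P_a^x$ to be rank one. Thus for each $x$ the family $\{P_a^x\}_{a=1}^d$ is a projection-valued measure of rank-one projections on $\mathbb{C}^d$ with $P_a^x P_b^y P_a^x = \mu P_a^x$ for $x \neq y$, and part~2 of Theorem~\ref{Thm: SICs from projections} produces $d+1$ mutually unbiased bases in dimension $d$.

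I expect the only step that is not a direct transcription of the SIC argument to be the spanning claim in the forward direction, which relies on the (standard, but here best made explicit) fact that a complete set of $d+1$ mutually unbiased bases decomposes the traceless part of $M_d$ into $d+1$ mutually orthogonal $(d-1)$-dimensional subspaces; the Hilbert--Schmidt orthogonality computation together with the count $(d+1)(d-1) = d^2-1$ is the piece worth spelling out. The rank count in the converse also warrants a line of care, since one must note that $\tau(P_a^x) = 1/d$ forces $P_a^x \neq 0$ before concluding $k \geq d$; beyond these, everything follows from Lemma~\ref{lem: fundamental projection}, Lemma~\ref{lem: trace properties}, Theorems~\ref{thm: projection characterizations}, \ref{thm: C*-envelope is d-minimal}, and \ref{Thm: SICs from projections}, and the proposition on abstract relations.
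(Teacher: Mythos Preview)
Your proof is correct and follows essentially the same approach the paper intends (the paper omits the proof, stating it is similar to that of Theorem~\ref{Theorem: SIC Characterization}). The one genuinely new ingredient beyond a direct transcription of the SIC argument is the spanning claim in the forward direction: whereas spanning $M_d$ is part of the \emph{definition} of a SIC-POVM, for MUBs it must be verified, and your Hilbert--Schmidt orthogonality computation together with the dimension count $(d+1)(d-1)=d^2-1$ is exactly the standard way to do this.
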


\begin{remark} \emph{In the proof of Theorem \ref{Theorem: SIC Characterization}, we only use $d$-minimality to show that the range of an irreducible representation of the C*-envelope is $M_d$. It could be the case that in any C*-algebra generated by projections satisfying the relations $\sum p_i = dI$ and $p_i p_j p_i = \lambda p_i$ the irreducible representations have range $M_d$. If that is the case, then the assumption of $d$-minimality can be dropped. However, we do not know whether or not this is the case. The analogous remark can be made in the context of Theorem \ref{Theorem: MUB Characterization}. Therefore we pose the following questions:} \end{remark}

\begin{question}
    Let $\mathcal{A}$ be a C*-algebra. Suppose $p_1, p_2, \dots, p_{d^2} \in \mathcal{A}$ are projections satisfying $\sum p_i = dI$ and $p_i p_j p_i = \frac{1}{d+1} p_i$ for all $i \neq j$. If $\pi: \mathcal{A} \to B(H)$ is irreducible, must $\dim(H) \leq d$?
\end{question}

\begin{question}
    Let $\mathcal{A}$ be a C*-algebra. Suppose $\{p_a^x: a=1,\dots,d\} \subseteq \mathcal{A}$ is a projection-valued measure for each $x=1,2,\dots,d+1$, and that $p_a^x p_b^y p_a^x = \frac{1}{d} p_a^x$ for all $x \neq y$. If $\pi: \mathcal{A} \to B(H)$ is irreducible, must $\dim(H) \leq d$?
\end{question}

\section{Constructions} \label{sec: constructions}

In the previous section, we showed that the existence of a SIC-POVM (or $d+1$ mutually unbiased bases) in dimension $d$ is equivalent to the existence of an operator system satisfying certain properties. In this section, we demonstrate a method for constructing operator systems with the desired properties. This method will always produce a matrix ordered vector space. However, the matrix ordering may not be proper -- i.e. its cones may include members which are both positive and negative. The construction ``succeeds'' if it produces an operator system, i.e. if the matrix ordering is proper. Otherwise, the construction ``fails.'' We will show that the failure or success of this construction with some initial data is equivalent to the existence of a SIC-POVM or $d+1$ mutually unbiased bases in dimension $d$. We will also show that the first step of the construction can be carried out successfully for SIC-POVMs, leaving the similar construction for mutually unbiased bases to the interested reader.

Our constructions are based on those developed in \cite{ARTaPublished2022} and \cite{AR2023Preprint}. The initial data of the construction is an abstract operator system $(V,\{C_n\},e)$ and a finite set of positive contractions $\{p_1,p_2,\dots,p_n\} \subseteq V$. The construction yields an inductive sequence of matrix orderings $$\{C_n^{(1)}\}, \{C_n^{(2)}\}, \{C_n^{(3)}\}, \dots $$ (see Subsection \ref{subsec: Inductive limits}). The inductive limit matrix ordering, $\{C_N^{(\infty)}\}$ may or may not be proper. If it is proper, then $(V, \{C_n^{(\infty)}\}, e)$ will be an operator system satisfying the conditions of Theorem \ref{Theorem: SIC Characterization} or Theorem \ref{Theorem: MUB Characterization}, depending on the initial operator system $V$ considered. The details for SIC-POVMs and mutually unbiased bases are similar, so we give a detailed description for SIC-POVMs and then summarize the relevant differences for mutually unbiased bases.

For the remainder of this section, fix $d \in \mathbb{N}$ and let $\lambda = \frac{1}{d+1}$. Let $(V, \{C_n\}, e)$ be a $d^2$-dimensional operator system spanned by positive elements $p_1, p_2, \dots, p_{d^2} \in V$. We call $(V, \{C_n\}, e)$ a \textbf{SIC-system} if it satisfies the following conditions:
\begin{itemize}
    \item $\sum_i p_i = de$
    \item $p_i (p_j - \lambda e) p_i = 0$ abstractly  whenever $i \neq j$ (see Subsection \ref{subsec: Abstract relations}).
\end{itemize}
In this situation, we call the spanning vectors $\{p_1, p_2, \dots, p_n\}$ a \textbf{SIC-basis}. A SIC-system will serve as the initial data for our construction. Given a SIC-system $(V, \{C_n\}, e)$, we define an inductive limit of matrix orderings as follows. First, extend the basis $\{p_1, \dots, p_{d^2}\}$ to an infinite sequence $\{p_i\}_{i=1}^{\infty}$ by setting $p_j = p_k$ whenever $j = k + Nd^2$ for some $N \in \mathbb{N}$ (so that $p_{d^2+1} = p_1, p_{d^2+2}=p_2$ and so on). For every $n \in \mathbb{N}$, let $D_n^{(0)} := C_n$. We inductively define matrix orderings $\{D_n^{(k)}\}$: for each $k=0,1,2,\dots$, if $k=2j$ is even, set $D_n^{(k+1)} = D_n^{(k)}(p_j)$, and if $k$ is odd, set $D_n^{(k+1)} = (D_n^{(k)})^{d-\text{min}}$. Then $\{D_n^{(0)}\}_n, \{D_n^{(1)}\}_n, \{D_n^{(2)}\}_n, \dots$ is a nested sequence of matrix orderings. Define $\widetilde{C}_n := D_n^{(\infty)}$, where $\{D_n^{(\infty)}\}$ is the inductive limit of the nested sequence $\{D_n^{(k)}\}$.

\begin{proposition} \label{prop: Tilde C satisfies SIC Thm}
    Suppose that $(V, \{C_n\}, e)$ is a SIC-system, and that $\{\widetilde{C}_n\}$ is a proper matrix ordering. Then $(V, \{\widetilde{C}_n\}, e)$ satisfies the conditions of Theorem \ref{Theorem: SIC Characterization}.
\end{proposition}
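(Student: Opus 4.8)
The plan is to verify the four conditions of Theorem \ref{Theorem: SIC Characterization} one at a time, using the structural results about inductive limits from Subsection \ref{subsec: Inductive limits} together with the bookkeeping built into the construction of $\{D_n^{(k)}\}$. The hypothesis that $\{\widetilde C_n\}$ is proper is what makes all of these results applicable, so it will be invoked repeatedly. First I would observe that by construction each $\{D_n^{(k)}\}$ is a matrix ordering for which $(V,\{D_n^{(k)}\},e)$ is an operator system (the order unit stays Archimedean because at each step we either pass to $D_n^{(k)}(p_j)$, whose Archimedean-ness is part of Theorem \ref{thm: projection characterizations} together with Remark \ref{rmk: Archimedean closure} once properness is known, or to the $d$-minimalization $(D_n^{(k)})^{d\text{-min}}$, which is an operator system ordering by Proposition \ref{Prop: d-min matrix ordering}); and the sequence is nested, since $C_n\subseteq C_n(p)$ for any abstract-projection-candidate $p$ and $C_n\subseteq (C_n)^{d\text{-min}}$ always. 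Hence $\{D_n^{(k)}\}$ is a genuine inductive sequence of matrix orderings and $\{\widetilde C_n\}=\{D_n^{(\infty)}\}$ is its inductive limit, which is a matrix ordering and Archimedean-closed by the results of \cite{ARTaPublished2022}; being proper by hypothesis, $(V,\{\widetilde C_n\},e)$ is an operator system.

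Next I would check $d$-minimality (condition 1). The key point is that infinitely many of the steps in the construction are $d$-minimalizations: for every odd $k$, $(V,\{D_n^{(k+1)}\},e)$ is $d$-minimal by Proposition \ref{Prop: d-min matrix ordering}. So the inductive sequence $\{D_n^{(k)}\}$ has a cofinal subsequence of $d$-minimal operator systems, and since the inductive limit over a sequence equals the inductive limit over any cofinal subsequence, Theorem \ref{thm: inductive limits respect k-minimality} applies (using that $\{\widetilde C_n\}$ is proper) to conclude $(V,\{\widetilde C_n\},e)$ is $d$-minimal. For condition 3, that each $p_i$ is an abstract projection: here I would use that the construction cycles through every index $i\in\{1,\dots,d^2\}$ infinitely often, so for each fixed $i$ there are infinitely many even stages $k=2j$ with $p_j=p_i$ at which $D_n^{(k+1)}=D_n^{(k)}(p_i)$. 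At each such stage, by the ``moreover'' part of Theorem \ref{thm: projection characterizations} together with the fact that the matrix ordering is proper (which it is, since each $D_n^{(k)}\subseteq \widetilde C_n$ and $\widetilde C_n\cap -\widetilde C_n=\{0\}$ forces $D_n^{(k)}\cap -D_n^{(k)}=\{0\}$), $p_i$ is an abstract projection in $(V,\{D_n^{(k+1)}\},e)$, and moreover this property is ``locked in'' — once $D_n^{(k+1)}=D_n^{(k+1)}(p_i)$, any larger ordering in the sequence that still has $p_i$ as an abstract projection continues to satisfy $C_n=C_n(p_i)$; the relevant stability is exactly what lets Theorem \ref{thm: inductive limits respect abstract projections} apply to the cofinal subsequence of stages where $p_i$ is an abstract projection, giving that $p_i$ is an abstract projection in $(V,\{\widetilde C_n\},e)$, again using properness of $\{\widetilde C_n\}$.

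Conditions 2 and 4 are the easy ones and I would dispatch them last. The linear relation $\sum_i p_i = de$ is a statement purely about the underlying $*$-vector space $V$ and its distinguished element $e$; it holds in the initial SIC-system and is untouched by the construction, which only enlarges cones. For condition 4, the relation $p_i(p_j-\lambda e)p_i=0$ holds abstractly in the initial system $(V,\{C_n\},e)$, meaning $\pm\big(p_j-\lambda e+\epsilon p_i + t p_i^\perp\big)\in C_n$ for suitable $t$; since $C_n\subseteq \widetilde C_n$ for all $n$, the same witnesses show the relation holds abstractly in $(V,\{\widetilde C_n\},e)$. (Abstract relations are preserved under enlarging the cones — this is immediate from the definition in Subsection \ref{subsec: Abstract relations}.) I expect the main obstacle to be the careful handling of condition 3: one must make precise the claim that being an abstract projection is inherited along the inductive limit, which requires checking that $C_n(p_i)$ computed with respect to the limit ordering agrees with $\widetilde C_n$, and this is where the cofinality argument and the hypothesis of properness do the real work — essentially repackaging Theorem \ref{thm: inductive limits respect abstract projections} and Proposition 4.3 of \cite{AR2023Preprint} for the specific interleaved construction used here.
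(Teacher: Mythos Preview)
Your proposal is correct and follows essentially the same approach as the paper: both arguments pass to cofinal subsequences of the inductive system and invoke Theorem~\ref{thm: inductive limits respect k-minimality} for $d$-minimality and Theorem~\ref{thm: inductive limits respect abstract projections} (via Theorem~\ref{thm: projection characterizations}) for each $p_i$ being an abstract projection. You are simply more explicit than the paper about conditions~2 and~4 (which the paper silently treats as obvious) and about properness at the intermediate stages; one small slip is that in condition~4 the $\pm$ should apply only to $p_j-\lambda e$, not to the whole expression $p_j-\lambda e+\epsilon p_i+t p_i^{\perp}$.
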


\begin{proof}
    Assume $\{\widetilde{C}_n\}$ is proper. Since $\{D_n^{(0)}\}, \{D_n^{(1)}\}, \dots$ is nested, the same is true of any subsequence of matrix orderings, and the inductive limit of any subsequence is again $\{\widetilde{C}_n\}$. Thus $\{\widetilde{C}_n\}$ is also equal to the inductive limit of the nested sequence $\{D_n^{(2)}\}, \{D_n^{(4)}\}, \dots$ and since each of these matrix orderings is $d$-minimal, its inductive limit $\{\widetilde{C}_n\}$ is $d$-minimal by Theorem \ref{thm: inductive limits respect k-minimality}. Similarly, we see that each $p_i$ is an abstract projection in $(V, \{\widetilde{C}_n\}, e)$ by Theorem \ref{thm: inductive limits respect abstract projections} and Theorem \ref{thm: projection characterizations}, since $\{\widetilde{C}_n\}$ is the inductive limit of the sequence $\{D_n^{(2i+1)}\}, \{D_n^{(2i+2d^2+1)}\}, \{D_n^{(2i+4d^2+1)}\}, \dots$ and since $D_n^{(2i+2Nd^2+1)} = D_n^{(2i+2Nd^2)}(p_i)$ for every $N \in \mathbb{N}$. So the conditions of Theorem \ref{Theorem: SIC Characterization} are satisfied.
\end{proof}

\begin{remark}
    \emph{Leveraging Theorem \ref{thm: d-minimal characterization}, the sequence of matrix orderings in the construction of $\{\widetilde{C}_n\}$ can be re-written entirely in terms of their $d\textsuperscript{th}$ cone, i.e. we can consider instead the sequence $D_d^{(k)}$ and its limit $\widetilde{C}_d$. This approach is used in a similar construction described in Section 5 of \cite{AR2023Preprint}. Thus, our construction is inherently ``finite-dimensional,'' since all the required data is found in the sequence of cones $D_d^{(k)} \subseteq M_d(S)$ and $\dim(M_d(S)) < \infty$. For brevity, we omit the details of this approach here.}
\end{remark}

We now wish to show that when a SIC-POVM exists in dimension $d$, then the construction described above is ``successful,'' meaning that, for a reasonable choice of initial SIC-system $(V,\{C_n\},e)$, the cone $\{\widetilde{C}_n\}$ will be proper. We first record some helpful results.

\begin{lemma} \label{lem: extend projection maps}
    Suppose that $(V, \{C_n\}, e)$ and $(W, \{E_n\}, f)$ are operator systems, $q \in W$ is an abstract projection, and $\pi: V \to W$ is a ucp map such that $\pi(p)=q$. Then $\pi$ is also completely positive respect to the matrix ordering $\{C_n(p)\}$ on $V$.
\end{lemma}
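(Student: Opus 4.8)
The plan is to unravel the definition of the cone $C_n(p)$ from Theorem \ref{thm: projection characterizations}, push the defining abstract inequality through $\pi$, and then use that $q$ being an abstract projection forces the analogous cone $E_n(q)$ in $W$ to coincide with $E_n$.

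Concretely, fix $n\in\mathbb{N}$ and $x\in C_n(p)$. Unwinding the definition, this means that for every $\epsilon>0$ there is a $t>0$ with
\[ \begin{pmatrix} x & x \\ x & x\end{pmatrix} + \epsilon\,(p\oplus p^{\perp})\otimes I_n + t\,(p^{\perp}\oplus p)\otimes I_n \in C_{2n}, \]
where $p^{\perp}=e-p$; here we have used that the unit of $M_{2n}(V)$ is $I_{2n}\otimes e$ and that $I_2\otimes e - (p\oplus p^{\perp}) = p^{\perp}\oplus p$. Now apply $\pi^{(2n)}$, which is positive because $\pi$ is ucp. Since $\pi$ is unital we have $\pi(e)=f$ and hence $\pi(p^{\perp}) = f-q = q^{\perp}$; together with $\pi(p)=q$ this yields
\[ \begin{pmatrix} \pi^{(n)}(x) & \pi^{(n)}(x) \\ \pi^{(n)}(x) & \pi^{(n)}(x)\end{pmatrix} + \epsilon\,(q\oplus q^{\perp})\otimes I_n + t\,(q^{\perp}\oplus q)\otimes I_n \in E_{2n}. \]
As $\epsilon>0$ was arbitrary (with a corresponding $t$), this is precisely the statement that $\pi^{(n)}(x)\in E_n(q)$ in the notation of Theorem \ref{thm: projection characterizations}.

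To finish, recall that $q$ is assumed to be an abstract projection in $W$, so by the ``moreover'' clause of Theorem \ref{thm: projection characterizations} we have $E_n(q)=E_n$ for every $n$. Hence $\pi^{(n)}(x)\in E_n$, and since $n$ and $x\in C_n(p)$ were arbitrary, $\pi$ is completely positive as a map $(V,\{C_n(p)\},e)\to(W,\{E_n\},f)$. I do not expect any real obstacle here: the only care needed is in tracking the $2\times 2$ operator-matrix bookkeeping, in noting that $\pi$ need only be $2n$-positive for this $n$, and in observing that it is the unitality of $\pi$ that makes $\pi(p^{\perp})$ land on the genuine complement $q^{\perp}$ of $q$, which is exactly what identifies the image with the defining inequality for $E_n(q)$.
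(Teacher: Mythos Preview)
Your proof is correct and follows essentially the same approach as the paper: unwind the definition of $C_n(p)$, apply $\pi^{(2n)}$ using unitality to send $p,p^{\perp}$ to $q,q^{\perp}$, and then invoke the characterization of abstract projections to identify $E_n(q)$ with $E_n$. The only difference is that you spell out the role of unitality in obtaining $\pi(p^{\perp})=q^{\perp}$ a bit more explicitly than the paper does.
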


\begin{proof}
    Let $x \in C_n(p)$. Then for every $\epsilon > 0$ there exists $t > 0$ such that 
    \[ y := \begin{pmatrix} x & x \\ x & x \end{pmatrix} + \epsilon (p \oplus p^{\perp}) \otimes I_n + t (p^{\perp} \oplus p) \otimes I_n \in C_{2n} \]
    by the definition of $C_n(p)$. Since $\pi$ is completely positive, 
    \[ \pi^{(2n)}(y) = \begin{pmatrix} \pi^{(n)}(x) & \pi^{(n)}(x) \\ \pi^{(n)}(x) & \pi^{(n)}(x) \end{pmatrix}  + \epsilon I_n \otimes (q \oplus q^{\perp}) + t I_n \otimes (q^{\perp} \oplus q) \in E_{2n}. \]
    It follows that $\pi^{(n)}(x) \in E_n(q)$. But $q$ is a projection, so $E_n(q)=E_n$. Hence $\pi^{(n)}(x) \in E_n$. Since $x \in C_n(p)$ was arbitrary, $\pi$ is completely positive with respect to $\{C_n(p)\}$.
\end{proof}

\begin{lemma} \label{lem: extend d-minimal maps}
    Suppose that $(V, \{C_n\}, e)$ and $(W, \{E_n\}, f)$ are operator systems, $\pi:V \to W$ is ucp and $W$ is $d$-minimal. Then $\pi$ is ucp with respect to the matrix ordering $\{(C_n)^{d-\text{min}}\}$ on $V$.
\end{lemma}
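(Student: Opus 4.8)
The plan is to reduce the claim to the intrinsic characterization of $d$-minimality given in Theorem \ref{thm: d-minimal characterization}. Let $x \in (C_n)^{d-\text{min}}$; the goal is to show $\pi^{(n)}(x) \in E_n$. Since $W$ is $d$-minimal, Theorem \ref{thm: d-minimal characterization} says it suffices to prove that $\alpha^* \pi^{(n)}(x) \alpha \in E_d$ for every $\alpha \in M_{n,d}$.

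The key observation is that conjugation by a scalar matrix commutes with the inflation of a linear map: because $\pi^{(n)}$ acts entrywise and $\alpha$ has scalar entries, $\alpha^* \pi^{(n)}(x) \alpha = \pi^{(d)}(\alpha^* x \alpha)$. Now I invoke the characterization of $(C_n)^{d-\text{min}}$ from Proposition \ref{Prop: d-min matrix ordering}: since $x \in (C_n)^{d-\text{min}}$, condition (2) of that proposition gives $\alpha^* x \alpha \in C_d$. Since $\pi$ is ucp, it is in particular $d$-positive, so $\pi^{(d)}(\alpha^* x \alpha) \in E_d$. Combining, $\alpha^* \pi^{(n)}(x) \alpha \in E_d$ for every $\alpha \in M_{n,d}$, and hence $\pi^{(n)}(x) \in E_n$ by Theorem \ref{thm: d-minimal characterization}. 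As $n$ and $x$ were arbitrary, $\pi$ is completely positive with respect to $\{(C_n)^{d-\text{min}}\}$, and it is unital since it is unital on $(V,\{C_n\},e)$ and the unit is unchanged.

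I do not anticipate a genuine obstacle here; the proof is a short bookkeeping argument. The only point requiring a moment of care is the identity $\alpha^* \pi^{(n)}(x)\alpha = \pi^{(d)}(\alpha^* x\alpha)$, which should be stated explicitly since it is what lets the $d$-positivity of $\pi$ interact with the $M_{n,d}$-conjugation that defines the $d$-minimal cone. Everything else is a direct appeal to results already established in the preliminaries.
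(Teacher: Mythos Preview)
Your proof is correct and follows essentially the same idea as the paper's: both arguments reduce complete positivity into a $d$-minimal target to $d$-positivity, and then use that $(C_k)^{d-\text{min}} = C_k$ for $k \leq d$. The paper invokes Xhabli's result (Theorem 3.7 of \cite{xhabli2012super}) to get the ``$d$-positive $\Rightarrow$ cp'' implication in one line, whereas you unroll that implication directly via Theorem \ref{thm: d-minimal characterization} and the identity $\alpha^* \pi^{(n)}(x)\alpha = \pi^{(d)}(\alpha^* x \alpha)$; the content is the same.
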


\begin{proof}
    By Theorem 3.7 of \cite{xhabli2012super}, we see that $\pi: V \to W$ is is completely positive if and only if it is $d$-positive. Since $\pi^{(k)}(C_k) \subseteq E_k$ for every $k \leq d$ and since $C_k = (C_k)^{d-\text{min}}$ for every $k \leq d$, $\pi$ is $d$-positive on $\{(C_n)^{d-\text{min}}\}$ and hence $\pi$ is completely positive.
\end{proof}

\begin{proposition} \label{prop: extend to tilde C}
    Suppose that $(V, \{C_n\}, e)$ is a SIC-system with SIC basis $\{p_1, p_2, \dots, p_{d^2}\}$. Let $(W, \{E_n\}, f)$ be another SIC-system with SIC-basis $\{q_1, q_2, \dots, q_{d^2}\}$ which satisfies the conditions of Theorem \ref{Theorem: SIC Characterization}. If the linear map $\pi: V \to W$ defined by $\pi(p_i) = q_i$ for every $i=1,2,\dots,d^2$ is completely positive with respect to $\{C_n\}$, then it is also completely positive with respect to $\{\widetilde{C}_n\}$.
\end{proposition}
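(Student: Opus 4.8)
The plan is to show that completely positivity of $\pi$ propagates through every stage of the inductive construction defining $\{\widetilde{C}_n\}$, and then passes to the inductive limit. Recall that $\{\widetilde{C}_n\}$ is the inductive limit of the nested sequence $\{D_n^{(k)}\}$, where $D_n^{(0)} = C_n$, and at each step we either pass to $D_n^{(k)}(p_j)$ (when building in an abstract projection) or to $(D_n^{(k)})^{d\text{-min}}$ (when enforcing $d$-minimality). So the argument has three pieces: a base case, an inductive step handling each of the two operations, and a limiting step.

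First I would establish, by induction on $k$, that $\pi \colon (V,\{D_n^{(k)}\},e) \to (W,\{E_n\},f)$ is completely positive for every $k$. The base case $k=0$ is the hypothesis that $\pi$ is completely positive with respect to $\{C_n\}$. For the inductive step, suppose $\pi$ is completely positive with respect to $\{D_n^{(k)}\}$. If $k=2j$ is even, then $D_n^{(k+1)} = D_n^{(k)}(p_j)$; since $W$ satisfies the conditions of Theorem \ref{Theorem: SIC Characterization}, $q_j = \pi(p_j)$ is an abstract projection in $W$, so Lemma \ref{lem: extend projection maps} (applied to the operator system $(V,\{D_n^{(k)}\},e)$ and the projection $p_j$) gives that $\pi$ is completely positive with respect to $D_n^{(k)}(p_j) = D_n^{(k+1)}$. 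If $k$ is odd, then $D_n^{(k+1)} = (D_n^{(k)})^{d\text{-min}}$; since $W$ is $d$-minimal, Lemma \ref{lem: extend d-minimal maps} (applied to $(V,\{D_n^{(k)}\},e)$) gives that $\pi$ is completely positive with respect to $(D_n^{(k)})^{d\text{-min}} = D_n^{(k+1)}$. This closes the induction.

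Finally I would pass to the inductive limit. Let $x \in \widetilde{C}_n = D_n^{(\infty)}$ and fix $\epsilon > 0$; by definition of the inductive limit there is some $k$ with $x + \epsilon\, I_n \otimes e \in D_n^{(k)}$. Applying complete positivity of $\pi$ with respect to $\{D_n^{(k)}\}$ gives $\pi^{(n)}(x) + \epsilon\, I_n \otimes f \in E_n$ (using $\pi(e)=f$). Since this holds for every $\epsilon > 0$ and $\{E_n\}$ is Archimedean closed (as $(W,\{E_n\},f)$ is an operator system), we conclude $\pi^{(n)}(x) \in E_n$. As $n$ and $x$ were arbitrary, $\pi$ is completely positive with respect to $\{\widetilde{C}_n\}$.

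I do not anticipate a serious obstacle here: the two lemmas preceding this proposition are precisely the tools tailored to the two operations in the construction, and the only mildly delicate point is bookkeeping — making sure that at an even stage $k=2j$ the relevant contraction is $p_j$ (via the periodic extension $p_{j} = p_{j \bmod d^2}$) and that its image $\pi(p_j) = q_j$ is genuinely an abstract projection in $W$, which is exactly what "satisfies the conditions of Theorem \ref{Theorem: SIC Characterization}" supplies. The Archimedean-closedness needed in the limiting step is part of the definition of the operator system $(W,\{E_n\},f)$, so no extra work is required there.
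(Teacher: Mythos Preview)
Your proposal is correct and follows essentially the same approach as the paper: induct on $k$ using Lemma~\ref{lem: extend projection maps} at even stages and Lemma~\ref{lem: extend d-minimal maps} at odd stages, then pass to the inductive limit via the Archimedean property of $\{E_n\}$. The paper's proof is virtually identical, including the same bookkeeping remark about the periodic indexing of the $p_j$'s.
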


\begin{proof}
    We proceed by induction on the sequence of matrix orderings $\{D_n^{(k)}\}$. Let $k \in \mathbb{N} \cup \{0\}$ and suppose that $\pi: V \to W$ is completely positive with respect to the matrix ordering $\{D_n^{(k)}\}$ on $V$. If $k$ is even, say $k=2(i+Nd^2)$, then $D_n^{(k+1)} = D_n^{(k)}(p_i)$. Since $\pi(p_i)=q_i$ and $q_i$ is an abstract projection in $W$, $\pi$ is completely positive with respect to $\{D_n^{(k+1)}\}$ by Lemma \ref{lem: extend projection maps}. If $k$ is odd, then $D_n^{(k+1)} = (D_n^{(k)})^{d-\text{min}}$. Since $W$ is $d$-minimal, then $\pi$ is completely positive with respect to $\{D_n^{(k+1)}\}$ by Lemma \ref{lem: extend d-minimal maps}. It follows that $\pi$ is completely positive with respect to every matrix ordering $\{D_n^{(k)}\}$. Now suppose that $x \in \widetilde{C}_n$ and let $\epsilon > 0$. Then there exists $k$ such that $x + \epsilon I_n \otimes e \in D_n^{(k)}$. Hence $\pi^{(n)}(x) + \epsilon I_n \otimes f \geq 0$. This holds for every $\epsilon > 0$ and therefore $\pi$ is completely positive with respect to $\{\widetilde{C}_n\}$. 
\end{proof}

We now consider the existence of a SIC-system $(V, \{C_n\}, e)$ such that the induced matrix ordering $\{\widetilde{C}_n\}$ is proper. Of course, if $(V, \{C_n\}, e)$ already satisfies the conditions of Theorem \ref{Theorem: SIC Characterization}, then $\widetilde{C}_n = C_n$ and there is nothing to show. However, we will see that we can explicitly construct a family of SIC-systems $(V, \{C_n\}, e)$ with the property that a SIC-POVM exists in dimension $d$ if and only if some member of the that family $(V, \{C_n\}, e)$ has a proper induced cone $\{\widetilde{C}_n\}$.

The family we will describe is indexed by the set of all increasing sequence $\vec{t} = \{t_i\}_{i=1}^\infty$ of positive real numbers, i.e. for every sequence $\vec{t} = \{t_i\}$ with $0 < t_1 < t_2 < \dots$ we will describe a corresponding SIC-system which we denote $(V, \{C_n(\vec{t})\}, e)$. First, define the $*$-vector space $V$ to be the diagonal $d^2 \times d^2$ matrices equipped with its usual adjoint. Define $p_i$ to be the matrix with a $1$ in the $i^\textsuperscript{th}$ diagonal entry and zeros elsewhere, i.e. the $i^\textsuperscript{th}$ diagonal matrix unit. Define the unit of $V$ to be the matrix $e = d^{-1}(\sum_i p_i)$, i.e. $d^{-1}$ times the identity matrix of $M_{d^2}$. 

We will define a matrix ordering $\{C_n(\vec{t})\}$ on $V$ as follows. We begin with the cone $C_1(\vec{t})$. For each pair $i,j \in \{1,2, \dots, d^2\}$ with $i \neq j$ and each $n \in \mathbb{N}$, define
\[ x_{i,j,n}^+ := (p_i - \lambda e) + \frac{1}{n} p_j + t_n p_j^{\perp} \quad \text{and} \quad x_{i,j,n}^- := (\lambda e - p_i) + \frac{1}{n} p_j + t_n p_j^{\perp}. \]
We let $C_1(\vec{t})$ denote the Archimedean closure of the cone generated by the elements $\{p_k, p_k^{\perp}, x_{i,j,n}^+, x_{i,j,n}^-\}$, i.e. $y \in C_1(\vec{t})$ if and only if for every $\epsilon > 0$
\[ y + \epsilon e = \sum_{k=1}^{d^2} (\alpha_k p_k + \beta_k) p_k^{\perp} + \sum_{n=1}^\infty \sum_{i\neq j} (\gamma_{i,j,n}^+ x_{i,j,n}^+ + \gamma_{i,j,n}^- x_{i,j,n}^-) \]
for some non-negative coefficients $\alpha_k, \beta_k, \gamma_{i,j,n}^+$, and $\gamma_{i,j,n}^-$ with only finitely many of these non-zero.

To extend $C_1(\vec{t})$ to a matrix ordering $\{C_n(\vec{t})\}$, we recall the \text{OMAX} operator system structure from \cite{PaulsenTodorovTomforde2011OSS}. Given a $*$-vector space $V$, a cone $C \subseteq V_h$ and an element $e \in V_h$, we define $(C)_n^{\text{max}}$ to be the set of all $x \in M_n(V)_h$ such that for every $\epsilon > 0$ there exist $Q_1, Q_2, \dots, Q_N \in M_n^+$ and $x_1, x_2, \dots, x_n \in C$ such that
\[ x + \epsilon I_n \otimes e = \sum_{i=1}^N Q_i \otimes x_i. \]
Paulsen, Tomforde, and Todorov prove the following concerning $\{(C)_n^{\text{max}}\}:$

\begin{lemma}[See Theorem 3.22 of \cite{PaulsenTodorovTomforde2011OSS}] \label{lem: OMAX properties}
The sequence $\{(C)_n^{\text{max}}\}$ enjoys the following properties:
\begin{enumerate}
    \item If $C$ is proper and Archimedean closed, then $(V, \{(C)_n^{\text{max}} \}, e)$ is an operator system satisfying $(C)_1^{\text{max}} = C$.
    \item If, in addition, $(W, \{E_n\}, f)$ is an operator system and $\pi: V \to W$ is a linear map such that $\pi(C) \subseteq E_1$, then $\pi$ is completely positive with respect to the matrix ordering $\{(C)_n^{\text{max}}\}$ on $V$.
\end{enumerate}
\end{lemma}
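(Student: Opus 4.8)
The plan is to verify directly that $\{(C)_n^{\text{max}}\}$ is a proper, Archimedean matrix ordering with matrix order unit $e$, and then to establish the universal property by pushing the defining expressions $x+\epsilon I_n\otimes e=\sum_i Q_i\otimes x_i$ through $\pi$. For the first statement, the matrix‑ordering axioms are the easy part: $(C)_n^{\text{max}}\oplus(C)_m^{\text{max}}\subseteq(C)_{n+m}^{\text{max}}$ follows by taking block‑diagonal sums of the defining representations, and compatibility under $\alpha^*(\cdot)\alpha$ for $\alpha\in M_{n,k}$ follows from the identity $\alpha^*x\alpha+\epsilon\|\alpha\|^2 I_k\otimes e=\sum_i(\alpha^*Q_i\alpha)\otimes x_i+\epsilon(\|\alpha\|^2 I_k-\alpha^*\alpha)\otimes e$, whose last term is again of the required form since $\|\alpha\|^2 I_k-\alpha^*\alpha\in M_k^+$ and $e\in C$. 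The Archimedean property is immediate from the ``for every $\epsilon>0$'' clause in the definition. The key structural point is the identity $(C)_1^{\text{max}}=C$: the inclusion $C\subseteq(C)_1^{\text{max}}$ is trivial, while if $x\in(C)_1^{\text{max}}$ then $x+\epsilon e\in C$ for every $\epsilon>0$, so $x\in C$ because $C$ is Archimedean closed. Properness of $C$ then gives properness of $(C)_1^{\text{max}}$, and this propagates to every $(C)_n^{\text{max}}$ by a routine argument with the compatibility axiom (if $x,-x\in(C)_n^{\text{max}}$ then $v^*xv=0$ in $C$ for every column vector $v$, forcing $x=0$).

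The step I expect to be the most technical is checking that $e$ is a matrix order unit, i.e. that every $x\in M_n(V)_h$ satisfies $x+tI_n\otimes e\in(C)_n^{\text{max}}$ for some $t>0$. The plan is to decompose $x=\sum_i E_{ii}\otimes v_{ii}+\sum_{i<j}\left(E_{ij}\otimes v_{ij}+E_{ji}\otimes v_{ij}^*\right)$ and treat each piece separately. The diagonal terms are handled directly, since the order‑unit property of $C$ in $V$ gives $s>0$ with $v_{ii}+se\in C$, so $E_{ii}\otimes(v_{ii}+se)$ is already of the required form. For each off‑diagonal pair one uses the positive scalar matrices $\begin{pmatrix}1&1\\1&1\end{pmatrix}$, $\begin{pmatrix}1&-1\\-1&1\end{pmatrix}$ (and their complex variants) together with the order‑unit property of $C$ to write $\begin{pmatrix}ce & v_{ij}\\v_{ij}^* & ce\end{pmatrix}$ as a finite sum $\sum_k Q_k\otimes x_k$ with $Q_k\in M_2^+$ and $x_k\in C$, once $c$ is chosen large enough that $ce\pm\operatorname{Re}(v_{ij})$ and $ce\pm\operatorname{Im}(v_{ij})$ lie in $C$; padding these $2\times 2$ matrices with zeros places them in $M_n^+$. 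Summing all the pieces and collecting the accumulated multiples of $e$ on the diagonal into a single term $tI_n\otimes e$ finishes this verification.

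For the second statement, let $x\in(C)_n^{\text{max}}$ and fix $\epsilon>0$, so $x+\epsilon I_n\otimes e=\sum_{i=1}^N Q_i\otimes x_i$ with $Q_i\in M_n^+$ and $x_i\in C$. Applying $\pi^{(n)}$, which acts entrywise and hence commutes with tensoring by scalar matrices, gives $\pi^{(n)}(x)+\epsilon I_n\otimes\pi(e)=\sum_i Q_i\otimes\pi(x_i)$. Since $\pi(C)\subseteq E_1$ we have $\pi(x_i)\in E_1$, and writing $Q_i=\sum_j\lambda_j v_jv_j^*$ with $\lambda_j\geq 0$ and $v_j\in\mathbb{C}^n$, we get $Q_i\otimes\pi(x_i)=\sum_j\lambda_j\, v_j\pi(x_i)v_j^*\in E_n$ by the compatibility axiom $\beta^*E_1\beta\subseteq E_n$ for the matrix ordering on $W$; hence $\pi^{(n)}(x)+\epsilon I_n\otimes\pi(e)\in E_n$. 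Finally, since $f$ is an order unit for $W$ there is $M>0$ with $Mf-\pi(e)\in E_1$, and adding $\epsilon I_n\otimes(Mf-\pi(e))\in E_n$ yields $\pi^{(n)}(x)+\epsilon M\,I_n\otimes f\in E_n$ for every $\epsilon>0$. The Archimedean property of $f$ in $W$ then gives $\pi^{(n)}(x)\in E_n$, so $\pi$ is completely positive with respect to $\{(C)_n^{\text{max}}\}$, as claimed. The only genuine obstacle is the combinatorial bookkeeping in the matrix‑order‑unit step; everything else is a direct unwinding of the definitions together with the Archimedean‑closedness of $C$.
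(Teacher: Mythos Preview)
Your proof is correct and essentially complete; the only real work is the matrix-order-unit verification, and your $2\times 2$ decomposition handles it cleanly. Note, however, that the paper does not actually prove this lemma: it is stated with a bare citation to Theorem~3.22 of \cite{PaulsenTodorovTomforde2011OSS} and invoked as a black box. So there is no ``paper's own proof'' to compare against---you have supplied a self-contained argument where the paper defers to the literature. Your treatment of the non-unital case in part~(2), bounding $\pi(e)\leq Mf$ via the order-unit property of $f$ before invoking the Archimedean axiom, is a detail worth keeping, since the lemma as stated does not assume $\pi(e)=f$.
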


For each $n \in \mathbb{N}$, We define $C_n(\vec{t}) = (C_1(\vec{t}))_n^{\text{max}}$. By Lemma \ref{lem: OMAX properties}, whenever $C_1(\vec{t})$ is a proper Archimedean closed cone, $\{C_n(\vec{t})\}$ is a matrix ordering. We now wish to show that we can always find a sequence $\vec{t} = \{t_i\}$ such that $C_1(\vec{t})$ is proper and Archimedean closed. The proof makes use of a natural inner product structure on $V$. We start with an easy technical lemma.

\begin{lemma} \label{lem: positive inner product}
    Suppose that $V$ is a finite-dimensional real Hilbert space, $C \subseteq V$ is a cone such that $\langle x,y \rangle \geq 0$ for every $x,y \in C$, and $e \in C$ is a unit vector. Then the set $\widehat{C}$ consisting of all $y \in V$ such that $y + \epsilon e \in C$ for every $\epsilon > 0$ is a proper cone.
\end{lemma}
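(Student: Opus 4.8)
The plan is to show that $\widehat C$ is a cone (closed under addition and nonnegative scaling) and then that it is \emph{proper}, meaning $\widehat C \cap -\widehat C = \{0\}$. The cone axioms are routine: if $y_1,y_2 \in \widehat C$ and $\epsilon>0$, then $y_1 + \tfrac{\epsilon}{2}e \in C$ and $y_2 + \tfrac{\epsilon}{2}e \in C$, so their sum $y_1+y_2+\epsilon e \in C$ since $C$ is a cone; similarly $\lambda y + \epsilon e = \lambda(y + \tfrac{\epsilon}{\lambda} e) \in C$ for $\lambda > 0$, and $0 \in \widehat C$ since $\epsilon e \in C$. So the substance is properness.

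For properness, suppose $y \in \widehat C \cap -\widehat C$. Then for every $\epsilon > 0$ we have both $y + \epsilon e \in C$ and $-y + \epsilon e \in C$. The key idea is to use the hypothesis that the inner product is nonnegative on $C$: applying $\langle\,\cdot\,,\,\cdot\,\rangle$ to the pair $y + \epsilon e$ and $-y + \epsilon e$ (both in $C$) gives
\[ 0 \le \langle y + \epsilon e, -y + \epsilon e\rangle = -\langle y, y\rangle + \epsilon^2 \langle e, e\rangle = -\|y\|^2 + \epsilon^2, \]
using $\langle y, e\rangle = \langle e, y\rangle$ and $\|e\| = 1$. Hence $\|y\|^2 \le \epsilon^2$ for every $\epsilon > 0$, which forces $y = 0$. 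Therefore $\widehat C \cap -\widehat C = \{0\}$ and $\widehat C$ is a proper cone.

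I don't anticipate a genuine obstacle here — the statement is an elementary lemma and the inner-product pairing trick does essentially all the work. The only point requiring a small amount of care is the expansion of $\langle y + \epsilon e, -y + \epsilon e\rangle$: one must use that $V$ is a \emph{real} Hilbert space (so the inner product is symmetric and bilinear) to get the cross terms $\epsilon\langle e, y\rangle - \epsilon\langle y, e\rangle = 0$ to cancel, leaving only $-\|y\|^2 + \epsilon^2\|e\|^2$. Finite-dimensionality is not actually needed for this argument, though it is presumably assumed because $V$ will be the space of diagonal $d^2 \times d^2$ matrices in the application, with $C = C_1(\vec t)$ and $e$ the (rescaled) unit; the genuine work will be in the subsequent proposition, where one must verify that the cone generated by the explicit generators $\{p_k, p_k^\perp, x_{i,j,n}^\pm\}$ does satisfy the nonnegativity hypothesis $\langle x, y\rangle \ge 0$ for a suitable choice of increasing sequence $\vec t$.
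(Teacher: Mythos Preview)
Your proof is correct and follows essentially the same approach as the paper: both argue that the inner-product pairing $\langle y+\epsilon e, -y+\epsilon e\rangle \ge 0$ yields $\|y\|^2 \le \epsilon^2$ for all $\epsilon>0$, forcing $y=0$. Your additional remarks about the role of the real inner product and the irrelevance of finite-dimensionality are accurate and not in the paper's proof, but they do not change the argument.
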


\begin{proof}
    It is clear that $\widehat{C}$ is a cone, so we check that it is proper. Suppose that $\pm y \in \widehat{C}$. Then for every $\epsilon > 0$ there exist $x_1, x_2 \in C$ such that $y + \epsilon e = x_1$ and $-y + \epsilon e = x_2$. Then
    \begin{eqnarray}
        0 & \leq & \langle x_1, x_2 \rangle \nonumber \\
        & = & -\|y\|^2 + \epsilon \langle y,e \rangle - \epsilon \langle e,y \rangle + \epsilon^2 \nonumber \\
        & = & \epsilon^2 - \|y\|^2. \nonumber
    \end{eqnarray}
    Hence $\|y\|^2 \leq \epsilon^2$ for every $\epsilon > 0$, so $y=0$.
\end{proof}

Consider the normalized Hilbert-Schmidt inner product on $M_d$ defined by $\langle A,B \rangle = \frac{1}{d} \Tr(A^*B)$. Recall that if $\{P_i\} \subseteq M_d$ is a SIC-POVM, then $\langle P_i, P_j \rangle = \lambda/d$ whenever $i \neq j$ and that $\|P_i\|_{HS}^2 = 1/d$ (where $\|A\|_{HS} := \langle A,A \rangle^{1/2}$). Finally note that the set of Hermitian matrices in $M_d$ is a real Hilbert space with respect to the Hilbert-Schmidt inner product.

We will prove the following theorem by ``artificially'' defining the Hilbert-Schmidt inner product on a SIC-system.

\begin{proposition} \label{prop: SIC cone}
    There exists a sequence $\vec{t} = \{t_i\}$ such that $0 < t_1 < t_2 < \dots$ for which $C_1(\vec{t})$ is proper and Archimedean closed.
\end{proposition}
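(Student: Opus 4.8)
The plan is to use Lemma~\ref{lem: positive inner product}, equipping $V_h$ (the self-adjoint, i.e.\ real-diagonal, part of $V$) with the inner product that the normalized Hilbert--Schmidt form \emph{would} pull back along $p_i \mapsto P_i$ if a SIC-POVM $\{P_i\}$ existed: set
\[ \langle p_i, p_i \rangle = \frac{1}{d}, \qquad \langle p_i, p_j \rangle = \frac{\lambda}{d} \quad (i \neq j), \]
extended bilinearly over the real span of $p_1,\dots,p_{d^2}$. First I would check positive-definiteness, so that $(V_h,\langle\,\cdot\,,\,\cdot\,\rangle)$ is a finite-dimensional real Hilbert space: the Gram matrix is $\frac{1}{d(d+1)}(d\,I_{d^2} + J)$ with $J$ the all-ones matrix, whose eigenvalues $1$ (once) and $\frac{1}{d+1}$ (multiplicity $d^2-1$) are positive. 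A direct computation then gives $\langle p_i, e\rangle = \frac1d$ and $\|e\|^2 = 1$, so $e = d^{-1}\sum_i p_i$ is a unit vector, and $e$ lies in the cone $C$ generated by $\{p_k, p_k^\perp, x_{i,j,n}^+, x_{i,j,n}^-\}$, being a nonnegative combination of the $p_k$.

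The substance of the proof is to choose $\vec t$ so that every pair of generators has nonnegative inner product. Once that holds, then since each element of $C$ is a nonnegative combination of generators, bilinearity gives $\langle x, y\rangle \geq 0$ for all $x, y \in C$, and Lemma~\ref{lem: positive inner product} shows that $C_1(\vec t)$ --- which by definition is the Archimedean closure of $C$ --- is a proper cone; it is Archimedean closed because taking the Archimedean closure twice gives back the same cone. I would verify the signs generator by generator. The baseline values $\langle p_k, p_l\rangle$, $\langle p_k, p_l^\perp\rangle$, $\langle p_k^\perp, p_l^\perp\rangle$ are all $\geq 0$ for $d\geq 2$ (and $d=1$ is trivial: there are no pairs $i\neq j$, hence no $x$-generators, and $C_1(\vec t) = \mathbb{R}_{\geq 0}e$). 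Decomposing $x_{i,j,n}^\pm = \pm a_i + b_{j,n}$ with $a_i := p_i - \lambda e$ and $b_{j,n} := \tfrac1n p_j + t_n p_j^\perp$, the two structural facts are: (i) $\langle b_{j,n}, b_{j',n'}\rangle \geq c_d\, t_n t_{n'}$ for a constant $c_d > 0$, because $\langle p_j^\perp, p_{j'}^\perp\rangle > 0$ for all $j, j'$ when $d\geq 2$; and (ii) all terms involving the $a_i$ are bounded uniformly in $\vec t$, namely $|\langle a_i, a_{i'}\rangle| \leq \kappa := \|a_i\|^2$ by Cauchy--Schwarz and $|\langle a_i, b_{j,n}\rangle| \leq C_1 + C_2 t_n$. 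Hence
\[ \langle x_{i,j,n}^{\epsilon}, x_{i',j',n'}^{\epsilon'}\rangle \;\geq\; c_d\, t_n t_{n'} - C_2(t_n + t_{n'}) - (2C_1 + \kappa), \]
and similarly the mixed products $\langle p_k, x_{i,j,n}^\pm\rangle$ and $\langle p_k^\perp, x_{i,j,n}^\pm\rangle$ are bounded below by quantities of the form $c'\,t_n - c''$ with $c' > 0$. Each of these finitely many lower bounds is nonnegative once $t_n \geq M$ for a suitable constant $M = M(d)$, so I would finish by taking, say, $t_n = M + n$.

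The step requiring the most care is this final sign analysis, because several baseline cross-inner-products genuinely are negative --- for instance $\langle a_i, p_i^\perp\rangle = \frac{1-d}{d(d+1)}$ and $\langle a_i, a_{i'}\rangle = -\tfrac{\lambda^2}{d}$ for $i\neq i'$ --- so the estimate really does depend on the ``slack'' terms $t_n p_j^\perp$ dominating. The reason this works, and what I would emphasize, is structural: all negative inner products occur among the fixed finite family $\{p_k, p_k^\perp, a_i\}$, which does not involve $\vec t$, while each $t_n$ enters only multiplied against some $p_j^\perp$, whose inner product with every $p_{j'}^\perp$ is strictly positive; so no pair of generators can produce a negative contribution that grows with $\vec t$, and one constant $M$ clears all the thresholds at once. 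The only real obstacle is thus the bookkeeping: confirming this for \emph{every} pair of generator types, in particular the mixed cases $\langle p_k^{(\perp)}, x_{i,j,n}^\pm\rangle$, where a careless sign could ruin the argument.
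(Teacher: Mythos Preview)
Your proposal is correct and follows essentially the same approach as the paper: define the artificial Hilbert--Schmidt inner product on $V_h$, verify that all pairwise inner products among the generators $\{p_k, p_k^\perp, x_{i,j,n}^\pm\}$ are nonnegative for sufficiently large $t_n$, and then invoke Lemma~\ref{lem: positive inner product}. One small slip to patch when you write it up: your claim that $\langle p_k, x_{i,j,n}^\pm\rangle \geq c' t_n - c''$ with $c' > 0$ fails in the case $k = j$ (since $\langle p_j, p_j^\perp\rangle = 0$), but there $\langle p_j, a_i\rangle = 0$ because $i \neq j$, so the inner product equals $\tfrac{1}{nd} > 0$ regardless of $t_n$ --- the paper treats this case separately, and you should too.
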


\begin{proof}
    For each pair $p_i, p_j \in \{p_1, p_2, \dots, p_{d^2}\}$, define
    \[ \langle p_i, p_j \rangle = \begin{cases} 1/d & i = j \\ \lambda/d & i \neq j \end{cases} \]
    and extend (by sesquilinearity) to an inner product on $V$. This inner product is well-defined since $\{p_1, p_2, \dots, p_{d^2}\}$ is linearly independent in $V$. Moreover, the inner product is positive definite (i.e. $\langle x,x \rangle = 0$ implies $x=0$) since the matrix $(\langle p_i, p_j \rangle)_{i,j}$ has rank $d^2$ (see e.g. page 410 of \cite{Horn_Johnson_1985}). Finally, note that the restriction of this inner product to $V_h$ is real-valued and hence $V_h$ is a real Hilbert space.

    Recall that, given a sequence $\vec{t}=\{t_n\}$, we define \[ x_{i,j,n}^+ := (p_i - \lambda e) + \frac{1}{n} p_j + t_n p_j^{\perp} \quad \text{and} \quad x_{i,j,n}^- := (\lambda e - p_i) + \frac{1}{n} p_j + t_n p_j^{\perp}. \] Define a cone $C \subseteq V_h$ to be the cone generated by the elements $\{p_k, p_k^{\perp}, x_{i,j,n}^+, x_{i,j,n}^-\}$ where $i,j$ and $k$ range over $\{1,2,\dots, d^2\}$ with $i \neq j$ and $n \in \mathbb{N}$, i.e.
    \[ C = \left\{ \sum_{k=1}^{d^2} (\alpha_k p_k + \beta_k p_k^{\perp}) + \sum_{n=1}^\infty \sum_{i \neq j} (\gamma_{i,j,n}^+ x_{i,j,n}^+ + \gamma_{i,j,n}^- x_{i,j,n}^-) \right\} \]
    where $\{\alpha_k, \beta_k, \gamma_{i,j,n}^+, \gamma_{i,j,n}^-\}$ are non-negative coefficients, only finitely many of which are non-zero.
    
    We claim that the sequence $\vec{t}$ can be chosen such that $\langle x,y \rangle \geq 0$ for any $x,y \in C$. It suffices to show that a sequence $\{t_i\}$ can be chosen such that for any $x,y$ in the generating set $\{p_k, p_k^{\perp}, x_{i,j,n}^+, x_{i,j,n}^-\}$ we have $\langle x,y \rangle \geq 0$. Using the definition of the inner product and that $e = (\sum p_i) / d$ and $p_i^{\perp} = e - p_i$, we have that
    \[ \langle p_i, p_j \rangle = \begin{cases} 1/d & i = j \\ \lambda/d & i \neq j \end{cases}, \quad \langle p_i, p_j^{\perp} \rangle = \begin{cases} 0 & i = j \\ (1-\lambda)/d & i \neq j \end{cases}, \quad \langle p_i^{\perp}, p_j^{\perp} \rangle = \begin{cases} (d-1)/d & i = j \\ (d-2 + \lambda)/d & i \neq j \end{cases} \]
    are valid. For the remaining inner products, first note that $\langle p_i, \lambda e - p_j \rangle = 0$ whenever $i \neq j$. Hence
    \[ \langle p_k, x_{i,k,n}^+ \rangle = \langle p_k, x_{i,k,n}^- \rangle = \frac{1}{nd}. \]
    Additionally,
    \[ \|\lambda e - p_i\|^2 = \langle \lambda e - p_i, \lambda e - p_i \rangle = \dfrac{\lambda^2d - 2\lambda + 1}{d} \]
    and thus
    \[ \| x_{i,j,n}^* - t_n p_j^{\perp} \|^2 = \| \pm (\lambda e - p_i) + \frac{1}{n} p_j \|^2 = \frac{\lambda^2 d - 2\lambda +1}{d} + \frac{1}{nd} \leq \frac{\lambda^2d - 2\lambda + 2}{d} \]
    where $x_{i,j,n}^* = x_{i,j,n}^+$ or $x_{i,j,n}^-$. Let $\beta = ((\lambda^2 d - 2\lambda + 2)/d)^{1/2}$. Then for $j \neq k$,
    \begin{eqnarray}  \langle p_k, x_{i,j,n}^* \rangle & = & \langle p_k, t_n p_j^{\perp} + (x_{i,j,n}^* - t_n p_j^{\perp}) \rangle \nonumber \\
    & \geq & t_n \frac{1-\lambda}{d} - \frac{\beta}{\sqrt{d}} \nonumber
    \end{eqnarray}
    by Cauchy-Schwarz, since $\| x_{i,j,n}^* - t_n p_j^{\perp} \| \leq \beta$. This is positive provided that $t_n > \frac{\sqrt{d}\beta}{1-\lambda}$ for every $n$. Similarly,
    \begin{eqnarray}  \langle p_k^{\perp}, x_{i,j,n}^* \rangle & = & \langle p_k^{\perp}, t_n p_j^{\perp} + (x_{i,j,n}^* - t_n p_j^{\perp}) \rangle \nonumber \\
    & \geq & t_n \frac{d-2+\lambda}{d} - \frac{\sqrt{d-1}\beta}{\sqrt{d}} \nonumber
    \end{eqnarray}
    which is positive provided $t_n \geq \frac{\sqrt{d^2-d}\beta}{d-2+\lambda}$. 

    Finally, we consider $\langle x_{i,j,n}^*, x_{a,b,m}^* \rangle$. We will use the estimate $\| x_{i,j,n}^* - t_n p_j^{\perp} \| \leq \beta$ to show that the inner product is positive. Let $\alpha = (d-2+\lambda)/d$, and let $\gamma := (d-1)^{1/2} \beta / d^{1/2}$. Then
    \begin{eqnarray} \langle x_{i,j,n}^*, x_{a,b,m}^* \rangle & = & \langle t_n p_j^{\perp} + (x_{i,j,n}^* - t_n p_j^{\perp}), t_m p_b^{\perp} + (x_{a,b,m}^* - t_m p_b^{\perp}) \rangle \nonumber \\
    & \geq & t_n t_m \alpha - (t_n + t_m) \gamma - \beta^2 \label{eqn: number 3}
    \end{eqnarray}
    using Cauchy-Schwarz and the estimate $\| x_{i,j,n}^* - t_n p_j^{\perp} \| \leq \beta$. We see that (\ref{eqn: number 3}) is positive if $n = m$ and \[ t_n \geq \frac{2\gamma + \sqrt{ 4\gamma^2 + 4\alpha \beta^2}}{2\alpha} > \gamma / \alpha \]
    using the quadratic formula. When $n < m$, we have $\epsilon := t_m - t_n > 0$ whenever $\{t_k\}$ is increasing. Hence
    \[ \langle x_{i,j,n}^*, x_{a,b,m}^* \rangle \geq t_n^2 \alpha - 2t_n \gamma - \beta^2 + \epsilon (\alpha t_n - \gamma) > 0 \]
    since $t_n > \gamma / \alpha$.

    We conclude that, by choosing suitably large values for $\{t_i\}$, we have $\langle x,y \rangle \geq 0$ for all $x,y \in C$. It follows that $C_1(\vec{t}) = \widehat{C}$ is a proper cone by Lemma \ref{lem: positive inner product}.
\end{proof}

\begin{theorem}
    There exists an increasing sequence $\vec{t}$ such that $\{C_n(\vec{t})\}$ is a proper matrix ordering and hence $(V, \{C_n(\vec{t})\}, e)$ is a SIC-system. Moreover, if there exists a SIC-POVM in dimension $d$, then there exists a sequence $\vec{t}$ such that $\{\widetilde{C}_n(\vec{t})\}$ is proper and hence $(V, \{\widetilde{C}_n(\vec{t})\}, e)$ satisfies the conditions of Theorem \ref{Theorem: SIC Characterization}.
\end{theorem}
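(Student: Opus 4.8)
The first assertion follows quickly from the machinery already assembled. By Proposition~\ref{prop: SIC cone} we may fix an increasing sequence $\vec{t} = \{t_i\}$ of positive reals for which $C_1(\vec{t})$ is proper and Archimedean closed; by Lemma~\ref{lem: OMAX properties}(1) the family $\{C_n(\vec{t})\} = \{(C_1(\vec{t}))_n^{\text{max}}\}$ is then a matrix ordering making $(V,\{C_n(\vec{t})\},e)$ an operator system with $(C_1(\vec{t}))_1^{\text{max}} = C_1(\vec{t})$. It remains to check that this is a SIC-system. The relation $\sum_i p_i = de$ holds by the definition of $e$, and each $p_i$ is a positive contraction because $p_i$ and $p_i^{\perp} = e - p_i$ both lie in the generating set of $C_1(\vec{t})$. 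Finally, for $i \neq j$ the elements $x_{j,i,n}^{\pm} = \pm(p_j - \lambda e) + \tfrac1n p_i + t_n p_i^{\perp}$ belong to $C_1(\vec{t})$; given $\epsilon > 0$, choosing $n$ with $\tfrac1n < \epsilon$ and adding $(\epsilon - \tfrac1n)p_i \in C_1(\vec{t})$ shows $\pm(p_j - \lambda e) + \epsilon p_i + t_n p_i^{\perp} \in C_1(\vec{t})$, so $p_i(p_j - \lambda e)p_i = 0$ abstractly. Hence $(V,\{C_n(\vec{t})\},e)$ is a SIC-system.

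For the second assertion, suppose a SIC-POVM exists in dimension $d$. By Theorem~\ref{Thm: SICs from projections} there are rank-one projections $P_1,\dots,P_{d^2} \in M_d$ with $\sum_i P_i = dI$ and $P_iP_jP_i = \lambda P_i$ for $i \neq j$; these are linearly independent (their Hilbert--Schmidt Gram matrix has full rank), so $W = M_d$ with unit $f = I$ is a SIC-system with SIC-basis $\{P_i\}$ which moreover satisfies conditions (1)--(4) of Theorem~\ref{Theorem: SIC Characterization}, since $M_d$ is $d$-minimal, each $P_i$ is a genuine projection, and $P_i(P_j - \lambda I)P_i = 0$ holds literally, hence abstractly. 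Observe that the inner product placed on $V$ in the proof of Proposition~\ref{prop: SIC cone} is, under the identification $p_i \leftrightarrow P_i$, precisely the normalized Hilbert--Schmidt inner product on $M_d$, because $\langle P_i,P_j\rangle_{HS} = \lambda/d$ for $i \neq j$ and $\|P_i\|_{HS}^2 = 1/d$. Let $\pi : V \to M_d$ be the linear isomorphism determined by $\pi(p_i) = P_i$; it is unital and bijective since $V$ and $W$ are both $d^2$-dimensional. By Proposition~\ref{prop: extend to tilde C} (applied with $\{C_n\} = \{C_n(\vec{t})\}$, which defines a SIC-system by the argument of the first paragraph), it suffices to choose $\vec{t}$ so that $\pi$ is completely positive with respect to $\{C_n(\vec{t})\}$; and by Lemma~\ref{lem: OMAX properties}(2) together with the fact that $M_d^+$ is Archimedean closed, for this it is enough that $\pi$ carry each generator of $C_1(\vec{t})$ into $M_d^+$.

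The only generators requiring attention are $x_{i,j,n}^{\pm}$, since $\pi(p_k) = P_k \geq 0$ and $\pi(p_k^{\perp}) = I - P_k \geq 0$. Now $\pi(x_{i,j,n}^{\pm}) = \pm(P_i - \lambda I) + \tfrac1n P_j + t_n P_j^{\perp}$, and since $P_jP_iP_j = \lambda P_j$ for $i \neq j$ we have $P_j\bigl(\pm(P_i - \lambda I)\bigr)P_j = 0$. Thus by Lemma~\ref{lem: fundamental projection} (with $\epsilon = \tfrac1n$) there exists $t = t(i,j,n,\pm) > 0$ such that $\pm(P_i - \lambda I) + \tfrac1n P_j + t P_j^{\perp} \geq 0$; as there are finitely many triples $(i,j,\pm)$ for each $n$ and $P_j^{\perp} \geq 0$, any $t_n$ exceeding all of them makes every $\pi(x_{i,j,n}^{\pm}) \geq 0$. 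On the other hand, inspection of the proof of Proposition~\ref{prop: SIC cone} shows that the condition ``$\langle x,y\rangle \geq 0$ for all $x,y$ in the generating cone'' --- and hence, via Lemma~\ref{lem: positive inner product}, the propriety and Archimedean closedness of $C_1(\vec{t})$ --- holds once the $t_n$ exceed a fixed constant $T_0$ and the sequence is increasing, and it persists if the $t_n$ are enlarged further. I would therefore take the $t_n$ to be a strictly increasing sequence dominating both $T_0$ and all the thresholds $t(i,j,n,\pm)$. With this $\vec{t}$, $\pi$ is completely positive with respect to $\{C_n(\vec{t})\}$, hence also with respect to $\{\widetilde{C}_n(\vec{t})\}$ by Proposition~\ref{prop: extend to tilde C}. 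In particular, if $y \in \widetilde{C}_1(\vec{t}) \cap -\widetilde{C}_1(\vec{t})$ then $\pi(y) \in M_d^+ \cap -M_d^+ = \{0\}$, so $y = 0$ by injectivity of $\pi$; thus $\{\widetilde{C}_n(\vec{t})\}$ is proper, and Proposition~\ref{prop: Tilde C satisfies SIC Thm} completes the argument.

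The main obstacle is exactly this final selection of $\vec{t}$: one must be sure that one increasing sequence can simultaneously keep $C_1(\vec{t})$ proper and force $\pi$ to be positive on all the elements $x_{i,j,n}^{\pm}$. This succeeds because both requirements say only ``$t_n$ large enough,'' and --- crucially --- because the propriety estimates in Proposition~\ref{prop: SIC cone} involve only lower bounds on the $t_n$, so enlarging the $t_n$ to clear the positivity thresholds cannot break propriety. Everything else is routine bookkeeping with the previously established lemmas.
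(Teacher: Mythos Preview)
Your proof is correct and follows essentially the same route as the paper: Proposition~\ref{prop: SIC cone} for the first assertion, and for the second the injective unital map $\pi:p_i\mapsto P_i$, positivity of $\pi$ on the generators via Lemma~\ref{lem: fundamental projection}, complete positivity via Lemma~\ref{lem: OMAX properties}, extension to $\{\widetilde{C}_n\}$ via Proposition~\ref{prop: extend to tilde C}, and finally propriety from injectivity of $\pi$. The one difference is that you also force $\vec{t}$ to meet the lower bounds from Proposition~\ref{prop: SIC cone}; this is harmless but unnecessary, since once $\pi$ is injective with $\pi(C_1(\vec{t}))\subseteq M_d^+$ the propriety of $C_1(\vec{t})$ (and hence the hypotheses needed to invoke Lemma~\ref{lem: OMAX properties}(2) and Proposition~\ref{prop: extend to tilde C}) follows automatically from the propriety of $M_d^+$, so the ``main obstacle'' you flag does not actually arise.
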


\begin{proof}
    By Proposition \ref{prop: SIC cone}, we can find a sequence $\vec{t}$ such that $\{C_n(\vec{t})\}$ is proper.

    Now suppose there exists a SIC-POVM in dimension $d$. Let $\{P_i\}_{i=1}^{d^2} \subseteq M_d$ denote the corresponding projections. Then for every $n \in \mathbb{N}$, there exists $t_n > 0$ such that 
    \[ \pm (P_i - \lambda I) + \frac{1}{n} P_j + t_n P_j^{\perp} \geq 0 \]
    since $P_j (P_i - \lambda I) P_j = 0$ and since $P_j$ is a projection. Letting $\vec{t} = \{t_k\}$ be a sequence satisfying this property, we see that the map $\pi: p_i \mapsto P_i$ from $V$ to $M_d$ is positive on the cone $C$ generated by elements $\{p_k, p_k^{\perp}, x_{i,j,n}^+, x_{i,j,n}^-\}$. Consequently $\pi(C_1(\vec{t})) \subseteq M_d^+$. By Lemma \ref{lem: OMAX properties}, $\pi$ is completely positive on the SIC-system $(V, \{C_n(\vec{t})\}, e)$. Since $M_d$ is $d$-minimal and since each $P_i$ is a projection, it follows that $\pi$ is completely positive on the matrix ordering $\{ \widetilde{C}_n(\vec{t})\}$ by Proposition \ref{prop: extend to tilde C}. Since $\pi$ is one-to-one and $\pi( \widetilde{C}(\vec{t})_1) \subseteq M_d^+$ and since $M_d^+$ is a proper cone, $\{ \widetilde{C}_n(\vec{t})\}$ is a proper matrix ordering. Hence $(V, \{ \widetilde{C}_n(\vec{t})\}, e)$ satisfies the conditions of Theorem \ref{Theorem: SIC Characterization}, by Proposition \ref{prop: Tilde C satisfies SIC Thm}.
\end{proof}

The above Theorem implies that the existence question for SIC-POVMs is equivalent to asking if there exists a sequence $\vec{t}$ such that the matrix ordering $\{ \widetilde{C}_n(\vec{t})\}$ is proper. While the initial cone $C_1(\vec{t})$ is easily described, we don't understand enough about the induced matrix ordering $\{ \widetilde{C}_n(\vec{t})\}$ to say which sequences $\vec{t}$ will induce a proper matrix ordering. We hope to study this question in more detail in future work.

We conclude by discussing the existence question for $d+1$ MUBs in dimension $d$. A construction similar to the one discussed above for SIC-systems can be carried out for MUBs. We will not go through the details since it is similar to the case for SIC-POVMs. Instead we will describe the basic definitions and leave it to the interested reader to check that the corresponding constructions hold.

Let $d \in \mathbb{N}$ and let $\mu = \frac{1}{d}$. We call an operator system $W$ with unit $e$ spanned by vectors $\{p_{i}^x : i=1,2,\dots,d; x=1,2,\dots,d+1\}$ a \textbf{MUB-system} if it satisfies the conditions
\begin{itemize}
    \item $\sum_i p_i^x = e$ for each $x=1,2,\dots,d+1$, and
    \item $p_i^x (p_j^y - \mu e)p_i^x = 0$ abstractly whenever $x \neq y$.
\end{itemize}
The condition $\sum_i p_i^x = e$ implies that $\dim(W) \leq d^2$ since $W$ is spanned by $\{e\} \cup \{p_i^x : i=1,2,\dots,d-1; x=1,2,\dots,d+1\}$. We can construct examples of dimension $d^2$ in the following way. As a $*$-vector space, define $W$ to be the $d^2 \times d^2$ diagonal matrices. Let $D_k$ denote the $k \times k$ diagonal matrices and $E_j$ denote the matrix unit with a 1 in its $j\textsuperscript{th}$ diagonal entry and zeroes elsewhere. Then we have $D_{d^2} \cong (D_{d-1} \otimes D_{d+1}) \oplus D_1$ via the Kronecker product. We define $p_i^x := (E_i \otimes E_x) \oplus 0$ for each $i \leq d-1$ and $x \leq d+1$, and we define $p_d^x = I_{d^2} - (I_{d-1} \otimes E_x) \oplus 0$. This defines a $d^2$-dimensional$*$-vector space with unit $e = I_{d^2}$ and self-adjoint generators $\{p_i^x\}$ satisfying the condition $\sum_i p_i^x = e$. The positive cone can be defined by setting $C(\vec{t})$ equal to the closure of the cone generated by elements $\{ p_i^x, y_{x,i,y,j,n}^+, y_{x,i,y,j,n}^-\}$ where 
\[ y_{x,i,y,j,n}^+ := (p_i^x - \mu e) + \frac{1}{n} p_j^y + t_n (p_j^y)^{\perp} \quad \text{and} \quad y_{x,i,y,j,n}^- := (\mu e - p_i^x) + \frac{1}{n} p_j^y + t_n (p_j^y)^{\perp}. \]
As with SIC-POVMs, a sequence $\vec{t}$ can be chosen so that the resulting cone is proper. Using the $\text{OMAX}$ construction (see Lemma \ref{lem: OMAX properties}), one can then generate a MUB-system $(W,\{C_n(\vec{t})\},e)$. Moreover, one can show that $d+1$ mutually unbiased bases exist in dimension $d$ if and only if $(W,\{\widetilde{C}_n(\vec{t})\},e)$ satisfies the conditions of Theorem \ref{Theorem: MUB Characterization} for some sequence $\vec{t}$. Here the matrix ordering $\{\widetilde{C}_n(\vec{t})\}$ is constructed from $\{C_n(\vec{t})\}$ in exactly the same way as for SIC-systems (see the text preceding Lemma \ref{lem: OMAX properties} above). We leave it to the interested reader to investigate the details.

\section*{Declarations}

\subsection*{Funding}

This work was supported in part by a grant from the Texas Christian University Research and Creative Activities Fund. 

\subsection*{Competing interests}

The author has no financial or proprietary interests in any material discussed in this article.

\bibliographystyle{plain}
\bibliography{Refs}

\end{document}